\newtheorem{theorem}{Theorem}
\newtheorem{lemma}{Lemma}
\newtheorem{definition}{Definition}
\newtheorem{remark}{Remark}
\newcommand{\E}{\mathbb{E}}
\newcommand{\vs}{\mathbf{s}}
\newcommand{\vx}{\mathbf{x}}
\newcommand{\vy}{\mathbf{y}}
\newcommand{\vw}{\mathbf{w}}
\newcommand{\vv}{\mathbf{v}}
\newcommand{\KLQRi}{K_{\mathtt{LQR},i}}
\newcommand{\dhat}[1]{\hat{\hat{#1}}}
\newcommand{\Gti}{\Gamma_{2,i}}
\newcommand{\Gi}{\Gamma_{1,i}}
\newcommand{\sd}{\hat{\Sigma}_i^{\dagger}}
\newcommand{\Err}[1]{(\hat{\vs}_{#1} - \dhat{\vs}_{#1})}
\newcommand{\sumT}{\sum_{i=1}^n}
\DeclareMathOperator{\Tr}{Tr}
\title{Communication over LQG Control Systems: A Convex Optimization Approach to Capacity}
\author{Aharon Rips \quad Oron Sabag
\thanks{The authors are with the Rachel and Selim Benin School of Computer Science and engineering,
Hebrew University of Jerusalem, Israel (\texttt{aharon.rips@mail.huji.ac.il}, \texttt{oron.sabag@mail.huji.ac.il}). The work was supported in part by the Israel Science Foundation (ISF), grant No. 1096/23. }
}
\date{June 2024}
\begin{document}
\maketitle
\begin{abstract}
We study communication over control systems, where a controller-encoder selects inputs to a dynamical system in order to simultaneously regulate the system and convey a message to an observer that has access to the system’s outputs measurements. This setup reflects implicit communication, as the controller embeds a message in the control signal. The capacity of a control system is the maximal reliable rate of the embedded message subject to a closed-loop control-cost constraint. We focus on linear quadratic Gaussian (LQG) control systems, in which the dynamical system is given by a state-space model with Gaussian noise, and the control cost is a quadratic function of the system inputs and system states. Our main result is a convex optimization upper bound on the capacity of LQG systems. In the case of scalar systems, we prove that the upper bound yields the exact LQG system capacity. The upper bound also recovers all known results, including LQG control, feedback capacity of Gaussian channels with memory, and the LQG system capacity with a state-feedback. For vector LQG control systems, we provide a sufficient condition for tightness of the upper bound, based on the Riccati equation. Numerical simulations indicate the upper bound tightness in all tested examples, suggesting that the upper bound may be equal to the LQG system capacity in the vector case as well.
\end{abstract}

\section{Introduction}
Control and communication tasks in engineering systems are traditionally designed independently, each optimized for its own objectives. Control systems concern with regulating the behavior of dynamical systems by adjusting inputs based on feedback, ensuring stability and meeting performance criteria. Communication systems, on the other hand, are designed to reliably transfer information between distinct points over noisy channels, typically aiming at maximizing data rates. Motivated by the decentralized nature of modern systems, we investigate the possible advantages in jointly designing control and communication task on a shared physical system. One way to view this setup is implicit communication where the controller transmits a message to an observer by choosing its actions in the form of a control signal \cite{WuChenGunduz2025Actions}. For example, in a swarm of robots/drones, a leader’s trajectory may act as a control variable to be regulated, but also as an implicit communication signal: followers infer targets by observing the leader’s state through sensors. 



\begin{figure}[t]
    \centering
\begin{tikzpicture}[scale=1.2, 
        every node/.style={font=\footnotesize}, 
        block/.style = {draw, rectangle, minimum height=2em, minimum width=4em, align=center},
        line/.style  = {draw, -{Latex}},
        node distance = 1cm and 1.5cm
    ]

\node[block] (controller) {Controller\\ (Encoder)};
\node[block, right=of controller] (plant) {Plant \\ (Channel)};
\node[above=0.1cm of plant] (plant_label) {Dynamical System}; 
\node[block, right=of plant] (observer) {Observer \\ (Decoder)};
\node[block, below=of plant] (feedback) {Feedback};

\draw[line] (controller.east) -- node[above] {} (plant.west);
\draw[line] (plant.east) -- node[above] {} (observer.west);
\draw[line] (observer.east) -- ++(1,0) node[above] {\hspace{-1cm}Message};
\draw[line] (controller.west) -- ++(-1,0) node[above] {\hspace{1cm}Message} -- (controller.west);
\draw[line] (observer.south) -- (observer.south|-feedback.east) -- (feedback.east);
\draw[line] (feedback.west) -- (controller.south|-feedback.west) -- (controller.south);

\end{tikzpicture}
    \caption{Illustration of the communication-control (CC) problem. A dynamical system serves as both the plant to be controlled and the communication channel. The controller-encoder selects system inputs based the feedback and a message, while the observer-decoder should decode the message based on the system outputs - the measurements. The system operates in a closed-loop configuration, with inputs causally dependent on outputs.}
    \label{fig:control_system}
\end{figure}
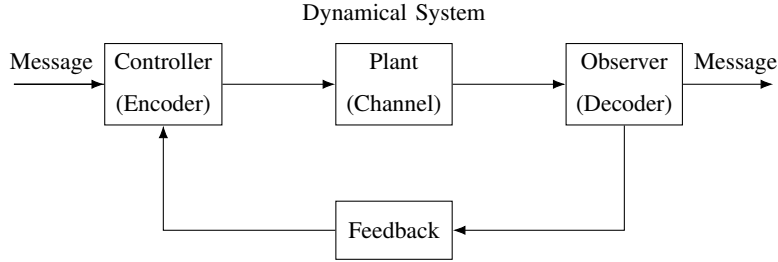

The proposed setting is described in Fig. \ref{fig:control_system}, and is termed the \emph{communication-control} (CC) problem. The main component is a physical environment represented by a stochastic dynamical system. This system describes the plant that needs to be controlled, but is also utilized as a communication channel. A single decision maker, the \emph{controller-encoder}, is choosing the system inputs so as to regulate the control system while embedding an information message that is available to it. Another entity, the \emph{observer-decoder}, has access to the system outputs and needs to decode the message. This operation describes closed-loop systems since the system inputs, chosen by the controller, are causally dependent on the system outputs. The fundamental limits are defined as the maximal achievable message rate subject to a constraint on the closed-loop control cost. We study a fundamental instance of the CC problem: linear Gaussian dynamical systems with a measurement feedback. 



\subsection{The setup}
 The linear quadratic Gaussian (LQG) dynamical system is given by the state-space model
\begin{align}\label{eq:intro_SS}
    \vs_{i+1}&=F \vs_i + G \vx_i + \mathbf{w}_i\nonumber\\
    \vy_i &= H \vs_i + J \vx_i + \mathbf{v}_i,
\end{align}
where $\vs_i \in \mathbb{R}^k$ is the state of the system, $\vx_i\in \mathbb{R}^m$ is the input to the system, and $\vy_i\in \mathbb{R}^p$ is the output of the system, often referred to as the measurement. The random process $(\vw_i,\vv_i)_i$ is independent and identically distributed (i.i.d.) with a Gaussian distribution characterized by $\vw_i \sim \mathcal{N}(0, W)$, $\vv_i \sim \mathcal{N}(0, V)$ and $\E[\vw_i \vv_i^T] = L$. The initial state is distributed as $\vs_1 \sim \mathcal N(0, \Sigma_1)$ and is independent of the sequence $(\vw_i,\vv_i)_i$. The matrices $F\in \mathbb{R}^{k\times k}$, $G\in \mathbb{R}^{k\times m}$, $H\in \mathbb{R}^{p\times k}$ and $J\in \mathbb{R}^{p\times m}$ are known. We proceed to describe the operation of the dynamical system. 

For a fixed horizon (blocklength) $n$ and a rate $R$, the encoder-controller chooses the system input at time $i$, $\vx_i$, as a function of a uniform message $\mathbf{M}\sim U([1:2^{nR}])$ and past system's outputs $\vy^{i-1}$. The observer-decoder has access to the system's measurements and produces a message estimate using a mapping $\widehat{\mathbf{M}} = \phi(\vy^n)$. A policy~$\pi_n$ consists of the controller-encoder and the observer-decoder mappings. The control performance of a policy $\pi_n$ is measured by the linear quadratic regulator (LQR) cost
\begin{align}\label{eq:intro_LQR_cost}
    \mathcal{J}_n&=\frac1{n}\E \left[ \vs_{n+1}^T Q \vs_{n+1} + \sum_{i=1}^n (\vs_i^TQ\vs_i + \vx_i^TR\vx_i) \right],
\end{align}
where $Q \succeq 0, R \succ 0$ are known weight matrices. The LQR control cost reflects a trade-off between regulating the system states and minimizing the control effort (in terms of inputs) required for regulation. Note that the LQR cost is a function of the controller-encoder operation but not a function of the observer-decoder mapping $\phi(\cdot)$.

In the CC Gaussian problem, the \emph{control objective} is to regulate the dynamical system in \eqref{eq:intro_SS} by ensuring that the LQR control cost is below a given control budget. Formally, a policy is $p$-admissible if its LQR cost \eqref{eq:intro_LQR_cost} is smaller than $p$. A rate $R$ is an \emph{achievable rate} if there exists a sequence of $p$-admissible policies $\{\pi_n\}$ such that the probability of error $\Pr(\mathbf{M}\neq \widehat{\mathbf{M}})$ vanishes with~$n$. The capacity of the LQG control system is defined as the supremum over all achievable rates and is denoted by $C_{\mathtt{LQG}}(p)$. Our main objective is to compute the LQG system capacity and to design optimal policies that achieve the capacity.

\begin{remark}
The linear dynamical system in \eqref{eq:intro_SS} is given its most general form. As a communication channel, it has multiple inputs and multiple outputs (MIMO), and the dependence of the system output on a system state allows dependence on past noise instances, e.g., when $H\vs_i +\vv_i$ it is an additive channel with a colored Gaussian noise. The state can also depend on past system inputs and outputs to model intersymbol interference. As a control system, it is also quite general since the controller and the observer only have a measurement feedback, without direct access to the system state, and we do not assume Schur stability of the dynamical system, i.e., $F$ may have eigenvalues outside the unit circle. Standard regularity conditions to guarantee that a finite LQR cost is attainable for large $n$ are provided in Section~\ref{sec:setup}.    
\end{remark}

\subsection{Approach and Related Work}
A special case of the LQG system capacity is feedback capacity of Gaussian channels with memory. In particular, the CC problem is reduced to the communication problem when the LQR cost penalizes the system inputs only (i.e., $Q=0$). In fact, one can utilize existing results in information theory on feedback capacity in order to express the LQG system capacity as an optimization problem of maximizing the directed information. Consider the $n$-letter optimization  
 \begin{align}\label{eq:n_letter_capacity}
     C_n(p)=&\frac{1}{n}\max_{P(\vx^n ||\vy^{n-1}): \mathcal J_n\le p} I(\vx^n \rightarrow \vy^n),
 \end{align}
where $I(\vx^n \rightarrow \vy^n)\triangleq \sumT I(\vx^i;\vy_i|\vy^{i-1})$ is the directed information, and $ P(\vx^n ||\vy^{n-1})\triangleq \prod_{i=1}^n P(\vx_i | \vy^{i-1},\vx^{i-1})$ is the causal conditioning \cite{Massey90,Kramer98}. Then, it follows that $C_{\mathtt{LQG}}(p) =  \lim_{n\to\infty} C_n(p)$. The achievability by Cover and Pombra for colored Gaussian channels with feedback \cite{CoverPombra} can be extended to channels that depend on past channel inputs and outputs and so as to capture the dynamical model in \eqref{eq:intro_SS}, see for instance \cite{sabag22isit}. The converse follows directly, e.g.  \cite{PermuterWeissmanGoldsmith09}, and thus the problem of computing the LQG system capacity is essentially that of computing the limit of the sequence of optimizations in \eqref{eq:n_letter_capacity}. 

The limit over \eqref{eq:n_letter_capacity} is often refereed to as a \emph{multi-letter expression}, and is difficult to compute even for simple dynamical systems. Our main objective is to reformulate this multi-letter expression as a simple computable expression. One relevant class of computable expressions is convex optimization and more specifically semidefinite programming (SDP). Formulating the fundamental limits of control problems as SDPs is very common in the literature \cite{caverly2019lmi}. However, in information theory, there are only few instances where the fundamental limits can be described using SDP and (more precisely, its $\log\det(\cdot)$ variant for the objective function) \cite{tanaka_allerton,Tanaka_SDP_1,Tanaka_SDP_2,Gattami,SabagKostinaHassibiMIMO,Elia_MIMO_ITW,sabag22isit}. If one is able to formulate the LQG system capacity as an SDP, then it can be computed for relatively large systems with tens or even hundreds states using standard software. Before presenting the state of the art for our current setup, we review some of its well-known special cases.



\textbf{Feedback capacity of Gaussian channels with memory:}
Computing the feedback capacity of Gaussian channels with memory has a rich history and indeed is still under research. A representative example for the difficulty is the feedback capacity of the auto-regressive (Markovian) noise process, whose capacity computation had been an open problem for more than $40$ years \cite{Butman69,Butman_conjecture,TiernanSchalk_AR_UB,Ebert,Ordentlich96,YoulaCoding,9611495,Han_GaussianFeedback,elia_bodemeets,Ozarow90} until a simple capacity expression was found \cite{Kim10_Feedback_capacity_stationary_Gaussian}. Indeed, \cite{comments_kim} recently found an error in the derivation of the AR capacity formula in \cite{Kim10_Feedback_capacity_stationary_Gaussian} and therefore a converse proof is missing.

Most related to the current work is the realm of formulation of the feedback capacity as convex optimization. This approach was introduced in \cite{Gattami}, and was later extended to MIMO channels, possibly with ISI \cite{Elia_MIMO_ITW,SabagKostinaHassibiMIMO}. The setups of \cite{Elia_MIMO_ITW,SabagKostinaHassibiMIMO} were unified in \cite{\cite{sabag22isit}} by studying the general dynamical system in \eqref{eq:intro_SS}, and formulate its capacity as a convex optimization. From a technical perspective, one of our main objectives is to extend this capacity result subject to an additional control cost constraint.




\textbf{LQG control:}
A fundamental problem in control is LQG control where the control cost in \eqref{eq:intro_LQR_cost} should be minimized subject to the state-space model in \eqref{eq:intro_SS}\footnote{Typically, the measurement $\vy_i$ is not directly affected by the control signal ($J=0$). However, in the CC problem, the measurement also plays a dual role of decoding information}. A policy (i.e., a controller) is a sequence of strictly-causal mappings from the measurements sequence to the control sequence. That is, at time $i$, the controller observes the noisy measurements of the state $\vy_1,\vy_2,\dots,\vy_{i-1}$, and chooses the control $\vx_i$. This is called a \emph{measurement-feedback} setting since the controller only has a noisy version of the state. The \emph{state-feedback} scenario is when the state is available directly to the controller, i.e., at time $i$, it has access to $\vs_i$. 

\textbf{The control-coding problem for discrete alphabets:} The CC problem when the state, input and output take values in discrete alphabets was studied in \cite{CC_discrete_IT_bambos,CC_LQG_IT_bambos,WuChenGunduz2025Actions}. In this case, the dynamical system can be represented by a non-linear transition kernel, and the cost may be taken as a general function. The general case of measurement feedback is challenging since it subsumes the feedback capacity problem where the capacity is known only in several instances \cite{Permuter06_trapdoor_submit,huleihel2023capacity,huleihel2025duality,Sabag_BEC,Sabag_BIBO_IT,Sabag_UB_IT}. For the state-feedback case, there are closed-form for the CC capacity expressions paralleling the feedback capacity of Markov channels that can be derived as a computable expression \cite{Chen05,POSTchannel,Eli_NOST_TIT,atayKostinaposet,WuChenGunduz2025Actions}. The main challenge when transitioning from state-feedback to measurement-feedback in the discrete case is similar to that when transitioning from Markov decision processes (MDPs) to partially-observable MDPs (POMDPs): the underlying control state becomes a belief which takes value in a simplex which is a continuous space.

\textbf{The Gaussian CC setup:}
The dynamical system in \eqref{eq:intro_SS} operates under \emph{measurement feedback}, since the controller and the observer have access to measurements of the system state. The special case in which the full state is available to both is called \emph{state feedback}; the capacity of such systems was studied in \cite{CC_discrete_IT_bambos,CC_LQG_IT_bambos} and further cast as a convex optimization in \cite{tanaka_allerton}.

For the measurement-feedback scenario studied here, a structured optimal policy for the $n$-letter optimization in \eqref{eq:n_letter_capacity} was proposed in \cite{charalambous2024siam,CharalambousLouka2025JSSC}. The policy exhibits a separation principle (see Section~\ref{subsec:policy}), but it does not yield computable capacity expressions as the policy is time-varying and therefore objective remains a multi-letter expression. To the best of our knowledge, there are no computable bounds or explicit expressions for the capacity of LQG systems in the measurement feedback scenario studied here.


\subsection{Contribution}
We derive an upper bound on the LQG system capacity as a finite-dimensional convex optimization. The upper bound recovers all special cases solved in the literature including feedback capacity of Gaussian channels with memory, LQG control and the LQG system capacity when there is state-feedback. This is the first computable converse bound on the LQG system capacity and it unifies the above special cases. The upper bound is accompanied with a sufficient condition for its tightness, expressed as a Riccati equation on the optimal decision variables.

For scalar dynamical systems, we establish the tightness of the upper bound. That is, their LQG system capacity is given by the convex optimization upper bound. This is established by showing that the optimal decision variables satisfy the sufficient condition via a through study of the KKT conditions for the optimization of the upper bound.  

For vector dynamical systems, it is observed that the sufficient condition is approximately satisfied in all tested examples. Motivated by this observation, we extract policies from the upper bound and evaluate their performance in terms of control cost and communication rates. The extracted policies yield lower bounds on the LQG system capacity that appear to be tight in all tested examples.


\subsection{Paper Structure}
Section \ref{sec:setup} introduces the notation and provides preliminaries on Kalman filtering and LQG control. Section \ref{sec:main} presents the main results. Section \ref{sec:examples} shows numerical examples. Section \ref{sec:derivation} outlines the derivation of the main results, and Section \ref{sec:proofs} includes proofs for technical lemmas. The paper is concluded in Section \ref{sec:conclusion}.

\section{Preliminaries}\label{sec:setup}
This section presents the notation and reviews preliminaries on the Kalman filter (KF) and linear–quadratic–Gaussian (LQG).
\subsection{Notation}
Random variables and random vectors are denoted by bold lowercase letters such as $\mathbf{x}$, deterministic matrices are denoted by uppercase letters (e.g. $F$), and matrix transposes by $F^T$. Superscripts excluding $T$, indicate sequences,  $\vx^i = (\vx_1, \vx_2, \dots,\vx_i)$, and $A^\dagger$ denotes the Moore–Penrose pseudo inverse of $A$. The spectral radius of a square matrix $M$, denoted $\rho(M)$, is defined as $\rho(M) = \max\{\,|\lambda| : \lambda \in \sigma(M)\,\}$.

\subsection{Reduction via Kalman Filter}
The state-space model in \eqref{eq:intro_SS} has a measurement-feedback and, therefore, the controller-encoder does not have access to the actual state $\vs_i$. In this section, we reduce the general CC Gaussian problem to that of a state-space model where the state is available to the controller-encoder. The new state is the MMSE estimate of the state, and the reduction is standard and is based on the Kalman filter (KF). 

Consider the MMSE of the state given the controller-encoder information
\begin{align}\label{eq:preliminary_vhs}
    \hat{\vs}_i &\triangleq \E[\vs_i|\vx^{i-1},\vy^{i-1}].
\end{align}
By KF, the estimate can be recursively expressed as
\begin{align}\label{eq:encoder_estimate}
\hat{\vs}_{i+1} &= F \hat{\vs}_i +G\vx_i +K_{p,i}(\vy_i-J\vx_i-H\hat{\vs}_i)
\end{align}
with $\hat{\vs}_1=0$. The matrix $K_{p,i} = (F\Sigma_iH^T+L)\Psi_i^{-1}$ is the \emph{Kalman gain} and $ \Psi_i =H\Sigma_i H^T + V$ is the innovation variance. Both terms are a function of the estimation error covariance, $\Sigma_i\triangleq\mathbf{cov}(\vs_i-\hat{\vs}_i)$, which can be computed by the Riccati recursion
\begin{align}\label{eq:cov_encoder_err}
    \Sigma_{i+1}=F\Sigma_iF^T+W-K_{p,i}\Psi_i K_{p,i}^T,
    \end{align}
initialized at $\Sigma_1$. The recursion \eqref{eq:cov_encoder_err} is a Riccati recursion, and we assume regularity conditions to guarantee convergence of the recursion to the stabilizing solution of the Riccati equation
\begin{align}\label{eq:riccati_sigma}
    \Sigma = F\Sigma F^T +W - K_p\Psi K_p^T
\end{align}
where $K_p = (F\Sigma H^T +L)\Psi^{-1}$ and $\Psi = H\Sigma H^T + V$. The regularity conditions and the stabilizing solution to the Riccati equation are standard in Control and are presented for completeness in Appendix \ref{app:sub_sigma}. The convergence rate of the Riccati recursion to the stabilizing solution is geometric~\cite{kailath_booklinear}. Therefore, from this point onward, we utilize the steady-state parameters $K_p$ and $\Psi$, evaluated at the stabilizing solution $\Sigma$, and omit their time indices. 

We can write a new state-space model that represents the system dynamics 
\begin{align}\label{eq:encoder_model}
    \hat{\vs}_{i+1} &=F \hat{\vs}_i +G\vx_i +K_{p}\mathbf{e}_i\nonumber\\ 
    \vy_i &=H \hat{\vs}_i +J\vx_i +\mathbf{e}_i,
\end{align}
where the main advantage is that the new state $\hat{\vs}_i$ is available to the controller-encoder. The random noise inserted into this state-space model is 
\begin{align}\label{eq:primary_err}
    \mathbf{e}_i &\triangleq \vy_i - J\vx_i - H\hat{\vs}_i,
\end{align}
known as the innovation process in KF, which is an i.i.d. random process and its distribution is given by $\mathbf{e}_i\sim \mathcal{N}(0, \Psi)$.

The LQR cost in \eqref{eq:intro_LQR_cost} can also be expressed with respect to the new state 
\begin{align}\label{eq:cost_by_hat_s}
   \mathcal{J}_n&=\frac1{n}\left[ \E[\hat{\vs}_{n+1}^T Q \hat{\vs}_{n+1}] + \sum_{i=1}^n (\E[\hat{\vs}_i^T Q \hat{\vs}_i] + \E[\vx_i^TR\vx_i]) \right]  + \frac{n+1}{n}\Tr(\Sigma Q),
\end{align}
where we used the orthogonality principle in MMSE estimation as $\E[\vs_i^T Q \vs_i] = \E[(\vs_i - \hat{\vs}_i)^T Q (\vs_i - \hat{\vs}_i) ] + \E[\hat{\vs}_i^T Q \hat{\vs}_i]$ and $\E[(\vs_i - \hat{\vs}_i)^T Q (\vs_i - \hat{\vs}_i) ]= \Tr(\Sigma Q)$.

\subsection{LQG optimal control}\label{subsec:lqr}
The LQG control problem is a special case of the CC Gaussian problem considered here. The objective in LQG control is to minimize the control cost \eqref{eq:intro_LQR_cost} subject to the state-space model in \eqref{eq:intro_SS}. Optimal control theory provides with the optimal policy and the minimal control cost. The optimal policy is given by 
\begin{align}\label{eq:lqg_policy}
    \vx^{\mathtt{LQG}}_i= -K_{\mathtt{LQR},i}\hat{\vs}_i,
\end{align} 
where $\hat{\vs}_i$ is defined in \eqref{eq:encoder_estimate} and $\KLQRi$ is a matrix that depends on the system parameters. 
 
A key distinction between LQG control and our setting is the information available to the controller and the observer. In LQG control, the optimal control signal \eqref{eq:lqg_policy} is a deterministic function of the measurements; therefore, an observer who knows the measurements can reconstruct the control signal. In our setup, the control signal is also used for communication and thus depends on a message that is only known to the controller. This induces a degraded information structure: both the controller and the observer have access to the measurements, but only the controller can reconstruct the optimal control signal.

The controller gain $\KLQRi$ is computed via the backward Riccati recursion
\begin{align}\label{eq:control_recursion}
    E_i &= F^TE_{i+1}F+Q - \KLQRi^T (R+G^TE_{i+1}G) \KLQRi \ \ i=1,\dots,n \nonumber\\
    \KLQRi &= (R+G^TE_{i+1}G)^{-1}G^TE_{i+1}F 
\end{align}
with the terminal condition $E_{n+1} = Q$.
Under standard regularity conditions, detailed in the Appendix \ref{app:sec_lqg}, the recursion \eqref{eq:control_recursion} converges to the stabilizing solution of the Riccati equation
\begin{align}\label{eq:lqr_Riccati_equation}
    E &= F^TE F + Q - K_{\mathtt{LQR}}^T \Psi_{\mathtt{LQR}} K_{\mathtt{LQR}}\nonumber\\
    K_{\mathtt{LQR}} &= \Psi_{\mathtt{LQR}}^{-1}G^TEF \nonumber\\
    \Psi_{\mathtt{LQR}} &= R+G^TEG.
\end{align}
The minimal LQR cost \eqref{eq:intro_LQR_cost} for the LQG problem  in the infinite-horizon regime is $\mathcal J^*\triangleq \lim_{n\to\infty} \min_{\pi_n} \mathcal J _n(\pi_n) = \Tr(K_p \Psi K_P^T E) + \Tr(\Sigma Q)$. Clearly, this value is a lower bound on the control budget required to achieve positive capacity in our problem.

\section{Main Results}\label{sec:main}
In this section we present our main results. We first present a general upper bound on the capacity of LQG control systems, and then provide a sufficient condition for its tightness. Using the upper bound and the sufficient condition, we establish a capacity in the scalar case. Lastly, we show that all special cases solved in the literature coincide with our upper bound.
\begin{theorem}[Upper bound on the capacity of LQG systems]\label{theorem:upper_bound} The capacity of the LQG system is upper bounded by the convex optimization
\begin{align}\label{eq:main_UB}
    C_{\mathtt{LQG}}(p) \leq &\max_{\Pi\in\mathbb{R}^{m\times m},\Gamma \in\mathbb{R}^{m \times k}, \hat{\Sigma} \in\mathbb{R}^{k\times k}}  \frac{1}{2}\log \left(\frac{\det (\Psi_{Y})}{ \det (\Psi)}\right)\\
        s.t. \quad &   \Tr( \hat{\Sigma}K_{\mathtt{LQR}}^T \Psi_{\mathtt{LQR}}K_{\mathtt{LQR}}) + \Tr(\Pi  \Psi_{\mathtt{LQR}}) + 2 \Tr(\Gamma K_{\mathtt{LQR}}^T \Psi_{\mathtt{LQR}}) + \Tr(K_p\Psi K_p^TE) + \Tr (\Sigma Q) \leq p \nonumber\\
        &\begin{pmatrix}
        \Pi & \Gamma\\  \Gamma^T & \hat{\Sigma}
    \end{pmatrix} \succeq 0 \nonumber \\&
    \begin{pmatrix}
     F\hat{\Sigma} F^T+F\Gamma G^T+ G\Gamma F^T + G\Pi G^T+  K_{p}\Psi K_{p}^T - \hat{\Sigma}& K_{Y}\Psi_{Y}\\ 
      \Psi_{Y}K_{Y}^T & \Psi_{Y}
    \end{pmatrix} \succeq 0\nonumber\\
    &\Psi_{Y} = J\Pi J^T +H\hat{\Sigma} H^T+H\Gamma^TJ^T+ J\Gamma H^T+ \Psi \nonumber \\ 
    &  K_{Y}=  (F\Gamma J^T + F \hat{\Sigma} H^T + G\Pi J^T + G \Gamma H^T +K_{p}\Psi) \Psi_{Y}^{-1},
    \end{align}
where $(K_{\mathtt{LQR}}, \Psi_{\mathtt{LQR}}, E)$ are the LQR constants in  \eqref{eq:lqr_Riccati_equation}, and $(K_p, \Psi, \Sigma)$ are the KF constants in \eqref{eq:riccati_sigma}.
\end{theorem}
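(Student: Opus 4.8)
The plan is to start from the $n$-letter characterization $C_{\mathtt{LQG}}(p) = \lim_{n\to\infty} C_n(p)$ with $C_n(p)$ as in \eqref{eq:n_letter_capacity}, and to upper bound $C_n(p)$ by relaxing the optimization over causally-conditioned input distributions to an optimization over second-order statistics. First I would invoke the reduction via the Kalman filter (Section~\ref{sec:setup}): replacing the true state $\vs_i$ by its MMSE estimate $\hat{\vs}_i$ and working with the innovations state-space model \eqref{eq:encoder_model}, so that the ``channel'' is $\vy_i = H\hat{\vs}_i + J\vx_i + \mathbf{e}_i$ with i.i.d. Gaussian innovations $\mathbf{e}_i\sim\mathcal N(0,\Psi)$, and the LQR cost takes the form \eqref{eq:cost_by_hat_s}. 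The key structural fact to establish (or import, e.g.\ from the Cover--Pombra / Gaussian-maximizes-directed-information lineage cited in the excerpt) is that the directed information $I(\vx^n\to\vy^n)$ is maximized, for a fixed joint second-order structure, by a jointly Gaussian policy in which $\vx_i$ is an affine function of $\hat{\vs}_i$ plus an independent Gaussian ``message-bearing'' component; equivalently we may WLOG take $\vx_i = -K_{\mathtt{LQR}}\hat{\vs}_i + \mathbf{u}_i$ where the shift by the LQR gain is a harmless change of variables and $\mathbf{u}_i$ carries the information. Then $I(\vx^n\to\vy^n) = \sum_i h(\vy_i\mid \vy^{i-1}) - h(\mathbf{e}_i)$ reduces, under Gaussianity, to $\frac12\sum_i \log\frac{\det(\Psi_{Y,i})}{\det\Psi}$ where $\Psi_{Y,i}$ is the conditional output covariance.

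The second step is to identify the right finite-dimensional surrogate variables. I would let $\hat\Sigma_i = \mathbf{cov}(\hat{\vs}_i)$, $\Pi_i = \mathbf{cov}(\mathbf{u}_i)$ (or the relevant input-innovation covariance), and $\Gamma_i = \mathbf{cov}(\mathbf{u}_i,\hat{\vs}_i)$ as the decision variables; the positive-semidefiniteness of the joint covariance gives the block constraint $\begin{pmatrix}\Pi & \Gamma\\ \Gamma^T & \hat\Sigma\end{pmatrix}\succeq 0$. Propagating the state recursion \eqref{eq:encoder_model} one step forward yields a recursion $\hat\Sigma_{i+1} = F\hat\Sigma_i F^T + \dots + K_p\Psi K_p^T$ in terms of these variables; relaxing the equality to the Lyapunov-type inequality $\hat\Sigma_{i+1}\preceq(\dots)$ and then passing to a steady-state (time-invariant) choice $\hat\Sigma_i\equiv\hat\Sigma$, $\Pi_i\equiv\Pi$, $\Gamma_i\equiv\Gamma$ gives the first entry of the $2\times2$ block LMI; the output covariance expression $\Psi_Y$ and the cross term $K_Y\Psi_Y$ are read off from the same propagation, and a Schur-complement rewrite turns ``$\hat\Sigma$ is a valid stationary covariance consistent with observed outputs of conditional covariance $\Psi_Y$'' into the displayed $2\times2$ LMI. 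The cost constraint is obtained by substituting $\vx_i = -K_{\mathtt{LQR}}\hat{\vs}_i + \mathbf{u}_i$ into \eqref{eq:cost_by_hat_s}, expanding $\E[\vx_i^T R\vx_i]$ and $\E[\hat{\vs}_i^T Q\hat{\vs}_i]$, using the LQR Riccati identities \eqref{eq:lqr_Riccati_equation} to collapse cross terms into $\Psi_{\mathtt{LQR}}$, and bounding the boundary term $\E[\hat{\vs}_{n+1}^TQ\hat{\vs}_{n+1}]/n\to 0$; this produces exactly the trace constraint with the additive constant $\Tr(K_p\Psi K_p^T E) + \Tr(\Sigma Q) = \mathcal J^*$. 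Convexity of the resulting program is then clear: the objective $\frac12\log\det(\Psi_Y) - \frac12\log\det\Psi$ is concave in the decision variables since $\Psi_Y$ is affine in them and $\log\det$ is concave on the PSD cone, and all constraints are linear or LMIs.

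The third step is the limiting/time-averaging argument: for finite $n$ the optimal policy need not be time-invariant, so I would argue that $C_n(p)$ is upper bounded by the same program with a per-time-step averaging, and then use concavity of the objective together with convexity of the constraint set (Jensen) to replace the time-averaged variables $\frac1n\sum_i(\hat\Sigma_i,\Pi_i,\Gamma_i)$ by a single stationary point, at the cost of an $o(1)$ term from the horizon-boundary effects and from relaxing the state recursion's equality to an inequality; letting $n\to\infty$ removes the $o(1)$ and yields \eqref{eq:main_UB}.

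The main obstacle I anticipate is the rigorous justification of the two relaxations in tandem: (i) that restricting to jointly Gaussian, affine-in-$\hat{\vs}_i$ policies is without loss for the directed-information objective under a quadratic (hence covariance-only) cost — this needs the entropy-maximization argument adapted to the \emph{directed} information and to the presence of the feedthrough term $J\vx_i$ in the output, so one must be careful that conditioning on $\vy^{i-1}$ interacts correctly with the Gaussian optimality; and (ii) that relaxing the Riccati \emph{equality} for $\hat\Sigma_{i+1}$ to a PSD \emph{inequality} and then time-averaging does not lose the limit — i.e.\ that the relaxed stationary program is still an \emph{upper} bound and not merely an approximation. Handling the cross term $L = \E[\vw_i\vv_i^T]\neq 0$ and the feedthrough $J\neq 0$ cleanly inside the Schur-complement manipulations will also require care, but is mechanical once the variable choice is fixed.
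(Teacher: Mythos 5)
There is a genuine gap, and it lies in the choice of decision variables: you set $\hat\Sigma_i=\mathbf{cov}(\hat{\vs}_i)$ and $\Gamma_i=\mathbf{cov}(\mathbf{u}_i,\hat{\vs}_i)$, but in \eqref{eq:main_UB} the variable $\hat\Sigma$ is the covariance of the \emph{observer-side estimation error} $\hat{\vs}_i-\dhat{\vs}_i$, where $\dhat{\vs}_i=\E[\hat{\vs}_i|\vy^{i-1}]$ is the decoder's estimate, and $\Gamma$ is the cross-covariance of the input with that error. Your proposal never constructs the observer's estimator at all, and this is not a cosmetic difference: the directed information equals $\tfrac12\sum_i\log\det(\Psi_{Y,i})-\log\det(\Psi)$ only when $\Psi_{Y,i}$ is the covariance of the output innovation $\vy_i-\E[\vy_i|\vy^{i-1}]$, which is an affine function of the \emph{error} covariance (paper's Lemma~\ref{lemma:computing_estimators}), not of $\mathbf{cov}(\hat{\vs}_i)$; with your variables, $H\mathbf{cov}(\hat{\vs}_i)H^T+\dots$ is the unconditional output covariance, which does not give the conditional entropy and does not yield \eqref{eq:main_UB}. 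Likewise, propagating $\mathbf{cov}(\hat{\vs}_i)$ gives a Lyapunov recursion with no downdate term, whereas the second LMI in \eqref{eq:main_UB} is (via Schur complement) the relaxation of the \emph{Riccati} recursion $\hat\Sigma_{i+1}=(F+G\Gamma\hat\Sigma^\dagger)\hat\Sigma(\cdot)^T+GMG^T+K_p\Psi K_p^T-K_Y\Psi_YK_Y^T$ for the observer's filtering error, in which the gain $K_Y$ has no meaning until the observer's Kalman filter is set up. The paper's route is: (i) a structural lemma showing one may restrict to Gaussian inputs of the form $\vx_i=\Gamma_{1,i}\hat\Sigma_i^\dagger(\hat{\vs}_i-\dhat{\vs}_i)+\Gamma_{2,i}\dhat{\vs}_i+\mathbf{m}_i$ (with an orthogonality condition needed later for convexification) — note the pure-control term acts on $\dhat{\vs}_i$, not $\hat{\vs}_i$; (ii) the observer's non-standard Kalman filter giving $\Psi_{Y,i}$, $K_{Y,i}$ and the error-covariance Riccati recursion; (iii) linearization of the cost and relaxation of the Riccati equality to an LMI; (iv) time averaging with Jensen and telescoping. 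Your step (iii) of the plan (averaging, Jensen, $o(1)$ boundary terms) matches (iv), but steps (i)–(ii) of the paper are exactly what your proposal is missing, and your stated derivation of the objective and of the $2\times2$ LMI would fail without them.

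On the positive side, your completion-of-squares treatment of the cost is sound and in fact cleaner than the paper's handling: the identity $\mathcal J_n=\tfrac1n\E[\hat{\vs}_1^TE_1\hat{\vs}_1]+\tfrac1n\sum_i\big(\E[\mathbf{u}_i^T\Psi_{\mathtt{LQR},i}\mathbf{u}_i]+\Tr(K_p\Psi K_p^TE_{i+1})\big)+\tfrac{n+1}{n}\Tr(\Sigma Q)$ with $\mathbf{u}_i=\vx_i+K_{\mathtt{LQR},i}\hat{\vs}_i$ holds for any admissible policy, and once one expands $\mathbf{u}_i$ in terms of $(\hat{\vs}_i-\dhat{\vs}_i,\dhat{\vs}_i,\mathbf{m}_i)$ the term $\Tr(\mathbf{cov}(\mathbf{u}_i)\Psi_{\mathtt{LQR}})$ reproduces exactly the three trace terms of the constraint in \eqref{eq:main_UB}, replacing the paper's Lemma~\ref{lemma:minimizig_Gamma_2} and the rewriting of $\Tr(K_{Y,i}\Psi_{Y,i}K_{Y,i}^TE_{i+1})$ in Lemma~\ref{lemma:T-letter_SDP}. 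So the cost side of your plan is salvageable as an alternative argument; the objective and LMI side requires introducing $\dhat{\vs}_i$ and redefining $(\hat\Sigma,\Gamma,\Pi)$ relative to the observer's error before the rest can go through.
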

The upper bound is expressed as a convex optimization, and thus can be computed efficiently using standard software. The directed information in \eqref{eq:n_letter_capacity} is translated to the objective function $ \frac{1}{2}\log \left(\frac{\det (\Psi_{Y})}{\det (\Psi)}\right) $, and the trace constraint corresponds to the LQR control cost, bounded by the control budget $p$. Note that the last two terms in the trace constraint are the constants $ \Tr(K_p\Psi K_p^TE) + \Tr (\Sigma Q)$, which correspond to the optimal LQG control cost; hence, if $p$ is lower than their sum, communication is not feasible since the minimal control cost is greater than the control budget. The linear matrix inequalities (LMIs) in \eqref{eq:main_UB} reflect structural properties of the optimal policy, and are discussed next.

\subsection{The optimal policy and LMIs}\label{subsec:policy}
It is shown below (Lemma \ref{lemma:Gaussian_linear_policy} and \ref{lemma:minimizig_Gamma_2}) that an optimal policy is given by 
\begin{align}\label{eq:main_policy}
    \vx_i&= - K_{\mathtt{LQR},i} \dhat{\vs}_i + \Gamma_i \hat{\Sigma}^\dagger_i (\hat{\vs}_i - \dhat{\vs}_i) + \mathbf{m}_i,
\end{align}
where $\hat{\vs}_i$ is defined in \eqref{eq:encoder_estimate}, $\dhat{\vs}_i \triangleq \E[\hat{\vs}_i|\vy^{i-1}]$ is the state estimate at the observer-decoder, $\hat{\Sigma}_i = \mathbf{cov}(\hat{\vs}_i - \dhat{\vs}_i)$ is the corresponding error covariance, and $\mathbf{m}_i$ is an i.i.d. Gaussian process. A similar policy was derived in \cite{CC_Bambos_ISIT24} where the main difference is the normalization factor $\hat \Sigma^\dagger$ in the second term of the policy, required for the convex optimization formulation. The upper bound corresponds to a time-invariant policy where the time indices are removed from $\Gamma_i, \hat{\Sigma}_i,K_{\mathtt{LQR},i}$ and the additive term distribution is $\mathbf{m}_i\sim \mathcal N(0, \Pi - \Gamma \hat{\Sigma}^\dagger \Gamma^T)$.

Few remarks on the policy are in order:
\begin{enumerate}
    \item The first term $- K_{\mathtt{LQR}} \dhat{\vs}_i$ is a pure control element. It is similar to the optimal LQG policy $- K_{\mathtt{LQR}} \hat{\vs}_i$ in \eqref{eq:lqg_policy}, but the state estimate at the controller $\hat{\vs}_i$ is replaced with the state estimate at the observer~$\hat{\hat{\vs}}_i$. As such, it does not carry information about the message since the observer has access to $\dhat{\vs}_i$.
    \item The remaining terms $\Gamma \hat{\Sigma}^\dagger_i (\hat{\vs}_i - \dhat{\vs}_i)+ \mathbf{m}_i$ are not known to the observer and therefore may carry information (about the message) from the controller to the observer. The first term is a linear combination of the estimation error, while the second term $\mathbf{m}_i$ includes \emph{information} in the form of a random vector that is independent of the past occurrences $(\vx^{i-1},\vy^{i-1})$. We remark that these two elements also appear in solutions to the communication problem when state regularization is not required, e.g., \cite{Kim10_Feedback_capacity_stationary_Gaussian}. Lastly, note that if state-feedback is available, we have $\hat{\vs}_i - \dhat{\vs}_i=0$ so that only $\mathbf{m}_i$ carries the information, see \cite{tanaka_allerton,bambos_IT_CC_FI}. 
\end{enumerate}

Using the optimal policy, the LMIs in \eqref{eq:main_UB} can be interpreted; The matrix $\Pi$ serves as an upper bound in the PSD order for the covariance of the information terms, i.e., $\mathbf{cov}(\Gamma_i \hat{\Sigma}^\dagger_i (\hat{\vs}_i - \dhat{\vs}_i) + \mathbf{m}_i) = \Gamma \hat{\Sigma}^\dagger\Gamma^T+ M$. 
By the Schur complement, the first LMI implies that $\Pi \succeq \Gamma \hat{\Sigma}^\dagger\Gamma^T$ once the decision variable $M\succeq0$ is eliminated. The Schur complement of the second LMI in \eqref{eq:main_UB} can be written as 
\begin{align}\label{eq:Schur_2LMI}
    F\hat{\Sigma} F^T +F\Gamma G^T+ G\Gamma F^T + G\Pi G^T+  K_{p}\Psi K_{p}^T - \hat{\Sigma} -  K_{Y} \Psi_{Y}K_{Y}^T.
\end{align}
The LMI positivity implies the Schur \eqref{eq:Schur_2LMI} is PSD, and it corresponds to a relaxation of the Riccati equation to the Riccati inequality
\begin{align}\label{eq:Riccati_ineq_discussion}
    \hat{\Sigma} \preceq F\hat{\Sigma} F^T+F\Gamma G^T+ G\Gamma F^T + G\Pi G^T+  K_{p}\Psi K_{p}^T -  K_{Y} \Psi_{Y}K_{Y}^T.
\end{align}
The relaxation of the Riccati equation to a Riccati inequality is required for the formulation of the upper bound as convex optimization. Indeed, showing that this relaxation is achieved with equality under the optimization in Theorem \ref{theorem:upper_bound} is the remaining gap for establishing the capacity of LQG systems as the convex optimization in \eqref{eq:main_UB}. This is formalized in the next theorem as a sufficient condition for the tightness of the upper bound.
\begin{theorem}[Sufficient condition for the tightness of the upper bound]\label{theorem:achievabilty}
    The LQG system capacity is equal to the upper bound in \eqref{eq:main_UB} if the optimal variables, $(\Pi^*,\Gamma^*,\hat{\Sigma}^*)$, satisfy the Riccati equation 
    \begin{align}\label{eq:main_ricc_eq_sufficient}
     \hat{\Sigma}^*&= F\hat{\Sigma}^* F^T+F\Gamma^{*T}G^T+ G\Gamma^* F^T + G\Pi^* G^T+  K_{p}\Psi K_{p}^T - K_{Y}^*\Psi_{Y}^*K_{Y}^{*T},
    \end{align}
    where $K_{Y}^*,\Psi_{Y}^*$ are evaluated at the optimal variables, and the Kalman filter regularity conditions are met  (see Appendix \ref{app:sub_hat_sigma}): (i) The pair $(F + G \Gamma^* \hat{\Sigma}^{*\dagger}, H+J\Gamma^* \hat{\Sigma}^{*\dagger})$ is detectable (ii) The pair $(F^s, G^s W^{s})$ is stabilizable, where 
\begin{align}\label{eq:independent_riccati}
    F^s &= F+G\Gamma^* \hat{\Sigma}^{*\dagger}-(GM^*J^T+K_p\Psi)(JM^*J^T + \Psi)^{-1}(H+J\Gamma^* \hat{\Sigma}^{*\dagger})^T\\
    W^s &= \begin{pmatrix}
  M^* & 0\\ 
  0 & \Psi
\end{pmatrix} - 
\begin{pmatrix}
  M^*J^T \\ 
  \Psi
\end{pmatrix} (JM^*J^T+\Psi)^{-1}
\begin{pmatrix}
  JM^* &\Psi
\end{pmatrix} \nonumber\\
    G^s&=\begin{pmatrix} G &K_p \end{pmatrix}\nonumber
\end{align}
with $M^* \triangleq  \Pi^*-\Gamma^* \hat{\Sigma}^{*\dagger}\Gamma^{*T}$.
\end{theorem}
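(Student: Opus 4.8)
The plan is to establish the matching lower bound by exhibiting an explicit admissible policy whose communication rate equals the objective of \eqref{eq:main_UB}; together with Theorem \ref{theorem:upper_bound} this yields the claimed capacity. The candidate is the time-invariant instance of \eqref{eq:main_policy} at the optimizers: the controller-encoder sets $\vx_i = -K_{\mathtt{LQR}}\dhat{\vs}_i + \Gamma^*\hat{\Sigma}^{*\dagger}(\hat{\vs}_i - \dhat{\vs}_i) + \mathbf{m}_i$ with $\mathbf{m}_i$ i.i.d. $\mathcal{N}(0, M^*)$, $M^* = \Pi^* - \Gamma^*\hat{\Sigma}^{*\dagger}\Gamma^{*T} \succeq 0$ (positive semidefinite by the first LMI in \eqref{eq:main_UB}), where $\hat{\vs}_i$ is the controller's Kalman estimate \eqref{eq:encoder_estimate} and $\dhat{\vs}_i = \E[\hat{\vs}_i | \vy^{i-1}]$ is the decoder's estimate; the message is carried by $(\mathbf{m}_i)$ through a random Gaussian codebook layered on this feedback-dependent linear processing, in the style of the achievability results that extend \cite{CoverPombra} to the dynamical system \eqref{eq:intro_SS} (see \cite{sabag22isit}).

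The first and central step is to analyze the decoder's estimation error $\hat{\vs}_i - \dhat{\vs}_i$ and its covariance $\hat{\Sigma}_i = \mathbf{cov}(\hat{\vs}_i - \dhat{\vs}_i)$, initialized at $\hat{\Sigma}_1 = 0$. Substituting the policy into the reduced model \eqref{eq:encoder_model} and subtracting the decoder's one-step prediction, I would show that $\hat{\vs}_i - \dhat{\vs}_i$ follows a linear recursion with state matrix $F + G\Gamma^*\hat{\Sigma}^{*\dagger}$, observed through $H + J\Gamma^*\hat{\Sigma}^{*\dagger}$ and driven by the correlated pair $(\mathbf{m}_i, \mathbf{e}_i)$; its Kalman error covariance therefore obeys a Riccati recursion whose algebraic Riccati equation is exactly \eqref{eq:main_ricc_eq_sufficient} and whose correlation-adjusted data $(F^s, W^s, G^s)$ are those in \eqref{eq:independent_riccati}. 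Since $(\Pi^*,\Gamma^*,\hat{\Sigma}^*)$ solve \eqref{eq:main_ricc_eq_sufficient}, $\hat{\Sigma}^*$ is a fixed point, and by the Riccati convergence theory recalled in Appendix \ref{app:sub_hat_sigma} the regularity conditions (i)--(ii) ensure that $\hat{\Sigma}_i \to \hat{\Sigma}^*$ geometrically and that the closed-loop error dynamics are stable. I expect this to be the main obstacle: one must correctly assemble the correlated-noise Kalman recursion, match it to \eqref{eq:independent_riccati}, and verify that $\hat{\Sigma}^*$ is the \emph{stabilizing} solution rather than merely some positive semidefinite solution of \eqref{eq:main_ricc_eq_sufficient}.

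Granting $\hat{\Sigma}_i \to \hat{\Sigma}^*$, the rest is computation. For admissibility I would rewrite the policy as $\vx_i = -K_{\mathtt{LQR}}\hat{\vs}_i + \mathbf{u}_i$ with $\mathbf{u}_i = (K_{\mathtt{LQR}} + \Gamma^*\hat{\Sigma}^{*\dagger})(\hat{\vs}_i - \dhat{\vs}_i) + \mathbf{m}_i$, and apply the standard LQG completion of squares to \eqref{eq:cost_by_hat_s}: the cost decomposes into the optimal LQG value $\Tr(K_p\Psi K_p^T E) + \Tr(\Sigma Q)$ plus $\frac1n\sum_i \Tr(\mathbf{cov}(\mathbf{u}_i)\Psi_{\mathtt{LQR}})$; using $\hat{\Sigma}^*\hat{\Sigma}^{*\dagger}\Gamma^{*T} = \Gamma^{*T}$ (again from the first LMI) and passing to the limit, $\mathbf{cov}(\mathbf{u}_i)$ converges so that the average tends to the left-hand side of the trace constraint in \eqref{eq:main_UB}, which is at most $p$ (the $O(1/n)$ transient is absorbed by the averaging, and a vanishing budget slack with continuity of $C_{\mathtt{LQG}}(\cdot)$ gives strict admissibility). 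For the rate, conditioned on $\vy^{i-1}$ the output innovation is $\vy_i - \E[\vy_i | \vy^{i-1}] = (H + J\Gamma^*\hat{\Sigma}^{*\dagger})(\hat{\vs}_i - \dhat{\vs}_i) + J\mathbf{m}_i + \mathbf{e}_i$, whose steady-state covariance equals $\Psi_Y$ in \eqref{eq:main_UB}, while conditioning further on $\vx^i$ leaves only $\mathbf{e}_i \sim \mathcal{N}(0,\Psi)$; hence $\frac1n I(\vx^n \to \vy^n) = \frac1n\sum_i I(\vx^i;\vy_i|\vy^{i-1}) \to \frac12\log(\det\Psi_Y^*/\det\Psi)$, the objective of \eqref{eq:main_UB}.

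Finally, I would invoke the achievability half of the Cover--Pombra theory in its dynamical-system form: the control term $-K_{\mathtt{LQR}}\dhat{\vs}_i$ is a deterministic function of $\vy^{i-1}$, hence reproducible by the decoder, so after stripping it off the problem is a Gaussian channel with feedback whose stationary input law is the one above, and any rate below $\frac12\log(\det\Psi_Y^*/\det\Psi)$ is achievable with vanishing probability of error under the cost constraint. This gives $C_{\mathtt{LQG}}(p) \ge \frac12\log(\det\Psi_Y^*/\det\Psi)$, which with Theorem \ref{theorem:upper_bound} proves tightness. Apart from the Riccati-identification step, the remaining care is the routine verification that this stationary policy is asymptotically valid---that its finite-horizon cost and directed information converge to their steady-state values, so the limiting rate is genuinely achievable under the budget---consistent with Lemmas \ref{lemma:Gaussian_linear_policy}--\ref{lemma:minimizig_Gamma_2}, which identify \eqref{eq:main_policy} as the optimal $n$-letter policy form.
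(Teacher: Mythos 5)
Your proposal is correct and follows essentially the same route as the paper: extract the time-invariant policy parameters $\overline{\Gamma}=\Gamma^*\hat{\Sigma}^{*\dagger}$, $M^*=\Pi^*-\Gamma^*\hat{\Sigma}^{*\dagger}\Gamma^{*T}$ from the optimizers, use the orthogonality $\Gamma^*\hat{\Sigma}^{*\dagger}\hat{\Sigma}^*=\Gamma^*$ (first LMI) together with the assumed Riccati equality to show that $\hat{\Sigma}^*$ solves the policy-induced Riccati equation, and invoke the detectability/stabilizability conditions so the error-covariance recursion converges to it, whence the achieved rate and cost match the upper-bound objective and trace constraint. The paper phrases the achievability through its general time-invariant lower bound (Lemmas \ref{lemma:Gaussian_linear_policy}--\ref{lemma:minimizig_Gamma_2}) rather than an explicit Cover--Pombra codebook, but this is the same argument in substance.
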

The main idea in proving Theorem \ref{theorem:achievabilty} is to use a general lower bound on the LQG capacity for time-invariant policies parameterized by $(\Gamma,M)$ and induces $\hat{\Sigma}_{LB}$. We extract a specific pair $(\Gamma,M)$ from the upper-bound optimization, and using the sufficient Riccati condition, show that the policy-induced-covariance $\hat{\Sigma}_{LB}$ is equal to the upper-bound solution $\hat{\Sigma}^*$. 

In Section \ref{sec:examples}, we demonstrate via numerical simulations the upper bound tightness in the simulated dynamical systems. Moreover, the upper bound yields policy parameters $\Gamma, M$ that numerically achieve a lower bound indistinguishable from the upper bound. While we could not prove tightness for the general case, we were able to prove it for the case of scalar dynamical systems, i.e., all variables are scalars. 
\subsection{The capacity of scalar dynamical systems}
This is our main result regarding scalar dynamical systems. 
\begin{theorem}[The capacity of scalar LQG systems]\label{theorem:scalar}
    Assume that $H \neq K_{\mathtt{LQR}}J$, $G \neq K_p J$, and that the capacity is positive. Then, the capacity of scalar LQG systems is equal to
    \begin{align}\label{eq:capacity_scalar}
    C_{\mathtt{LQG}}(p) &= \max_{\Pi,\Gamma, \hat{\Sigma}}  \frac{1}{2}\log \left(\frac{\Psi_{Y}}{ \Psi}\right)\nonumber\\
        & s.t. \quad (\hat{\Sigma}K_{\mathtt{LQR}}^2 +\Pi +2 \Gamma K_{\mathtt{LQR}}) \Psi_{\mathtt{LQR}} + \Tr(K_p^2\Psi E) + \Tr (\Sigma Q) \leq p \nonumber\\
    &\Psi_{Y} = J^2\Pi + H^2 \hat{\Sigma} + 2 H\Gamma J+ \Psi \nonumber \\ 
    &  K_{Y}=  (F\Gamma J + F \hat{\Sigma} H + G\Pi J^T + G \Gamma H +K_{p}\Psi) \Psi_{Y}^{-1} \nonumber \\
    &\begin{pmatrix}
        \Pi & \Gamma\\  \Gamma^T & \hat{\Sigma}
    \end{pmatrix} \succeq 0 \nonumber \\&
    \begin{pmatrix}
     F^2\hat{\Sigma} + 2 F\Gamma G + G^2\Pi +  K^2_{p}\Psi  - \hat{\Sigma}& K_{Y}\Psi_{Y}\\ 
      \Psi_{Y}K_{Y}^T & \Psi_{Y}
    \end{pmatrix} \succeq 0.
    \end{align}
\end{theorem}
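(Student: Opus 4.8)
The plan is to deduce Theorem~\ref{theorem:scalar} from Theorems~\ref{theorem:upper_bound} and \ref{theorem:achievabilty}. The upper bound in \eqref{eq:main_UB} already specializes to the scalar program in \eqref{eq:capacity_scalar}, so it suffices to show that in the scalar case every optimal triple $(\Pi^*,\Gamma^*,\hat{\Sigma}^*)$ satisfies the Riccati equation \eqref{eq:main_ricc_eq_sufficient} with equality and meets the regularity conditions (i)--(ii) of Theorem~\ref{theorem:achievabilty}; then the stated equality follows. Since the objective $\tfrac12\log(\Psi_Y/\Psi)$ is strictly increasing in $\Psi_Y = J^2\Pi + H^2\hat{\Sigma} + 2HJ\Gamma + \Psi$, the program is equivalent to maximizing this single affine functional of $(\Pi,\Gamma,\hat{\Sigma})$ over the feasible set carved out by the cost constraint and the two $2\times2$ linear matrix inequalities; attainment of the maximum follows since the cost constraint bounds $\Pi$ (hence, via the first LMI, all variables). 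I would then attack the problem through its KKT system, using a scalar multiplier $\lambda\ge 0$ for the cost constraint and PSD $2\times2$ multipliers for the two LMIs, noting that in the scalar case the gradients entering stationarity are explicit: $\nabla\Psi_Y = (J^2,\,2HJ,\,H^2)$ and the cost gradient is $(1,\,2K_{\mathtt{LQR}},\,K_{\mathtt{LQR}}^2)\Psi_{\mathtt{LQR}}$ in the coordinates $(\Pi,\Gamma,\hat{\Sigma})$.

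The first step is to show the cost constraint is active, i.e. $\lambda>0$. If it were slack one could perturb $(\Pi,\Gamma,\hat{\Sigma})$ so as to strictly increase $\Psi_Y$ while preserving both LMIs — e.g. by trading $\hat{\Sigma}$ against $\Pi$ along a direction that keeps the first LMI $\Pi\hat{\Sigma}\ge\Gamma^2$ satisfied and uses $H\neq K_{\mathtt{LQR}}J$ to guarantee a net increase of $\Psi_Y$ — contradicting optimality. The same non-degeneracy, together with $G\neq K_pJ$ and the standing positivity of the capacity (so $p$ exceeds the minimal LQG cost), rules out the degenerate optimum $\hat{\Sigma}^*=0$ (which would force $\Gamma^*=0$ by the first LMI): with $\hat{\Sigma}^*=0$ the right-hand side of \eqref{eq:main_ricc_eq_sufficient} evaluates to $\Pi^*\Psi(G-K_pJ)^2/\Psi_Y^*$, which is strictly positive whenever $\Pi^*>0$, so opening a small $\hat{\Sigma}>0$ and using $H\neq K_{\mathtt{LQR}}J$ raises $\Psi_Y$ at fixed cost. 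Hence at the optimum $\hat{\Sigma}^*>0$, the pseudo-inverse $\hat{\Sigma}^{*\dagger}=1/\hat{\Sigma}^*$ is well defined, and the policy \eqref{eq:main_policy} and all KKT quantities are regular.

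The crux is to show the \emph{second} LMI is active, i.e. that the Schur complement \eqref{eq:Schur_2LMI} vanishes at $(\Pi^*,\Gamma^*,\hat{\Sigma}^*)$, which is precisely \eqref{eq:main_ricc_eq_sufficient}. I would argue by complementary slackness: from the stationarity conditions in $\Pi$, $\Gamma$ and $\hat{\Sigma}$, using $\lambda>0$ and the explicit gradients above, I would show that the dual multiplier of the second LMI cannot vanish — otherwise the cost gradient and the first-LMI dual alone would have to balance $\nabla\Psi_Y$, which over-determines $(\Pi^*,\Gamma^*,\hat{\Sigma}^*)$ in a way that contradicts either feasibility of the second LMI or the non-degeneracy $G\neq K_pJ$. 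A nonzero PSD $2\times2$ dual for the second LMI is rank one (its kernel cannot be the block containing the manifestly positive $\Psi_Y$), and complementary slackness then forces the primal $2\times2$ block in \eqref{eq:main_UB} to be rank one, equivalently \eqref{eq:Schur_2LMI}~$=0$. Carrying out this elimination while tracking the sub-cases — whether the first LMI is active, and the ranks of the two duals — and repeatedly invoking $H\neq K_{\mathtt{LQR}}J$ and $G\neq K_pJ$ is the ``thorough study of the KKT conditions'' and is the step I expect to be the main obstacle.

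Finally, with \eqref{eq:main_ricc_eq_sufficient} in hand and $\hat{\Sigma}^*>0$, I would verify conditions (i)--(ii) of Theorem~\ref{theorem:achievabilty}: in the scalar case, detectability of $(F+G\Gamma^*\hat{\Sigma}^{*\dagger},\,H+J\Gamma^*\hat{\Sigma}^{*\dagger})$ and stabilizability of $(F^s,G^sW^s)$ reduce to Schur-stability or non-vanishing of the relevant scalars, which follow from $H\neq K_{\mathtt{LQR}}J$, $G\neq K_pJ$ and positivity of the capacity (the latter also ensuring the $\hat{\Sigma}$-Riccati recursion converges to the stabilizing solution $\hat{\Sigma}^*$). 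Applying Theorem~\ref{theorem:achievabilty} then yields $C_{\mathtt{LQG}}(p)$ equal to the value of the program, which is \eqref{eq:capacity_scalar}.
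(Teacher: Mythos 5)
Your overall strategy matches the paper's: specialize the upper bound of Theorem~\ref{theorem:upper_bound} to scalars, show via the KKT system that the optimum activates the Riccati constraint so that the sufficient condition \eqref{eq:main_ricc_eq_sufficient} of Theorem~\ref{theorem:achievabilty} holds, then check the regularity conditions. But the proposal stops short exactly where the work is. The central step --- showing the dual of the second LMI cannot vanish --- is left as ``the main obstacle,'' and the mechanism you anticipate for the contradiction is not the right one: in the actual argument, assuming the Riccati constraint is slack (so $\lambda_3=0$) one first rules out $\hat\Sigma^*=0$ (using $H\neq K_{\mathtt{LQR}}J$ through the stationarity equations, not a perturbation of the Riccati right-hand side), then shows the multiplier of the constraint $\Pi-\Gamma^2/\hat\Sigma\ge 0$ is strictly positive, forcing $\Pi^*=\Gamma^{*2}/\hat\Sigma^*$, and then solving the stationarity conditions for $\zeta=\Gamma^*/\hat\Sigma^*$ yields $\zeta=-H/J$, which gives $\Psi_Y^*=\Psi$ and hence zero capacity --- the contradiction is with the standing positivity of the capacity, not with ``feasibility of the second LMI'' or with $G\neq K_pJ$. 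Also, your preliminary step of proving the cost constraint is active is not needed for this route, and your sketched perturbation arguments (trading $\hat\Sigma$ against $\Pi$, or ``opening a small $\hat\Sigma>0$ at fixed cost'') are themselves unproven claims: increasing $\hat\Sigma$ can violate the Riccati LMI, so feasibility of such perturbations must be checked, which is exactly what the KKT bookkeeping replaces.

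The second genuine gap is the regularity verification, which you assert ``follows from $H \neq K_{\mathtt{LQR}}J$, $G \neq K_pJ$ and positivity of the capacity.'' Detectability indeed follows by a short contradiction argument (if $H+J\overline\Gamma=0$ and $|F+G\overline\Gamma|\ge1$, positivity of capacity forces $M>0$ and the Riccati equation then produces the strictly positive term $(G-K_pJ)^2M\Psi/(J^2M+\Psi)$, a contradiction). But convergence of the lower-bound Riccati recursion is \emph{not} automatic: when $M=\Pi^*-\Gamma^{*2}/\hat\Sigma^*=0$ --- which your own Fig.~\ref{fig:M_zero} scenario shows is the generic scalar case --- the pair $(F^s,G^sW^s)$ is not stabilizable, the Riccati equation has two nonnegative solutions ($0$ and $\hat\Sigma_M$), and the recursion \eqref{eq:LB_rec_hat_sigma} started from $\hat\Sigma_1=0$ stays at $0$ forever, so the claimed lower bound would collapse. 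The proof must treat this case separately, e.g.\ by injecting $M_1>0$ at the first time step (making $\hat\Sigma_2>0$ via $G\neq K_pJ$) and then showing the $M=0$ recursion \eqref{eq:M_rec} is monotone toward the maximal solution $\hat\Sigma_M$. Without these two pieces the proposal is a plan consistent with the paper's approach, not a proof.
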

\begin{remark}
The assumption $G \neq K_pJ$ in Theorem \ref{theorem:scalar} implies that the dynamical system has a measurement feedback, and not a state-feedback. This is implied by the state evolution of the controller's estimate $\hat \vs_i$ in \eqref{eq:encoder_estimate}
\begin{align*}
\hat{\vs}_{i+1} =F \hat{\vs}_i +G\vx_i +K_{p}(\vy_i-J\vx_i-H\hat{\vs}_i).
\end{align*}
If $G = K_pJ$, the estimate update is only a function of $\vy_i$, and thus can be reproduced at the observer. This means that the estimation error covariance is zero, $\hat\Sigma=0$, and the solution for the CC problem can be found via the solution to the state-feedback scenario in \cite{tanaka_allerton}.
\end{remark}

The main idea to prove Theorem \ref{theorem:scalar} is to verify the sufficient condition in Theorem \ref{theorem:achievabilty}. The main challenge follows from the role of $\hat{\Sigma}$ in the optimization of \eqref{eq:main_UB}. The objective function $\Psi_Y$ is increasing with $\hat{\Sigma}$, so we aim to find the maximal $\hat{\Sigma}$ subject to the optimization constraints. However, both the Riccati LMI constraint (via \eqref{eq:Riccati_ineq_discussion}) and the cost constraint provide with upper bounds on $\hat \Sigma$ or the trace of it. To show the sufficient condition, one need to establish that the upper bound on $\hat \Sigma$ implied by the Riccati inequality is achieved with equality, and in other words, this constraint is active at the optimum. This is shown by analyzing the KKT conditions for the optimization of the upper bound. 

\subsection{The upper bound in known special cases}
Following the above discussion on the role of the constraints in \eqref{eq:main_UB} on the decision variable $\hat{\Sigma}$, we can show that the upper bound is tight for all special cases solved in the literature. 
\begin{enumerate}
    \item \textbf{Feedback capacity of Gaussian channels:} In this special case, the control cost is specialized to a power constraint on the system inputs by setting \( Q = 0 \). This implies \( E = 0 \), \( K_{\mathtt{LQR}} = 0 \), and \( \Psi_{\mathtt{LQR}} = R \). The control cost constraint is then \( \Tr(\Pi R) \leq p \), and the upper bound recovers the feedback capacity derived in~\cite{sabag22isit}.

    \item \textbf{The CC problem with state-feedback:} Here, the state is directly available to the observer-decoder, so $L = V = W$ and \( \Sigma = \hat{\Sigma} = 0 \). By the first LMI constraint, we have \( \Gamma = 0 \), and the objective function simplifies to \( \Psi_Y = J\Pi J^T + \Psi \). The control cost is given by \( \Tr(M \Psi_{\mathtt{LQR}}) + \Tr(WE) \leq p \), all align with the optimization for the capacity of state-feedback systems in \cite{tanaka_allerton}.
        \item \textbf{LQG control:} In this case, no information is transmitted, and the policy is \( \vx_i = - K_{\mathtt{LQR}} \dhat{\vs}_i \) with \( \Gamma = 0 \) and \( M = 0 \). Consequently, \( \Pi = 0 \), and since the input is deterministic with respect to the measurements, we have \( \hat{\Sigma} = 0 \) and \( \Psi_Y = \Psi \). The control constraint simplifies to the minimal LQG cost \( \Tr(K_p\Psi K_p^TE) + \Tr (\Sigma Q) \leq p \), as expected.

\end{enumerate}

\section{Examples}\label{sec:examples}
This section presents several examples: in the scalar case we demonstrate properties of the LQG system capacity, and in the vector case we illustrate that the upper bound is tight.
\subsection{Scalar systems}\label{subsec:example_scalar}
In the case of scalar dynamical systems, the LQG system capacity is given in Theorem \ref{theorem:achievabilty} as a convex optimization. We illustrate several phenomena related to the policy structure discussed in Section \ref{subsec:policy} via the dynamical system 
\begin{align}\label{eq:ex_scalar}
    \vs_{i+1}&=0.5 \vs_i +  \vx_i + \mathbf{w}_i\nonumber\\
    \vy_i &= \vs_i + \vx_i + \mathbf{v}_i,
\end{align}
where $\mathbf{w}_i\sim\mathcal{N}(0,1), \mathbf{v}_i\sim\mathcal{N}(0,1)$, their covariance is $L=0$, and the control cost parameters are $Q=1, R=1$.

We plot the LQG system capacity in Fig. \ref{fig:M_zero} as a function of the control budget. For control budgets smaller than the LQG minimal cost, $\mathcal J^*$, the capacity is zero. The blue curve in Fig. \ref{fig:M_zero} corresponds to the variance of the i.i.d. element $\mathbf{m}_i$ in \eqref{eq:main_policy}, equal to $\Pi - \frac{\Gamma^2}{\hat \Sigma}$. It can be noted that this policy term does not exist as its variance is zero even as $p$ increases. This observation has direct implications on the construction of explicit coding schemes to achieve the capacity of this LQG system capacity. In particular, an optimal scheme is to quantize the message on the unit interval and send this quantization at the first time instance. At the following times, the controller-encoder follows the optimal policy 
\begin{align}
   \vx_i&=   - K_{\mathtt{LQR},i} \dhat{\vs}_i + \Gamma \hat{\Sigma}^\dagger (\hat{\vs}_i - \dhat{\vs}_i) \quad \quad i \ge 2
\end{align}
where $\Gamma, \hat{\Sigma}$ are the optimal parameters from the upper bound. The decoder is constructed by the MMSE estimator of the $z_0:= H\vs_0 + \vv_i$ based on the measurements $\vy^n$ with explicit construction in \cite{SabagKostinaHassibiMIMO}. An analysis of the rate achieved by such scheme can also be found in \cite{SabagKostinaHassibiMIMO} and shows that it achieves the capacity. The main idea is that the joint distribution induced by the suggested scheme converges to the optimal joint distribution in Theorem \ref{theorem:scalar}. This scheme falls resembels the Schalkwijk-Kailath scheme \cite{SchalkwijkKailath66_feedback_scheme}, which falls under the mechanism of posterior matching \cite{shayevitz_posterior_mathcing}.  However, the system here is not memoryless and rather than transmitting the estimation error of the message, the controller transmits the estimation error of the state to implicitly refine the observer knowledge on the message. To our knowledge, the idea of transmitting the states' innovation to convey information on the message was introduced in \cite{Kim10_Feedback_capacity_stationary_Gaussian}.
\begin{figure}[H]
        \centering
        \includegraphics[width=0.5\textwidth, trim=3cm 5cm 3cm 5cm, clip]{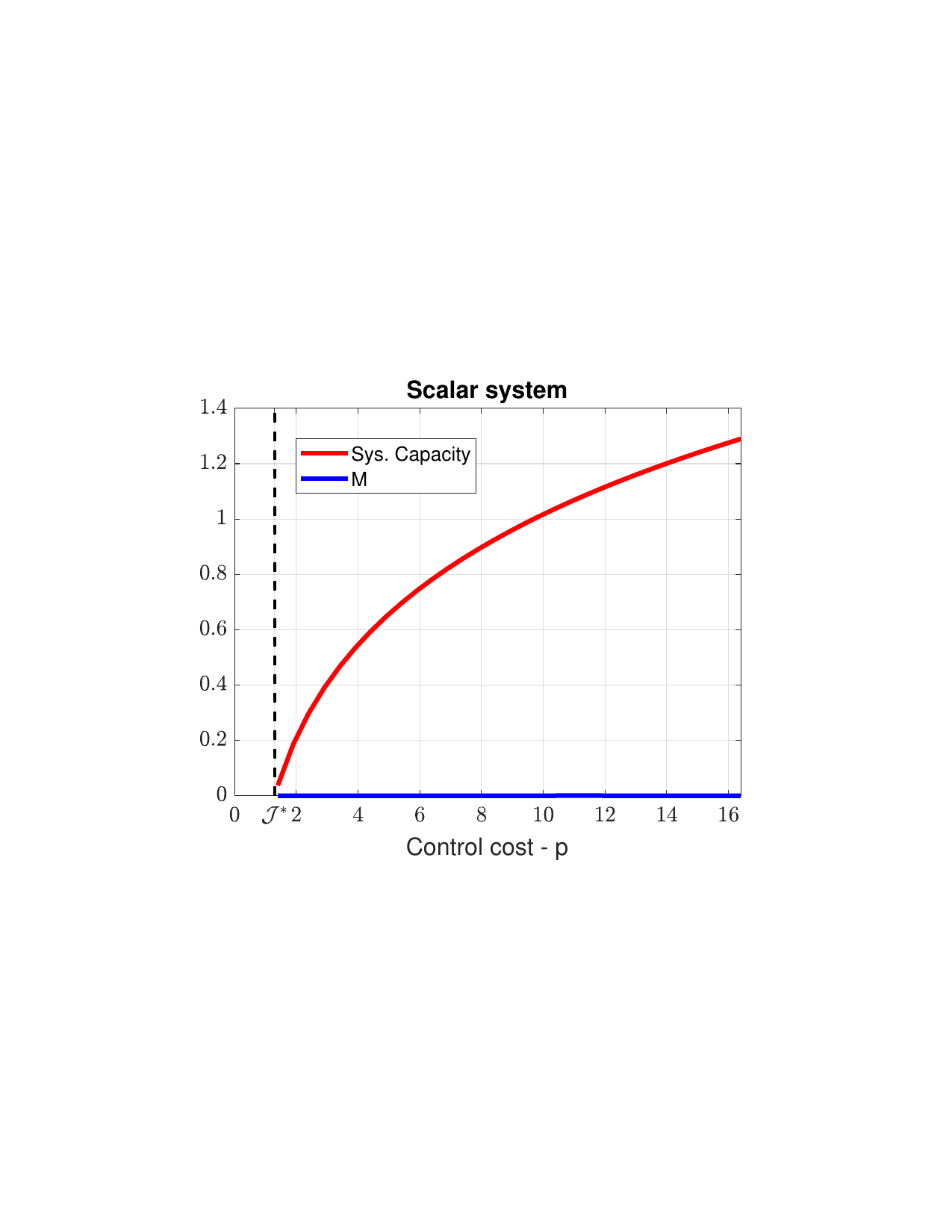}
        \caption{The LQG system capacity for the scalar system in \eqref{eq:ex_scalar}. We also plot the variance  $V(\mathbf{m}_i) = \Pi - \frac{\Gamma^2}{\hat \Sigma}$ to illustrate that $\mathbf{m}_i =0$ in this example. The dashed vertical line corresponds to the minimal LQG control cost. .}
        \label{fig:M_zero}
    \end{figure}

It may be tempting to conjecture that the suggested scheme is optimal for all LQG systems, but even a simple modification of the system parameters prove the opposite. In Fig. \ref{fig:M_from_G}, we plot the LQG system capacity for the dynamical system studied in Fig. \ref{fig:M_zero}, but as a function of $G$ when the control cost is fixed to $p=5$. We observe that there are regimes in which $ \Pi - \Gamma \hat{\Sigma}^\dagger \Gamma^T >0$ so that the i.i.d. term in the policy $\mathbf{m}_i \neq0$. To the best of our knowledge, there is no explicit coding scheme when the optimal joint distribution induces $\mathbf{m}_i\neq0$, see also \cite{SabagKostinaHassibiMIMO,comments_kim}. While in the case of feedback capacity, it is unknown whether schemes such as \cite{SabagKostinaHassibiMIMO,Kim10_Feedback_capacity_stationary_Gaussian} achieve the feedback capacity \cite{comments_kim}, it is demonstrated here that such constructions are insufficient to achieve the LQG system capacity.

\begin{figure}
        \centering
        \includegraphics[width=0.5\textwidth, trim=3cm 5cm 3cm 5cm, clip]{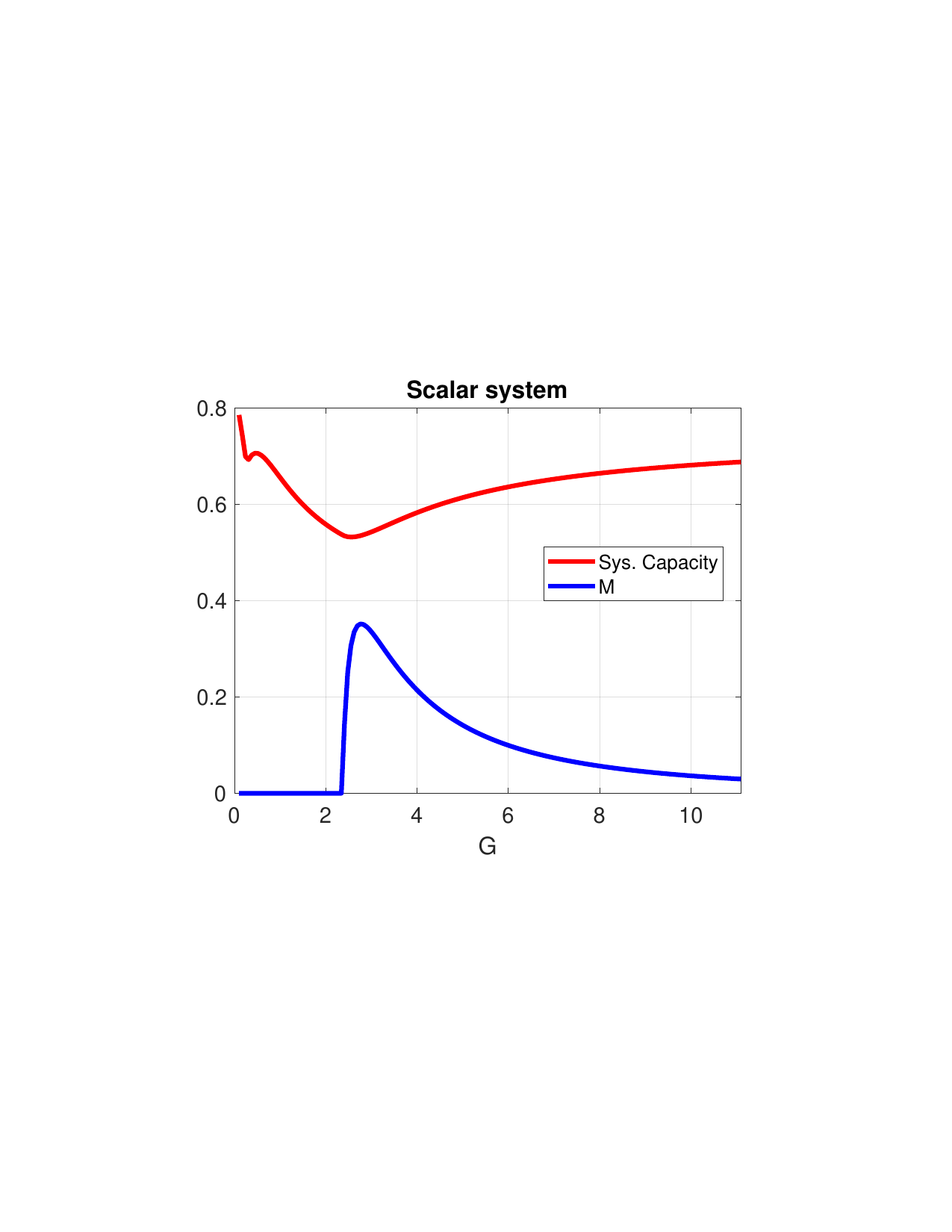}
        \caption{Illustration of the LQG system capacity and $M = \Pi - \Gamma \hat{\Sigma}^\dagger \Gamma^T$ for a fixed control budget $p=5$ and varying values of the input-to-state gain $G$.}
        \label{fig:M_from_G}
    \end{figure}
    
\subsection{Vector systems}\label{subsec:example}
We simulate an unstable vector dynamical system 
$\vs_i \in \mathbb{R}^3$ and scalar inputs and outputs. The paramaters of the systems are
\begin{align}\label{eq:vector_system}
    \vs_{i+1}&=\begin{pmatrix}
        1.2 & 0 &0 \\ 0 &0.7 &0 \\0&0&0.5
    \end{pmatrix}
     \vs_i + \begin{pmatrix}
       2 \\ 1 \\12
    \end{pmatrix} \vx_i + \mathbf{w}_i\nonumber\\
    \vy_i &= \begin{pmatrix}
        10 & 2 &1 
    \end{pmatrix} \vs_i + \vx_i + \mathbf{v}_i,
\end{align}
where the control cost parameters are $Q=I_3, R=1$, and the distribution of the disturbance and the measurement noise is characterized by $\mathbf{w}_i \sim N(0,I_3), \mathbf{v}_i \sim N(0, 4)$ and $L= \mathbb{E}[\vw_i \vv_i]= \begin{bmatrix}
    0&0&0
\end{bmatrix}^T$. 

The LQG system capacity for this scenario is not given in a closed form since the state is a vector. Fig. \ref{fig:like_AR} presents upper and lower bounds on the capacity as a function of the control budget. The upper bound is computed using the convex optimization in Theorem \ref{theorem:upper_bound}. For the lower bound, we extract the policy parameters $\overline{\Gamma} = \Gamma \hat{\Sigma}^\dagger, M = \Pi - \Gamma \hat{\Sigma}^\dagger \Gamma^T$ from the upper bound optimization and then evaluate the performance of this policy. This is done by computing $\hat{\Sigma}$ as the solution of a Riccati equation 
\begin{align}\label{eq:hat_sigma_riccati}
    \hat{\Sigma}= (F+G \overline{\Gamma})\hat{\Sigma}(F+G \overline{\Gamma})^T + GMG^T + K_p\Psi K_p^T -\overline{K}_{Y}\overline{\Psi}_{Y}\overline{K}_{Y}^T
\end{align}
where $\overline{K}_{Y}=  ((F+G \overline{\Gamma} )\hat{\Sigma}(H+J\overline{\Gamma})^T + GMJ^T +K_p\Psi)\overline{\Psi}_Y^{-1}$ and $\overline{\Psi}_{Y} = (H+J\overline{\Gamma})\hat{\Sigma} (H+J\overline{\Gamma})^T + J M J^T +  \Psi.$

This is used to compute the objective function and the control cost consumed by this strategy. As can be seen in Fig. \ref{fig:like_AR}, the lower and upper bounds seem to coincide. Moreover, in all simulated examples it is observed that the lower and upper bound exhibit similar tightness. We also remark that for large dimension, we observe large numerical errors, possibly due to the inverse $\hat \Sigma$.


    \begin{figure}
        \centering
        \includegraphics[width=0.5\textwidth, trim=3cm 5cm 3cm 5cm, clip]{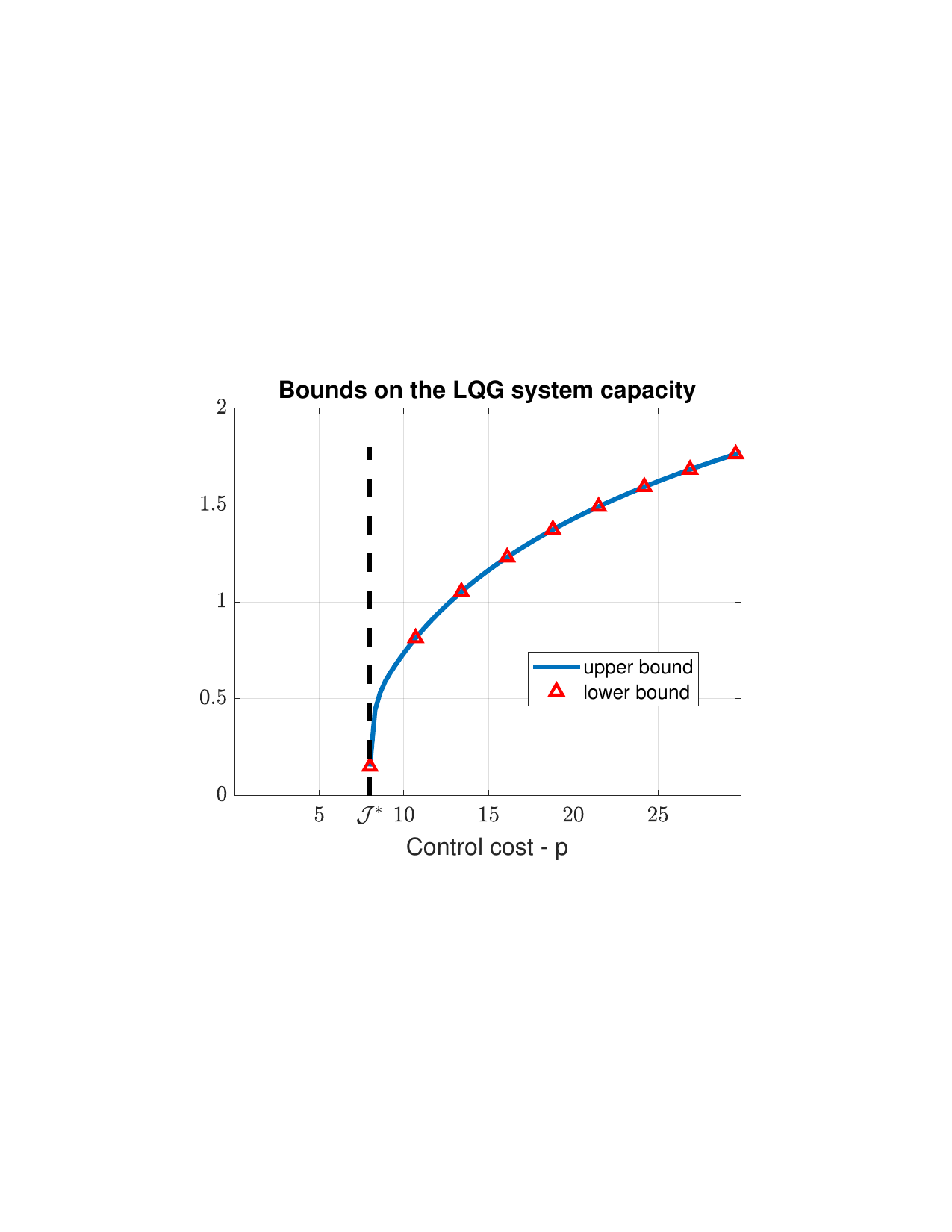}
        \caption{Bounds on the LQG system capacity for the vector dynamical system in \eqref{eq:vector_system}. The dashed vertical line indicates the minimum control budget to achieve a positive capacity. }
        \label{fig:like_AR}
    \end{figure}

\section{Derivation of the Main results}\label{sec:derivation}

In this section, we first present technical lemmas that are necessary to present the proof of the main results. In Section \ref{subsec:theorems_proof} we prove the main results based on the technical lemmas. The proof of the technical lemmas appear in Section \ref{sec:proofs}.
Our starting point is the $n$-letter optimization problem in \eqref{eq:n_letter_capacity}. We also present the $n$-letter  joint distribution of the underlying objective:
\begin{align}
    P(\vs_1) \prod_{i=1}^n P(\vs_{i+1},\vy_i|\vs_i,\vx_i)P(\vx_i|\vx^{i-1},\vy^{i-1}).
\end{align}

\begin{lemma}\label{lemma:Gaussian_linear_policy}
An optimal policy for the $n$-letter optimization problem in \eqref{eq:n_letter_capacity}, it is sufficient to optimize over inputs of the form
 \begin{align}\label{eq:linear_policy}
     \vx_i= \Gi \sd \Err{i} + \Gti \dhat{\vs}_i + \mathbf{m}_i, \ \ i=1,\dots,n, 
 \end{align}
where $\hat\vs_i$ is defined in \eqref{eq:encoder_estimate}, $\dhat{\vs}_i$ is the observer estimate of the state defined in \eqref{eq:encoder_model}, $\Gi \in \mathbb{R}^{m\times k}$, $\Gti \in \mathbb{R}^{m\times k}$, $\hat{\Sigma}_i \triangleq \mathbf{cov} \Err{i}$ and $\mathbf{m}_i \sim N(0,M_i)$ is independent of $(\vx^{i-1},\vy^{i-1})$. The optimization variables are the collection of matrices $\{\Gamma_{1,i},\Gamma_{2,i},M_i\}_{i=1}^n$, and are subject to the orthogonality constraint
\begin{align}\label{eq:orthogonal}
    \Gi(I-\hat{\Sigma}_i \sd) =0 \ \  \text{for} \ \ i=1,\dots,n, 
\end{align}
and the control cost constraint in \eqref{eq:cost_by_hat_s}. 
\end{lemma}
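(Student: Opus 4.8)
The plan is to prove the lemma in four steps: rewrite the directed information as an output differential entropy, reduce to Gaussian input laws, reduce further to inputs that are linear in the controller's estimate $\hat\vs_i$ plus fresh noise, and finally recast that linear form as \eqref{eq:linear_policy} together with the orthogonality constraint \eqref{eq:orthogonal}. For the first step, I would use \eqref{eq:encoder_model}--\eqref{eq:primary_err}: $\hat\vs_i$ is a deterministic function of $(\vx^{i-1},\vy^{i-1})$, while the innovation $\mathbf e_i\sim\mathcal N(0,\Psi)$ is i.i.d.\ and independent of $(\vx^i,\vy^{i-1})$ for every causal input law. Hence conditioning on $(\vx^i,\vy^{i-1})$ fixes $H\hat\vs_i+J\vx_i$, so $h(\vy_i\mid\vy^{i-1},\vx^i)=h(\mathbf e_i)=\tfrac12\log((2\pi e)^p\det\Psi)$, and therefore $I(\vx^n\rightarrow\vy^n)=h(\vy^n)-\tfrac n2\log((2\pi e)^p\det\Psi)$; thus \eqref{eq:n_letter_capacity} becomes: maximize $h(\vy^n)$ subject to \eqref{eq:cost_by_hat_s}. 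For the second step I would Gaussianize: given any $p$-admissible law, build, step by step, the Gaussian law that retains at each time the linear-regression coefficient of $\vx_i$ on $(\vx^{i-1},\vy^{i-1})$ and the averaged conditional covariance of $\vx_i$. By induction this preserves the joint second moments of $(\hat\vs^n,\vx^n,\vy^n)$, hence the cost \eqref{eq:cost_by_hat_s} (a linear functional of $\{\E[\hat\vs_i\hat\vs_i^T],\E[\vx_i\vx_i^T]\}$), while $h(\vy^n)$ cannot decrease since Gaussians maximize entropy at fixed covariance, and centering the inputs only lowers the cost without changing $I$. Thus it suffices to optimize over zero-mean causal Gaussian laws $\vx_i=L_i[\vx^{i-1};\vy^{i-1}]+\mathbf m_i$ with $\mathbf m_i$ Gaussian and independent of $(\vx^{i-1},\vy^{i-1})$.

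The third step is the crux: collapsing the feedback of the full history onto $\hat\vs_i$. I would decompose $L_i[\vx^{i-1};\vy^{i-1}]=A_i\hat\vs_i+B_i\eta_i$, where $\eta_i$ is the $L^2$-projection of $(\vx^{i-1},\vy^{i-1})$ onto the orthogonal complement of $\hat\vs_i$, and argue $B_i\eta_i$ may be dropped. Its part in $\mathrm{span}(\vy^{i-1})$ is annihilated by the conditioning in $h(\vy_i\mid\vy^{i-1})$ yet still enlarges $\mathbf{cov}(\vx_i)$, so removing it strictly helps; its part orthogonal to $\vy^{i-1}$ is, by the sufficient-statistic property of $\hat\vs_i$ in the reduced model \eqref{eq:encoder_model}, orthogonal to $\hat\vs_j$, $\mathbf e_{\ge i}$ and $\mathbf m_{\ge i}$, hence---once no longer fed back through $\vx_i$---orthogonal to the entire future output, so removing it only shrinks every $\mathbf{cov}(\hat\vs_j)$ and $\mathbf{cov}(\vx_j)$ in the positive semidefinite order (hence the cost) and leaves $h(\vy^n)$ unchanged. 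A clean way to make this rigorous is a backward dynamic-programming argument over the deterministic covariance ``state'' $\hat\Sigma_i=\mathbf{cov}\Err{i}$, using that $\mathbf{cov}(\vy_i\mid\vy^{i-1})$ and the per-stage cost are functions of $(\hat\Sigma_i,A_i,M_i)$ alone---this is where the quadratic form of the LQR cost is essential---and alternatively one can invoke the structured optimal policy of \cite{charalambous2024siam,CharalambousLouka2025JSSC}. I expect this step to be the main obstacle: the delicate point is to rule out that coherently re-transmitting a ``stale'' residual across several stages could help, which is exactly why the observer estimate $\dhat{\vs}_i$ must enter (the measurement-feedback asymmetry). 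The conclusion of this step is that an optimal Gaussian policy has the form $\vx_i=A_i\hat\vs_i+\mathbf m_i$.

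For the final step I would split $\hat\vs_i=\dhat{\vs}_i+\Err{i}$---with $\dhat{\vs}_i=\E[\hat\vs_i\mid\vy^{i-1}]$ observer-reconstructible and $\Err{i}$ private of covariance $\hat\Sigma_i$---and set $\Gti:=A_i$ and $\Gi:=A_i\hat\Sigma_i$. Then $\Gi(I-\hat\Sigma_i\sd)=A_i(\hat\Sigma_i-\hat\Sigma_i\sd\hat\Sigma_i)=0$, which is the orthogonality constraint \eqref{eq:orthogonal}; and since $\Err{i}$ is zero-mean with covariance $\hat\Sigma_i$ it lies a.s.\ in $\mathrm{range}(\hat\Sigma_i)=\mathrm{range}(\hat\Sigma_i\sd)$, so $\Gi\sd\Err{i}=A_i\hat\Sigma_i\sd\Err{i}=A_i\Err{i}$ and $\vx_i=\Gi\sd\Err{i}+\Gti\dhat{\vs}_i+\mathbf m_i$, the form \eqref{eq:linear_policy}. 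The $\sd$-normalization is the natural one because $\mathbf{cov}(\Gi\sd\Err{i})=\Gi\sd\hat\Sigma_i\sd\Gi^{T}=\Gi\sd\Gi^{T}$ (using that $\hat\Sigma_i^\dagger$ is symmetric and $\hat\Sigma_i^\dagger\hat\Sigma_i\hat\Sigma_i^\dagger=\hat\Sigma_i^\dagger$), which is precisely the quantity that appears in the LMIs of Theorem~\ref{theorem:upper_bound}; the cost constraint on $\{\Gi,\Gti,M_i\}$ is then \eqref{eq:cost_by_hat_s} rewritten through the covariances induced by this policy, completing the proof.
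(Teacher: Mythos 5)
Your Steps 1--2 (rewriting the directed information through the innovation entropy and Gaussianizing by second-moment matching) agree with the paper's first reduction. The proof breaks in Step 3, and the break is not a fixable technicality: the class you reduce to, $\vx_i=A_i\hat{\vs}_i+\mathbf{m}_i$, ties the gain on $\Err{i}$ to the gain on $\dhat{\vs}_i$ (in Step 4 you set $\Gti=A_i$ and $\Gi=A_i\hat{\Sigma}_i$, so $\Gi\sd=\Gti$), whereas the lemma's class deliberately keeps them independent. That tied class is strictly suboptimal in known special cases: for $Q=0$ (feedback capacity) the optimum has $\Gti=0$ while $\Gi\sd\neq 0$, since any $\dhat{\vs}_i$-component of the input adds power but, by Lemma~\ref{lemma:computing_estimators}, leaves $\Psi_{Y,i}$ unchanged; and in the general CC problem Lemma~\ref{lemma:minimizig_Gamma_2} shows the optimal $\Gti=-K_{\mathtt{LQR},i}$, which generally differs from the information-bearing gain $\Gi\sd$. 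Since your Step 3 would prove sufficiency of a class that excludes these optima, at least one of its two supporting claims must fail, and both do. The claim that the part of $B_i\eta_i$ lying in $\mathrm{span}(\vy^{i-1})$ can be removed because it ``enlarges $\mathbf{cov}(\vx_i)$ yet is annihilated by the conditioning'' ignores that this observer-known component enters the dynamics through $G\vx_i$ and regulates future states: removing it can increase the terms $\E[\hat{\vs}_j^TQ\hat{\vs}_j]$ for $j>i$, so it does not ``strictly help'' and may violate the budget $p$ (indeed, the entire $-K_{\mathtt{LQR},i}\dhat{\vs}_i$ term of the optimal policy has, in general, a component orthogonal to $\hat{\vs}_i$ and measurable in $\vy^{i-1}$). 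The claim that the part orthogonal to $\vy^{i-1}$ can be removed while ``leaving $h(\vy^n)$ unchanged'' is also wrong: that component is unknown to the observer and inflates the innovation covariance through $J\vx_i$ and through the state, so deleting it typically lowers the rate; the correct operation is replacement by fresh noise of matched covariance, not removal.

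This is precisely where the paper's argument differs. After Gaussianizing, it does not try to argue that extra history dependence is harmful; it constructs a policy of the form $\vx_i=\Gi\sd\Err{i}+\Gti N_i^{\dagger}\dhat{\vs}_i+\mathbf{m}_i$ in which $\Gi=\E_\mathbb{G}[\vx_i\Err{i}^T]$ and $\Gti=\E_\mathbb{G}[\vx_i\dhat{\vs}_i^T]$ are the (separate) cross-covariances under the Gaussianized measure, and $\mathbf{m}_i$ absorbs the residual covariance of $\vx_i$. It then proves by induction that the covariance of the tuple $(\dhat{\vs}_i,\Err{i},\vx_i,\psi_i)$ is identical under the two measures, so both the directed information (a function of $\mathbf{cov}(\psi_i)$ only) and the LQR cost (a function of $\mathbf{cov}(\dhat{\vs}_i)$, $\hat{\Sigma}_i$, $\mathbf{cov}(\vx_i)$ only) are exactly preserved---no inequality on the cost is needed, which is what your dropping argument cannot deliver. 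If you want to salvage your route, replace ``drop $B_i\eta_i$'' by ``project $\vx_i$ onto the pair $(\Err{i},\dhat{\vs}_i)$ with two independent coefficients and substitute matched-covariance independent noise for the residual,'' and then run exactly the covariance-propagation induction; the appeal to a backward DP over $\hat{\Sigma}_i$ or to the structured policies of the cited SIAM works is not a substitute for that argument.
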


Lemma \ref{lemma:Gaussian_linear_policy} simplifies the optimization in \eqref{eq:n_letter_capacity} by showing that an optimal policy is linear, Gaussian, and has a simple dependence on past occurrences, i.e., the system input $\vx_i$ only depends on $\hat{\vs}_i$ and $\dhat{\vs}_i$ while in principle it can depend on $(\vx^{i-1},\vy^{i-1})$. These, as a whole, imply that the optimization domain is the matrices' sequence $\{\Gi, \Gti, M_i \succeq 0\}_{i=1}^n$ subject to the orthogonality and the control constraints.

The existence of an optimal Gaussian policy for the control-coding problem is not direct given the non-aligned objectives of communication and control. However, the underlying principle is that in both problems, second-order moments play a significant role. In Information theory, for a fixed-variance, the entropy is maximized at Gaussian distributions, and similarly in control the LQR objective involves second-order moments and thus linear Gaussian policies are also optimal. 
We remark that a similar policy was derived in \cite{CC_Bambos_ISIT24} with the exception that we introduce the orthogonality constraint in \eqref{eq:orthogonal}, required for the convex optimization formulation, and thus we use a constructive proof. 

A consequence of Lemma~\ref{lemma:Gaussian_linear_policy} is that the LQG system in \eqref{eq:encoder_model} can be written as the state-space model
\begin{align}\label{eq:observer_model}
        \hat{\vs}_{i+1} &= (F+G\Gi \sd) \Err{i} + (F+G \Gti) \dhat{\vs}_i + G\mathbf{m}_i + K_p \mathbf{e}_i \nonumber\\
        \vy_i &= (H+J\Gi \sd) \Err{i} + (H+ J \Gti) \dhat{\vs}_i + J \mathbf{m}_i  + \mathbf{e}_i.
\end{align}
We now utilize \eqref{eq:observer_model} to express the optimal estimator $\dhat{\vs}_{i} = \E[\hat{\vs}_i|\vy^{i-1}]$ as a simple linear and recursive function of the measurements (channel outputs). While the state-space resembles a Kalman filtering model, the standard Kalman filtering equations cannot be applied directly. In particular, note that the latent state in the first equation of \eqref{eq:observer_model} is $\hat{\vs}_{i+1}$ and evolves as a function of the previous state $\hat{\vs}_{i}$ but also as a function of the estimate $\dhat{\vs}_{i}$. The dependence on $\hat{\vs}_{i}$ is standard, but the term $\dhat{\vs}_i$ is a strictly-causal function of the measurements (channel outputs) $\vy^{i-1}$. As we will see in the next lemma, optimal estimate has the standard structure of the Kalman-filter, but the Kalman gain and the constants are different. The estimator is linear and is given by
\begin{align}\label{eq:dhat_discussion}
    \dhat{\vs}_{i+1}&=\E[\hat{\vs}_i|\vy^{i-1}]+K_{Y,i}(\vy_i -\E[\vy_i|\vy^{i-1}])\nonumber\\
    &= (F + G\Gti) \dhat{\vs}_i + K_{Y,i} \psi_i
\end{align}
where the innovation is $\psi_i \triangleq \vy_i - \E[\vy_i|\vy^{i-1}] = \vy_i-(H + J \Gti)\dhat{\vs}_i$, and $K_{Y,i}$ is the Kalman gain. By \eqref{eq:dhat_discussion}, the estimate $\dhat{\vs}_i$ depends on the terms that do not appear in the standard KF. However, the covariance of the estimation error $\hat{\vs}_i - \dhat{\vs}_i$ is not effected by the state-space modifications. We present these results formally.

\begin{lemma}[Directed information]\label{lemma:computing_estimators} 
For a fixed policy of the form as in \eqref{eq:linear_policy}, the directed information can be expressed by the sum
\begin{align}\label{eq:objective}
    I(\vx^n \rightarrow \vy^n) &= \frac{1}{2} \sumT \log \left(\frac{\det \Psi_{Y,i}}{\det \Psi}\right)
\end{align}
where $\Psi_{Y,i}$ is the covariance of the innovation $\psi_i$, and can be computed as 
\begin{align}\label{eq:kalmanY}
    \Psi_{Y,i} = (H+J\Gi \sd)\hat{\Sigma}_i (H+J\Gi \sd)^T + J M_i J^T +  \Psi.
\end{align}
For the state-space in \eqref{eq:observer_model}, the optimal Kalman gain in \eqref{eq:dhat_discussion} is 
\begin{align}\label{eq:Kalman_dec_estimator}
    K_{Y,i}&= ((F+G\Gi\sd)\hat{\Sigma}_i(H+J\Gi\sd)^T + GM_iJ^T +K_p\Psi)\Psi_{Y,i}^{-1}
\end{align}
and its corresponding error covariance $\hat{\Sigma}_i = \mathbf{cov}(\hat{\vs}_i-\dhat{\vs}_i)$ can be computed recursively as
\begin{align}\label{eq:cov_decoder_err}
    \hat{\Sigma}_{i+1} &= (F+G\Gi \sd)\hat{\Sigma}_i(F+G\Gi \sd)^T + GM_iG^T + K_p\Psi K_p^T -K_{Y,i}\Psi_{Y,i}K_{Y,i}^T
\end{align}
with the initial condition $\hat{\Sigma}_1 = 0$.
\end{lemma}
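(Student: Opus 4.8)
The plan is to treat \eqref{eq:observer_model} as a (non-standard) linear Gaussian state-space model and to compute the directed information and the innovations covariance by exploiting the chain-rule decomposition of directed information together with the conditional entropy characterization of mutual information for jointly Gaussian vectors. First I would establish the identity $I(\vx^n\to\vy^n)=\sum_{i=1}^n I(\vx^i;\vy_i\mid\vy^{i-1})$ and argue that, conditioned on $\vy^{i-1}$, all quantities are jointly Gaussian, so $I(\vx^i;\vy_i\mid\vy^{i-1})=h(\vy_i\mid\vy^{i-1})-h(\vy_i\mid\vx^i,\vy^{i-1})$. The second term equals $h(\mathbf e_i)=\tfrac12\log\det(2\pi e\,\Psi)$ because, given $(\vx^i,\vy^{i-1})$, the output $\vy_i$ in \eqref{eq:observer_model} is $\mathbf e_i$ plus a constant (note $\Err{i}$ and $\dhat{\vs}_i$ and $\mathbf m_i$ are all determined by $\vx^i,\vy^{i-1}$ — indeed $\hat{\vs}_i$ is a function of $\vx^{i-1},\vy^{i-1}$ by \eqref{eq:encoder_estimate}, $\dhat{\vs}_i$ is a function of $\vy^{i-1}$, and $\mathbf m_i$ is part of $\vx_i$ via \eqref{eq:linear_policy} after subtracting the other terms). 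The first term is $h(\vy_i\mid\vy^{i-1})=\tfrac12\log\det(2\pi e\,\Psi_{Y,i})$ where $\Psi_{Y,i}=\mathbf{cov}(\vy_i\mid\vy^{i-1})=\mathbf{cov}(\psi_i)$, which gives \eqref{eq:objective} after the $2\pi e$ factors cancel.

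Next I would compute $\Psi_{Y,i}$ directly from the second line of \eqref{eq:observer_model}: conditioning on $\vy^{i-1}$ kills $\dhat{\vs}_i$ (it becomes deterministic), and the three remaining terms $(H+J\Gi\sd)\Err{i}$, $J\mathbf m_i$, and $\mathbf e_i$ are mutually uncorrelated — $\mathbf m_i$ is independent of $(\vx^{i-1},\vy^{i-1})$ hence of $\Err{i}$, and $\mathbf e_i$ is the fresh innovation independent of the past — so the covariance is additive, yielding \eqref{eq:kalmanY}. Here I must invoke the orthogonality constraint \eqref{eq:orthogonal}, which guarantees $\Gi\sd\Err{i}$ is well-defined as the relevant component and that $\mathbf{cov}(\Err{i})=\hat{\Sigma}_i$ is consistent with the $\sd$ normalization; this is also where $\hat{\Sigma}_i\sd\hat{\Sigma}_i=\hat{\Sigma}_i$ is used to simplify cross terms. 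Then for the Kalman gain and the error recursion, I would write the joint second-order description of $(\hat{\vs}_{i+1},\vy_i)$ conditioned on $\vy^{i-1}$ using both lines of \eqref{eq:observer_model}: the cross-covariance $\mathbf{cov}(\hat{\vs}_{i+1},\vy_i\mid\vy^{i-1})$ picks up contributions from the common $\Err{i}$ term (giving $(F+G\Gi\sd)\hat{\Sigma}_i(H+J\Gi\sd)^T$), from $\mathbf m_i$ (giving $GM_iJ^T$), and from $\mathbf e_i$ (giving $K_p\Psi$), which is exactly the numerator in \eqref{eq:Kalman_dec_estimator}. The standard linear-MMSE update $\dhat{\vs}_{i+1}=\E[\hat{\vs}_{i+1}\mid\vy^{i-1}]+\mathbf{cov}(\hat{\vs}_{i+1},\psi_i)\Psi_{Y,i}^{-1}\psi_i$ then gives $K_{Y,i}=\mathbf{cov}(\hat{\vs}_{i+1},\vy_i\mid\vy^{i-1})\Psi_{Y,i}^{-1}$, and subtracting this estimate yields the Joseph-form / Riccati recursion $\hat{\Sigma}_{i+1}=\mathbf{cov}(\hat{\vs}_{i+1}\mid\vy^{i-1})-K_{Y,i}\Psi_{Y,i}K_{Y,i}^T$, where $\mathbf{cov}(\hat{\vs}_{i+1}\mid\vy^{i-1})$ computed from the first line of \eqref{eq:observer_model} equals $(F+G\Gi\sd)\hat{\Sigma}_i(F+G\Gi\sd)^T+GM_iG^T+K_p\Psi K_p^T$, establishing \eqref{eq:cov_decoder_err}; the initialization $\hat{\Sigma}_1=0$ follows since $\hat{\vs}_1=0$ is deterministic.

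The main subtlety — the part I would be most careful about — is justifying that the estimator $\dhat{\vs}_i=\E[\hat{\vs}_i\mid\vy^{i-1}]$ really does satisfy the \emph{plain} Kalman recursion \eqref{eq:dhat_discussion} despite the latent dynamics in \eqref{eq:observer_model} depending on $\dhat{\vs}_i$ itself (a quantity that is $\vy^{i-1}$-measurable, not part of the "true state"). The clean way is to argue by induction that the triple $(\Err{i},\dhat{\vs}_i)$ together with the measurements keeps the model in a form where $\Err{i}$ is conditionally zero-mean Gaussian given $\vy^{i-1}$ with covariance $\hat{\Sigma}_i$, independent of $\vy^{i-1}$; then the $\dhat{\vs}_i$ terms in \eqref{eq:observer_model} contribute only to the conditional mean and drop out of every covariance computation, so the recursion for $\hat{\Sigma}_i$ decouples from the messages and from $\dhat{\vs}_i$ exactly as claimed. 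The base case is immediate from $\hat{\vs}_1=0$; the inductive step is precisely the covariance algebra above. One should also double-check that $\E[\mathbf e_i\Err{i}^T]=0$ and $\E[\mathbf e_i\mathbf m_i^T]=0$, which hold because $\mathbf e_i$ is the KF innovation of the inner filter (independent of $\vx^{i-1},\vy^{i-1}$, hence of $\hat{\vs}_i$ and $\dhat{\vs}_i$ and $\mathbf m_i$) — this is what makes all the cross terms vanish and the additive-covariance structure go through.
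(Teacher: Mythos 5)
Your proposal is correct and follows essentially the same route as the paper: the directed-information computation via $h(\vy_i\mid\vy^{i-1})-h(\vy_i\mid\vx^i,\vy^{i-1})=h(\psi_i)-h(\mathbf e_i)$ is identical, and your derivation of $K_{Y,i}$ and the covariance recursion by conditioning on $\vy^{i-1}$ (so that the $\dhat{\vs}_i$ terms drop out of all second moments) is the same linear-MMSE/Kalman argument that the paper phrases through the innovations approach of Kailath, where the orthogonality of $\Err{i},\mathbf m_i,\mathbf e_i$ to past innovations plays exactly the role of your inductive decoupling step. The only cosmetic difference is this innovations-projection versus direct-conditioning presentation; the substance, including the handling of the $\dhat{\vs}_i$-dependent dynamics and the cross-term cancellations, matches the paper's proof.
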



By \eqref{eq:kalmanY} $\Psi_{Y,i}$ does not depend on $\Gamma_{2,i}$, and thus the objective function \eqref{eq:objective} does not depend on $\Gamma_{2,i}$ too. This observation is noteworthy, as also noted in \cite{CC_Bambos_ISIT24}, as it indicates that the role of $\Gamma_{2,i}$ is not to maximize the objective, but to minimize the control cost. In the following lemma, we demonstrate that minimizing the control cost over the set $\{\Gamma_{2,i}\}$ is closely related to the optimal LQR control policy.
\begin{lemma}[Optimal LQR cost]\label{lemma:minimizig_Gamma_2}
The LQR cost in \eqref{eq:intro_LQR_cost}, minimized with respect to $\{\Gti\}_{i=1}^n$, is given by 
\begin{align}\label{eq:cost_non_linear}
    \mathcal{J}_n 
    &= \frac1{n} \Tr(\mathbf{cov}(\dhat{\vs}_1) E_1) + \frac{1}{n}\sumT  \Tr(K_{Y,i} \Psi_{Y,i}K_{Y,i}^TE_{i+1})+\frac1{n}\sum_{i=1}^n \left[\Tr((M_i +\Gamma_{1,i}\hat{\Sigma}^\dagger_i\Gamma_{1,i}^T ) R) \right] \nonumber\\
    &\ \ + \frac1{n}\sum_{i=1}^{n+1}\left[ \Tr (\Sigma Q) + \Tr (\hat{\Sigma}_i Q)\right],
\end{align}
where $E_1,\dots,E_{n+1}$ are computed from the LQR control Riccati recursion in \eqref{eq:control_recursion}. Moreover, the minimizer at time $i$ is $\Gti = - K_{\mathtt{LQR},i}$, defined as $\KLQRi = (R+G^TE_{i+1}G)^{-1}G^TE_{i+1}F$.
\end{lemma}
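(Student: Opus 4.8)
\emph{Proof plan.} The plan is to rewrite the LQR cost so that all dependence on $\{\Gti\}_{i=1}^n$ collapses into a standard finite-horizon LQR problem whose "state'' is the observer estimate $\dhat{\vs}_i$. Starting from the cost written via the controller estimate, \eqref{eq:cost_by_hat_s}, I will split $\hat{\vs}_i=\Err{i}+\dhat{\vs}_i$; since $\dhat{\vs}_i=\E[\hat{\vs}_i|\vy^{i-1}]$, the orthogonality principle gives $\E[\hat{\vs}_i^TQ\hat{\vs}_i]=\Tr(\hat{\Sigma}_iQ)+\E[\dhat{\vs}_i^TQ\dhat{\vs}_i]$ for $i=1,\dots,n+1$. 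For the input cost I substitute the policy \eqref{eq:linear_policy} and use that its three summands $\Gi\sd\Err{i}$, $\Gti\dhat{\vs}_i$, $\mathbf{m}_i$ are mutually orthogonal: the first two because $\Err{i}$ is the estimation error (hence orthogonal to the $\vy^{i-1}$-measurable $\dhat{\vs}_i$), and $\mathbf{m}_i$ because it is zero-mean and independent of $(\vx^{i-1},\vy^{i-1})$. Invoking the pseudo-inverse identity $\sd\hat{\Sigma}_i\sd=\sd$ together with the orthogonality constraint \eqref{eq:orthogonal}, which yields $\Gi=\Gi\hat{\Sigma}_i\sd$, the covariance of the first term collapses to $\Gi\sd\Gi^T$, so $\E[\vx_i^TR\vx_i]=\Tr(\Gi\sd\Gi^TR)+\E[\dhat{\vs}_i^T\Gti^TR\Gti\dhat{\vs}_i]+\Tr(M_iR)$. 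Collecting everything, the cost splits into the $\{\Gti\}$-free part $\tfrac1n\sum_{i=1}^{n+1}[\Tr(\Sigma Q)+\Tr(\hat{\Sigma}_iQ)]+\tfrac1n\sum_{i=1}^n\Tr((M_i+\Gi\sd\Gi^T)R)$ plus the residual $\tfrac1n\big(\E[\dhat{\vs}_{n+1}^TQ\dhat{\vs}_{n+1}]+\sum_{i=1}^n\E[\dhat{\vs}_i^TQ\dhat{\vs}_i]+\E[\dhat{\vs}_i^T\Gti^TR\Gti\dhat{\vs}_i]\big)$.

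Next I will identify the residual, using Lemma~\ref{lemma:computing_estimators} and \eqref{eq:dhat_discussion}, as precisely the cost of an LQR problem for the linear system $\dhat{\vs}_{i+1}=F\dhat{\vs}_i+G(\Gti\dhat{\vs}_i)+K_{Y,i}\psi_i$ with state $\dhat{\vs}_i$, input $\vu_i:=\Gti\dhat{\vs}_i$, stage weights $(Q,R)$, terminal weight $E_{n+1}=Q$, and process noise $K_{Y,i}\psi_i$. The key structural fact to verify here is that this is a genuine LQR instance: subtracting $\E[\vy_i|\vy^{i-1}]=(H+J\Gti)\dhat{\vs}_i$ in \eqref{eq:observer_model} gives $\psi_i=(H+J\Gi\sd)\Err{i}+J\mathbf{m}_i+\mathbf{e}_i$, which is jointly Gaussian with and orthogonal to $\dhat{\vs}_i$, hence independent of it; moreover $\Psi_{Y,i}$, $K_{Y,i}$ and $\hat{\Sigma}_i$ depend only on $\{\Gi,M_j\}$ (by \eqref{eq:kalmanY}--\eqref{eq:cov_decoder_err}), so the noise statistics $K_{Y,i}\Psi_{Y,i}K_{Y,i}^T$ are unaffected by the choice of $\{\Gti\}$ being optimized.

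Finally I will run the standard backward dynamic program with quadratic value functions $V_i(\dhat{\vs})=\dhat{\vs}^TE_i\dhat{\vs}+c_i$, $V_{n+1}(\dhat{\vs})=\dhat{\vs}^TQ\dhat{\vs}$. Using independence of $\psi_i$ and $\dhat{\vs}_i$, the Bellman step gives $\E[V_{i+1}(\dhat{\vs}_{i+1})|\dhat{\vs}_i]=\dhat{\vs}_i^T(F+G\Gti)^TE_{i+1}(F+G\Gti)\dhat{\vs}_i+\Tr(K_{Y,i}\Psi_{Y,i}K_{Y,i}^TE_{i+1})+c_{i+1}$, so one minimizes $Q+\Gti^TR\Gti+(F+G\Gti)^TE_{i+1}(F+G\Gti)$ pointwise in the PSD order; completing the square yields the unique minimizing matrix $\Gti=-(R+G^TE_{i+1}G)^{-1}G^TE_{i+1}F=-\KLQRi$, the recursion \eqref{eq:control_recursion} for $E_i$, and $c_i=c_{i+1}+\Tr(K_{Y,i}\Psi_{Y,i}K_{Y,i}^TE_{i+1})$ with $c_{n+1}=0$. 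Taking expectations of $V_1$ over $\dhat{\vs}_1$ gives the minimal residual $\tfrac1n\Tr(\mathbf{cov}(\dhat{\vs}_1)E_1)+\tfrac1n\sumT\Tr(K_{Y,i}\Psi_{Y,i}K_{Y,i}^TE_{i+1})$, and adding back the $\{\Gti\}$-free part reproduces \eqref{eq:cost_non_linear}. I expect the main obstacle to be the second step: carefully certifying the orthogonality/independence bookkeeping so that (i) the covariance of the $\Gi$-term truly collapses to $\Gi\sd\Gi^T$ under \eqref{eq:orthogonal}, and (ii) the effective process noise $K_{Y,i}\psi_i$ in the $\dhat{\vs}$-recursion is statistically independent of the variables $\{\Gti\}$ over which we optimize — only with (ii) does the LQR reduction, and hence the clean optimal gain $-\KLQRi$, follow.
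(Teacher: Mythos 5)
Your proposal is correct and follows essentially the same route as the paper: split $\hat{\vs}_i=\Err{i}+\dhat{\vs}_i$ and $\vx_i$ into its orthogonal components to peel off the $\{\Gti\}$-free terms $\Tr(\Sigma Q)+\Tr(\hat\Sigma_i Q)+\Tr((M_i+\Gi\sd\Gi^T)R)$, then recognize the remainder as a standard finite-horizon LQR problem in the observer state $\dhat{\vs}_i$ with dynamics $\dhat{\vs}_{i+1}=(F+G\Gti)\dhat{\vs}_i+K_{Y,i}\psi_i$ and noise statistics unaffected by $\{\Gti\}$. The only difference is cosmetic: the paper invokes the LQR algorithm as a black box for the optimal gain $-\KLQRi$ and the cost $\frac1n\Tr(\mathbf{cov}(\dhat{\vs}_1)E_1)+\frac1n\sumT\Tr(K_{Y,i}\Psi_{Y,i}K_{Y,i}^TE_{i+1})$, whereas you carry out the backward dynamic program and completion of squares explicitly, which is a valid (indeed more self-contained) rendering of the same step.
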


Combining Lemma \ref{lemma:minimizig_Gamma_2} with Lemma \ref{lemma:Gaussian_linear_policy} shows that an optimal policy for the control-coding problem is 
\begin{align}
    \vx_i&= \Gi \sd \Err{i} - K_{\mathtt{LQR},i} \dhat{\vs}_i + \mathbf{m}_i, \ \ i=1,\dots,n,
\end{align}
where $\Gamma_{1,i}$ and $M_i$ are the decision variables to be optimized. As $\Gamma_{2,i}$ no longer appears, we use $\Gamma_i$ as a shorthand for $\Gamma_{1,i}$. This is a significant simplification, but it is still given as a multi-letter expression. Our next steps are to convert the limiting multi-letter to a single-letter (finite-dimensional) optimization problem through convexification of an upper bound to the $n$-letter problem. In particular, both the objective and the control cost constraint should be written as an affine mappings of the decision variables, but at the cost of relaxing a Riccati recursion (given by a set of equalities) to Riccati inequalities. The first step is to bound $C_n(p)$ by a sequential convex optimization problem (SCOP).

\begin{lemma}\label{lemma:T-letter_SDP}
The $n$-letter control-coding capacity in \eqref{eq:n_letter_capacity} is upper bounded by the SCOP
\begin{align}\label{eq:scop_capacity}
    C_n(p) \leq &\max_{\{\Pi_i,\Gamma_i, \hat{\Sigma}_{i+1}\}_{i=1}^n}  \frac{1}{2n}\sumT\log \det (\Psi_{Y,i}) - \log \det (\Psi)\\
        s.t. \ \ & \frac1{n}\sum_{i=1}^n \left[ \Tr( \hat{\Sigma}_i\KLQRi^T \Psi_{\mathtt{LQR},i}\KLQRi) + \Tr(\Pi_i  \Psi_{\mathtt{LQR},i}) +\Tr(K_p\Psi K_p^TE_{i+1}) \right]\nonumber\\
    &\ \ + 2\frac1{n}\sumT \Tr(\Gamma_i \KLQRi^T \Psi_{\mathtt{LQR},i}) + \Tr (\Sigma Q) + \mathcal{E}_n  \leq p \nonumber
    \\&\Psi_{Y,i} = J\Pi_i J^T +H\hat{\Sigma}_i H^T+H\Gamma_i^TJ^T+ J\Gamma_i H^T+ \Psi \nonumber \\ 
    &  K_{Y,i}=  (F\Gamma_i J^T + F \hat{\Sigma}_i H^T + G\Pi_i J^T + G \Gamma_i H^T +K_{p}\Psi) \Psi_{Y,i}^{-1} \nonumber \\&
    \begin{pmatrix}
        \Pi_i & \Gamma_i\\  \Gamma_i^T & \hat{\Sigma}_i
    \end{pmatrix} \succeq 0, \quad \hat{\Sigma}_{T+1} \succeq 0 \nonumber \\&
    \begin{pmatrix}
     F\hat{\Sigma}_i F^T+F\Gamma_i G^T+ G\Gamma_i F^T + G\Pi_i G^T+  K_{p}\Psi K_{p}^T - \hat{\Sigma}_{i+1}& K_{Y,i}\Psi_{Y,i}\\ 
      \Psi_{Y,i}K_{Y,i}^T & \Psi_{Y,i}
    \end{pmatrix} \succeq 0, \nonumber 
\end{align}
where $\mathcal E_n = \frac1{n}\left(\Tr (( \Sigma + \hat{\Sigma}_{n+1}) Q) + \Tr(\mathbf{cov}(\dhat{\vs}_1) E_1) + \Tr(\hat{\Sigma}_1E_1 - \hat{\Sigma}_{n+1}E_{n+1})\right)$, the constraints hold for $i=1, \dots , T$, $\hat{\Sigma}_1 =0$, and $ \KLQRi, \Psi_{\mathtt{LQR},i},E_i,$ are defined in \eqref{eq:control_recursion}. 
\end{lemma}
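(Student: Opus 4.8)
The plan is to turn the finite–dimensional but nonconvex reformulation of the $n$-letter problem \eqref{eq:n_letter_capacity} supplied by Lemmas~\ref{lemma:Gaussian_linear_policy}--\ref{lemma:minimizig_Gamma_2} into the claimed SCOP by a change of variables, followed by a single relaxation of a Riccati equality to a Riccati inequality; since that relaxation only enlarges the feasible set, the SCOP optimum upper bounds $C_n(p)$. Concretely, by Lemmas~\ref{lemma:Gaussian_linear_policy} and~\ref{lemma:minimizig_Gamma_2} it suffices to optimize over the policies $\vx_i = \Gamma_i\sd\Err{i} - \KLQRi\dhat{\vs}_i + \mathbf{m}_i$ (writing $\Gamma_i$ for $\Gamma_{1,i}$ as after Lemma~\ref{lemma:minimizig_Gamma_2}), equivalently over the matrix sequences $\{\Gamma_i,\,M_i\succeq0\}_{i=1}^n$ subject to the orthogonality constraint \eqref{eq:orthogonal}; by Lemma~\ref{lemma:computing_estimators} the normalized objective is $\frac1{2n}\sumT\log(\det\Psi_{Y,i}/\det\Psi)$ and the control cost equals \eqref{eq:cost_non_linear}, where $\Psi_{Y,i},K_{Y,i}$ and $\hat{\Sigma}_i$ are the quantities generated by \eqref{eq:cov_decoder_err} with $\hat{\Sigma}_1=0$. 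The remaining work is algebraic.

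First I would introduce $\Pi_i \triangleq M_i + \Gamma_i\sd\Gamma_i^T$. Since $\hat{\Sigma}_i\succeq0$, the generalized Schur complement criterion gives that $\bigl(\begin{smallmatrix}\Pi_i & \Gamma_i\\ \Gamma_i^T & \hat{\Sigma}_i\end{smallmatrix}\bigr)\succeq0$ holds if and only if $M_i=\Pi_i-\Gamma_i\sd\Gamma_i^T\succeq0$ \emph{and} the range condition $\Gamma_i(I-\hat{\Sigma}_i\sd)=0$ is met; hence the first LMI in \eqref{eq:scop_capacity} is exactly the conjunction of $M_i\succeq0$ with the orthogonality constraint \eqref{eq:orthogonal}. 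Using its consequences $\Gamma_i\sd\hat{\Sigma}_i=\Gamma_i$, $\hat{\Sigma}_i\sd\Gamma_i^T=\Gamma_i^T$ together with $\sd\hat{\Sigma}_i\sd=\sd$, I would expand \eqref{eq:kalmanY}, \eqref{eq:Kalman_dec_estimator} and \eqref{eq:cov_decoder_err}; every occurrence of $\sd$ cancels --- for instance $(F+G\Gamma_i\sd)\hat{\Sigma}_i(F+G\Gamma_i\sd)^T = F\hat{\Sigma}_iF^T + F\Gamma_i^TG^T + G\Gamma_i F^T + G\Gamma_i\sd\Gamma_i^TG^T$, and adding $GM_iG^T$ replaces the last term by $G\Pi_iG^T$ --- reproducing verbatim the formulas for $\Psi_{Y,i}$ and $K_{Y,i}$ in \eqref{eq:scop_capacity} and recasting \eqref{eq:cov_decoder_err} as the equality $\hat{\Sigma}_{i+1} = F\hat{\Sigma}_iF^T + F\Gamma_i^TG^T + G\Gamma_i F^T + G\Pi_iG^T + K_p\Psi K_p^T - K_{Y,i}\Psi_{Y,i}K_{Y,i}^T$. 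A Schur complement shows this equality is precisely the case of equality in the second LMI of \eqref{eq:scop_capacity}, and replacing ``$=$'' by ``$\preceq$'' (the LMI positivity condition) is the only place where convexity is bought at the price of a relaxation.

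Next I would rewrite \eqref{eq:cost_non_linear}. Substituting the rearranged recursion $K_{Y,i}\Psi_{Y,i}K_{Y,i}^T = F\hat{\Sigma}_iF^T + F\Gamma_i^TG^T + G\Gamma_i F^T + G\Pi_iG^T + K_p\Psi K_p^T - \hat{\Sigma}_{i+1}$ into $\Tr(K_{Y,i}\Psi_{Y,i}K_{Y,i}^TE_{i+1})$, and invoking the LQR identities $G^TE_{i+1}F=\Psi_{\mathtt{LQR},i}\KLQRi$ and $F^TE_{i+1}F+Q = E_i+\KLQRi^T\Psi_{\mathtt{LQR},i}\KLQRi$ from \eqref{eq:control_recursion}, the terms regroup: $\Tr(M_iR)+\Tr(\Gamma_i\sd\Gamma_i^TR)+\Tr(G\Pi_iG^TE_{i+1})=\Tr(\Pi_i\Psi_{\mathtt{LQR},i})$; the two cross terms collapse to $2\Tr(\Gamma_i\KLQRi^T\Psi_{\mathtt{LQR},i})$; and $\Tr(\hat{\Sigma}_iF^TE_{i+1}F)+\Tr(\hat{\Sigma}_iQ)=\Tr(\hat{\Sigma}_iE_i)+\Tr(\hat{\Sigma}_i\KLQRi^T\Psi_{\mathtt{LQR},i}\KLQRi)$. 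The residue $\Tr(\hat{\Sigma}_iE_i)-\Tr(\hat{\Sigma}_{i+1}E_{i+1})$ telescopes over $i=1,\dots,n$, and since $\hat{\Sigma}_1=0$ (so also $\dhat{\vs}_1=0$, $\mathbf{cov}(\dhat{\vs}_1)=0$) the only boundary leftovers are exactly those collected into $\mathcal E_n$. This turns \eqref{eq:cost_non_linear} into the affine cost constraint of \eqref{eq:scop_capacity}.

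Finally, the reparameterization $(\{\Gamma_i,M_i\})\mapsto(\{\Pi_i,\Gamma_i\}$, with $\hat{\Sigma}_i$ read off from the recursion$)$ maps each admissible policy to a feasible tuple of the version of \eqref{eq:scop_capacity} in which the second LMI is imposed with equality, preserving both the objective $\frac1n I(\vx^n\rightarrow\vy^n)$ and the value of the control cost (now the affine expression, which is $\le p$). Relaxing the second LMI to the PSD inequality only adds feasible points, so the optimum of \eqref{eq:scop_capacity} is at least $C_n(p)$, which is the claim. I expect the main obstacle to be the cost-rewriting step, where the telescoping and the cascade of LQR Riccati substitutions must be done carefully enough that precisely the constants grouped in $\mathcal E_n$ (and nothing else) survive; a secondary subtlety is that $\hat{\Sigma}_i$ may be singular, so the generalized Schur complement and the pseudo-inverse identities $\Gamma_i\sd\hat{\Sigma}_i=\Gamma_i$, $\sd\hat{\Sigma}_i\sd=\sd$ must be used rather than an invertibility shortcut.
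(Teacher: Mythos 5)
Your proposal is correct and follows essentially the same route as the paper's proof: introduce $\Pi_i = M_i + \Gamma_i\hat{\Sigma}_i^\dagger\Gamma_i^T$, use the orthogonality/pseudo-inverse identities to make $\Psi_{Y,i}$, $K_{Y,i}$ and the cost affine in $(\Pi_i,\Gamma_i,\hat{\Sigma}_i)$, rewrite $\Tr(K_{Y,i}\Psi_{Y,i}K_{Y,i}^TE_{i+1})$ via the estimation and control Riccati recursions with the telescoping boundary terms collected into $\mathcal{E}_n$, and relax the Riccati equality and the $M_i\succeq0$ constraint to the two LMIs via Schur complements. No gaps.
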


The significance of Lemma \ref{lemma:T-letter_SDP} is that the objective is convex and the constraints are linear functions of the decision variables. We name this convex optimization problem as a SCOP since the second LMI constraint involves two subsequent decision variables $\hat \Sigma_i$ and $\hat \Sigma_{i+1}$. 

\subsection{Proofs of the Theorems \ref{theorem:upper_bound}-\ref{theorem:scalar}}\label{subsec:theorems_proof}
In the next proof we utilize the convexity, along with the limit computation, to establish the single-letter upper bound in Theorem \ref{theorem:upper_bound}. 
\begin{proof}[Proof of Theorem \ref{theorem:upper_bound}]
Throughout the derivations so far, we studied the $n$-letter capacity $C_{n}(p)$ in \eqref{eq:scop_capacity}. A standard converse argument can relate this quantity to the feedback capacity by showing that for any~$n$,
\begin{align}
    C_{\mathtt{LQG}}(p)\leq \frac1{n}C_{n}(p) + \delta_n,
\end{align}
where $\delta_n \rightarrow 0$ is resulted from a Fano's inequality. As Lemma \ref{lemma:T-letter_SDP} provides a bound on $\frac1{n}C_{n}(p)$ in terms of a SCOP, the remaining step is to show that the SCOP can be further upper bounded by its single-letter counterpart in Theorem \ref{theorem:upper_bound}.

Define the averaged decision variables as
\begin{align}\label{eq:convex_combinations}
    \overline{\Pi}_n= \frac{1}{n}\sum_{i=1}^n \Pi_i, \overline{\Gamma}_n= \frac{1}{n} \sum_{i=1}^n\Gamma_i, \overline{\hat{\Sigma}}_n= \frac{1}{n} \sum_{i=1}^n\hat{\Sigma}_i   
\end{align}

We start by showing the objective function is upper bounded by the single-letter evaluated at the averaged variables. Consider the obejctive function in Lemma \ref{lemma:T-letter_SDP},
\begin{align}
    \frac{1}{n}\sum_{i=1}^n \log \det (\Psi_{Y,i}) &= \frac{1}{n}\sum_{i=1}^n \log \det (J\Pi_i J^T +H\hat{\Sigma}_i H^T+H \Gamma_i^TJ^T + J\Gamma_i H^T+ \Psi) \nonumber\\
    &\leq \log \det\left(J\overline{\Pi}_n J^T +H\overline{\hat{\Sigma}}_n H^T+H \overline{\Gamma}_n^TJ^T + J\overline{\Gamma}_n H^T +\Psi\right) \nonumber\\
    &\triangleq \log \det\left(\overline{\Psi}_{Y,n}\right),
\end{align}
where the inequality follows from Jensen's inequality by utilizing the concavity of $\log \det ( \cdot )$ and the fact that the argument is an affine function of the decision variables.

We proceed to show that the decision variables satisfy the single-letter constraints that appear in the optimization of \eqref{eq:main_UB}:
\begin{align*}
     &\begin{pmatrix}
        \overline{\Pi}_n & \overline{\Gamma}_n\\  \overline{\Gamma}_n^T & \overline{\hat{\Sigma}}_n
    \end{pmatrix} \succeq 0 \\
    &\Tr( \overline{\hat{\Sigma}}_n K_{\mathtt{LQR}}^T \Psi_{\mathtt{LQR}}K_{\mathtt{LQR}}) + \Tr(\overline{\Pi}_n  \Psi_{\mathtt{LQR}}) +\Tr(K_p\Psi K_p^TE) + 2 \Tr( \overline{\Gamma}_n K_{\mathtt{LQR}}^T \Psi_{\mathtt{LQR}}) + \Tr (\Sigma Q)  \leq p \\
    &\begin{pmatrix}
     F\overline{\hat{\Sigma}}_n F^T+F\overline{\Gamma}_n G^T+ G\overline{\Gamma}_n F^T + G\overline{\Pi}_n G^T+  K_{p}\Psi K_{p}^T - \overline{\hat{\Sigma}}_{n}& \overline{K_{Y,n}\Psi_{Y,n}}\\ 
    \overline{K_{Y,n}\Psi_{Y,n}}^T & \overline{\Psi}_{Y,n}
    \end{pmatrix} \succeq 0. 
\end{align*}
where $\overline{K_{Y,n}\Psi_{Y,n}}= F\overline{\Gamma}_n J^T + F \overline{\hat{\Sigma}}_n H^T + G\overline{\Pi}_n J^T + G \overline{\Gamma}_n H^T +K_{p}\Psi$ and $\overline{\Psi}_{Y,n}= H\overline{\Gamma}_n J^T + H \overline{\hat{\Sigma}}_n H^T + J\overline{\Pi}_n J^T + J \overline{\Gamma}_n H^T +\Psi$. 

We will show that these constraints are satisfied by studying the per-time constraints of the $n$-letter optimization problem. For the first LMI constraint, the single-letter constraint can be directly written as the average of the per-time constraints
\begin{align}
    \begin{pmatrix}
        \overline{\Pi}_n & \overline{\Gamma}_n\\  \overline{\Gamma}_n^T & \overline{\hat{\Sigma}}_n
    \end{pmatrix} = \frac1{n} \sum_{i=1}^n
    \begin{pmatrix}
        \Pi_i & \Gamma_i\\ \Gamma_i^T & \hat{\Sigma}_i
    \end{pmatrix} \succeq 0. 
\end{align}
Similarly the Riccati LMI constraint is satisfied at the convex combination
\begin{align}
    &0\preceq\frac1{n}\sumT 
     \begin{pmatrix}
     F\hat{\Sigma}_i F^T+F\Gamma_i G^T+ G\Gamma_i F^T + G\Pi_i G^T+  K_{p}\Psi K_{p}^T - \hat{\Sigma}_{i+1}& K_{Y,i}\Psi_{Y,i}\\ 
     \Psi_{Y,i} K_{Y,i}^T & \Psi_{Y,i}
    \end{pmatrix} \nonumber \\
    \label{eq:riccati_convex}&=\begin{pmatrix}
     F\overline{\hat{\Sigma}}_n F^T+F\overline{\Gamma}_n G^T+ G\overline{\Gamma}_n F^T + G\overline{\Pi}_n G^T+  K_{p}\Psi K_{p}^T - \overline{\hat{\Sigma}}_{n}& \overline{K_{Y,n}\Psi_{Y,n}}\\ 
    \overline{K_{Y,n}\Psi_{Y,n}}^T & \overline{\Psi}_{Y,n}
    \end{pmatrix} \\
    &- \begin{pmatrix}
        \frac1{n}(\hat{\Sigma}_{n+1}-\hat{\Sigma}_1)& 0 \\0&0
    \end{pmatrix}  \nonumber
\end{align}
since $\hat{\Sigma}_1=0, \hat{\Sigma}_{n+1} \succeq 0 $ we conclude that also the Riccati LMI on the convex combination in \eqref{eq:riccati_convex} is also P.S.D. 

The last constraint is the LQR control cost, we first prove two simple claims.
\underline{Claim 1:} let $A,B$ matrices and $||A||_F < \infty, ||B||_F\rightarrow 0$ then $\Tr(AB) \rightarrow 0$. \underline{Proof:} $|\Tr(AB)| =|<A^T,B>|\leq ||A||_F  ||B||_F \rightarrow 0$.\\
\underline{Claim 2:} let $\{A_i\}\rightarrow A, \{B_i\}\rightarrow B$, then $\frac{1}{n} \sumT\Tr(A_i B_i) = \Tr(AB) + \epsilon_n$ where $\epsilon_n \to 0$.\\
\underline{Proof:} $\frac{1}{n} \sumT \Tr(A_i B_i) = \frac{1}{n} \Tr((A_i + A-A) (B_i+B-B)) = \Tr(AB) + \frac{1}{n} \sumT \Tr(A(B_i-B))+\Tr((A_i-A)B) + \Tr((A_i -A)(B_i-B)) = \Tr(AB) + \epsilon_n$ where $\epsilon_n \to 0$. 
Let $\overline{K}_{\mathtt{LQR},n} \stackrel{\Delta}= \frac1{n} \sum_{i=1}^n K_{\mathtt{LQR},i} ,\overline{\Psi}_{\mathtt{LQR},n} \stackrel{\Delta}= \frac1{n} \sum_{i=1}^n \Psi_{\mathtt{LQR},i}$ denote the averaged constants of the LQR variables in Section \ref{subsec:lqr}, and under the assumptions, $\overline{\Psi}_{\mathtt{LQR},n} \rightarrow \Psi_{\mathtt{LQR}}$ and $\overline{K}_{\mathtt{LQR},n} \rightarrow K_{\mathtt{LQR}}$, and therefore $\overline{K_{\mathtt{LQR},n}^T \Psi_{\mathtt{LQR},n}} =\frac1{n} \sum_{i=1}^n  K_{\mathtt{LQR},i}^T\Psi_{\mathtt{LQR},i}$ and $K_{\mathtt{LQR},n}^T\Psi_{\mathtt{LQR},n}\to  K_{\mathtt{LQR}}^T\Psi_{\mathtt{LQR}}$.
We utilize the control constraint in Lemma \ref{lemma:T-letter_SDP} to show that the convex combination of the decision variables adhere to the control budget $p$ as $n$ grows. 
\begin{align*}
p&\ge \frac1{n}\sum_{i=1}^n \left[ \Tr( \hat{\Sigma}_i\KLQRi^T \Psi_{\mathtt{LQR},i}\KLQRi) + \Tr(\Pi_i  \Psi_{\mathtt{LQR},i}) +\Tr(K_p\Psi K_p^TE_{i+1}) \right]\\
    &\ \ + 2\frac1{n}\sumT \Tr(\Gamma_i \KLQRi^T \Psi_{\mathtt{LQR},i}) + \Tr (\Sigma Q) + \mathcal{E}_n \\
   &=\Tr( \overline{\hat{\Sigma}}_n K_{\mathtt{LQR}}^T \Psi_{\mathtt{LQR}}K_{\mathtt{LQR}}) + \Tr(\overline{\Pi}_n  \Psi_{\mathtt{LQR}}) +\Tr(K_p\Psi K_p^TE) + 2 \Tr( \overline{\Gamma}_n K_{\mathtt{LQR}}^T \Psi_{\mathtt{LQR}}) + \Tr (\Sigma Q) + \epsilon_n' 
\end{align*}
where the inequality follows from Lemma \ref{lemma:T-letter_SDP}, and the equality follows from Claim $2$ with $\epsilon_n' := \mathcal{E}_n+\epsilon_n  \rightarrow 0$. This shows that the LQR cost constraint is satisfied asymptotically. The compactness of the optimization domain implies that there exists a limiting point that satisfies the constraints.
\end{proof}

\begin{proof}[Proof of Theorem \ref{theorem:achievabilty}]
To show the tightness of the upper bound, we introduce a general lower bound that is  implied from Lemmas \ref{lemma:Gaussian_linear_policy} - \ref{lemma:minimizig_Gamma_2}. We will then utilize this lower bound with the optimal parameters from the upper bound. 

For a time-invariant policy, given by the pair $(\overline{\Gamma},M)$, we have the state-space model from Lemma \ref{lemma:Gaussian_linear_policy}
\begin{align*}
    \hat{\vs}_{i+1} &= (F+G\overline{\Gamma}) \Err{i} + (F-K_{\mathtt{LQR}}G ) \dhat{\vs}_i + G\mathbf{m}_i + K_p \mathbf{e}_i \nonumber\\
    \vy_i&=(H+J\overline{\Gamma}) \Err{i} + (H-K_{\mathtt{LQR}}J)\dhat{\vs}_i + J \mathbf{m}_i  + \mathbf{e}_i.
\end{align*}
This is a standard state-space model that obeys the Riccati recursion 
\begin{align}\label{eq:LB_rec_hat_sigma}
    \hat{\Sigma}_{i+1} = (F+G\overline{\Gamma} )\hat{\Sigma}_i(F+G\overline{\Gamma})^T + GMG^T + K_p\Psi K_p^T -\overline{K}_{Y,i}\overline{\Psi}_{Y,i}\overline{K}_{Y,i}^T
\end{align}
where $\overline{K}_{Y,i} = ((F+G\overline{\Gamma})\hat{\Sigma}_i(H+J\overline{\Gamma})^T + GMJ^T +K_p\Psi)\overline{\Psi}_{Y,i}^{-1}$ and $\overline{\Psi}_{Y, i} = (H+J\overline{\Gamma})\hat{\Sigma}_i (H+J\overline{\Gamma})^T + J M J^T +  \Psi $.

If $\hat{\Sigma}_i$ in \eqref{eq:LB_rec_hat_sigma} converges to a stationary point, we can compute it by solving the Riccati equation 
\begin{align}\label{eq:LB_ricatti_hat_sigma}
    \hat{\Sigma}= (F+G \overline{\Gamma})\hat{\Sigma}(F+G \overline{\Gamma})^T + GMG^T + K_p\Psi K_p^T -\overline{K}_{Y}\overline{\Psi}_{Y}\overline{K}_{Y}^T
\end{align}
with $\overline{K}_Y, \overline{\Psi}_Y$ are as before, but evaluated at the stationary point $\hat\Sigma$. This convergence is guaranteed under standard regularity conditions, see Appendix \ref{app:sub_hat_sigma}.

Since the convergence has a geometric rate, by Lemma \ref{lemma:minimizig_Gamma_2} and Lemma \ref{lemma:computing_estimators}, a policy whose Riccati recursion \eqref{eq:LB_rec_hat_sigma} converges to $\hat\Sigma$ achieves a lower bound on the LQG system capacity
\begin{align*}
    C(p^*)\ge \frac12\log\det(\overline{\Psi}_{Y}) - \frac12\log\det(\Psi) 
\end{align*}
where $\overline{\Psi}_{Y} = (H+J\overline{\Gamma})\hat{\Sigma} (H+J\overline{\Gamma})^T + J M J^T +  \Psi$ and the control budget is equal to 
\begin{align*}
    p^*&=\Tr( \hat{\Sigma}K_{\mathtt{LQR}}^T \Psi_{\mathtt{LQR}}K_{\mathtt{LQR}}) + \Tr(\Pi  \Psi_{\mathtt{LQR}}) +\Tr(K_p\Psi K_p^TE) + 2 \Tr(\overline{\Gamma} K_{\mathtt{LQR}}^T \Psi_{\mathtt{LQR}}) + \Tr (\Sigma Q).
\end{align*}

We proceed to prove Theorem \ref{theorem:achievabilty} based on the lower bound with the particular parameters
\begin{align}\label{eq:proof_2_param}
    \overline{\Gamma}&=\Gamma^* \hat{\Sigma}^{*\dagger} \nonumber\\
    M& = \Pi^* - \Gamma^*\hat{\Sigma}^{*\dagger}\Gamma^{*T}
\end{align}
where $\Gamma^*, \hat{\Sigma}^*, \Pi^*$ are the optimal decision variables from the optimization \eqref{eq:main_UB}.
The regularity conditions in Theorem \ref{theorem:achievabilty} guarantee that the Riccati equation \eqref{eq:LB_ricatti_hat_sigma} has a unique solution, denoted as $\hat{\Sigma}_{LB}$.

We proceed to show that $\hat \Sigma^*$ is also a solution to the Riccati equation, and therefore, $\hat \Sigma_{LB} = \hat \Sigma^*$. To simplify the derivation, we denote the Riccati equation as $f(\hat \Sigma)$ with $f(\hat \Sigma_{LB})=0$.
\begin{align}
    f(\hat \Sigma^*) &= (F+G \overline{\Gamma})\hat{\Sigma}^*(F+G \overline{\Gamma})^T + GMG^T + K_p\Psi K_p^T -\overline{K}_{Y}\overline{\Psi}_{Y}\overline{K}_{Y}^{T} - \hat{\Sigma}^* \nonumber\\
    &\stackrel{(a)}= (F+G \Gamma^* \hat{\Sigma}^{*\dagger})\hat{\Sigma}^*(F+G \Gamma^* \hat{\Sigma}^{*\dagger})^T + G(\Pi^* - \Gamma^*\hat{\Sigma}^{*\dagger}\Gamma^{*T})G^T + K_p\Psi K_p^T -\overline{K}_{Y}\overline{\Psi}_{Y}\overline{K}_{Y}^{T} - \hat{\Sigma}^* \nonumber\\
    &\stackrel{(b)}=F\hat{\Sigma}^* F^T+F\Gamma^{*T}G^T+ G\Gamma^* F^T + G\Pi^* G^T+  K_{p}\Psi K_{p}^T - K_{Y}^*\Psi_{Y}^*K_{Y}^{*T}- \hat{\Sigma}^* \nonumber\\
    &\stackrel{(c)}=F\hat{\Sigma}^* F^T+F\Gamma^{*T}G^T+ G\Gamma^* F^T + G\Pi^* G^T+  K_{p}\Psi K_{p}^T - K_{Y}^*\Psi_{Y}^*K_{Y}^{*T}- \hat{\Sigma}^* \nonumber\\
    &\stackrel{(d)}= 0,
\end{align}
where $(a)$ follows from the policy parameters in \eqref{eq:proof_2_param}, $(b)$ follows from the first LMI constraint in \eqref{eq:main_UB} implying that $\Gamma(I- \hat \Sigma^\dagger \hat \Sigma)=0$, $(c)$ follows from 
\begin{align*}
    \overline{K}_{Y} &= ((F+G\Gamma^* \hat{\Sigma}^{*\dagger}) \hat{\Sigma}^*(H+J\Gamma^* \hat{\Sigma}^{*\dagger})^T + GMJ^T +K_p\Psi)\Psi_{Y}^{*-1} \\&=(F\Gamma^* J^T + F \hat{\Sigma}^* H^T + G\Pi^* J^T + G \Gamma^* H^T +K_{p}\Psi) \Psi_{Y}^{*-1}\\ 
    &= K_Y^*,
\end{align*}
and $(d)$ follows from the assumed sufficient condition in Theorem \ref{theorem:achievabilty} that the Riccati equation  \eqref{eq:main_ricc_eq_sufficient} is satisfied. Since the Riccati equation $f(\cdot)$ has a unique PSD solution, we conclude that $\hat{\Sigma}^* = \hat\Sigma_{LB}$.

\end{proof}
\begin{proof}[Proof of Theorem \ref{theorem:scalar}]
The main idea of the proof is to show via the KKT conditions that the optimal solution of the upper bound satisfies the sufficient condition in Theorem \ref{theorem:achievabilty}.

In the scalar case, the maximization of $\log \det(\Psi_Y) - \log \det(\Psi)$ can be reduced to the maximization of $\Psi_Y$, i.e., the optimization \eqref{eq:main_UB} is equivalent to the optimization of $\Psi_Y = \exp(C_{\mathtt{LQG}}(p))$ given by
    \begin{align}\label{eq:scalar_optimization}
    &\max_{\Pi,\Gamma, \hat{\Sigma}} H^2 \hat{\Sigma} + J^2\Pi + 2 J H \Gamma +\Psi\\
        s.t. \quad  
    &g_1(\hat{\Sigma}, \Pi, \Gamma) :=
    \begin{pmatrix}
        \Pi & \Gamma \\ \Gamma &\hat{\Sigma}
    \end{pmatrix}\succeq 0\nonumber \\
     &g_2(\hat{\Sigma}, \Pi, \Gamma) := p-\Sigma Q-  K_p^2\Psi E - K_{\mathtt{LQR}}^2\Psi_{\mathtt{LQR}}\hat{\Sigma}-\Psi_{\mathtt{LQR}}\Pi  - 2K_{\mathtt{LQR}}\Psi_{\mathtt{LQR}}\Gamma \geq 0 \nonumber\\
     &g_3(\hat{\Sigma}, \Pi, \Gamma) :=((F^2 -1)\hat{\Sigma}+ 2FG\Gamma+ G^2\Pi +K_p^2\Psi) -\frac{(F \hat{\Sigma} H + GJ\Pi +(FJ+GH)\Gamma +K_p\Psi)^2}{ H^2 \hat{\Sigma} + J^2\Pi + 2 J H \Gamma +\Psi} \geq 0, \nonumber
    \end{align}
    where $K_{\mathtt{LQR}}$, $\Psi_{\mathtt{LQR}}$ and $E$ are the (scalar) constants defined in \eqref{eq:riccati_sigma}, $g_3$ is the Schur complement of the second two-dimensional LMI constraint in \eqref{eq:main_UB} in the scalar case. 

    The Lagrangian of \eqref{eq:scalar_optimization} is given by 
    \begin{align*}
     L(\Pi, &\Gamma, \hat{\Sigma}, \Lambda_{1},\lambda_{2},\lambda_{3}) = H^2 \hat{\Sigma} + J^2\Pi + 2 J H \Gamma + \Psi+ \Tr(g_1 \Lambda_1) +g_2 \lambda_2 +g_3 \lambda_3 
    \end{align*}
    where $ \Lambda_1 = 
    \begin{pmatrix}
        \Lambda_{1,1} & \Lambda_{1,2} \\ \Lambda_{1,2} & \Lambda_{2,2}
    \end{pmatrix}\succeq 0, \lambda_2 \geq 0 $ and $\lambda_3 \geq 0$ are the dual variables. 
    
    By the KKT optimality conditions, the stationarity constraints are given by
\begin{align}\label{eq:proof_stationary1}
    &\frac{\partial L}{\partial \Pi} = J^2+ \Lambda_{1,1} - \Psi_{\mathtt{LQR}}\lambda_2 + \frac{\partial g_3}{\partial \Pi}\lambda_3 = 0 \nonumber\\
    &\frac{\partial L}{\partial \Gamma} = 2HJ +2\Lambda_{1,2} -2K_{\mathtt{LQR}}\Psi_{\mathtt{LQR}}\lambda_2  +\frac{\partial g_3}{\partial \Gamma}\lambda_3  = 0 \nonumber\\
    &\frac{\partial L}{\partial \hat{\Sigma}} = H^2 + \Lambda_{2,2} -K_{\mathtt{LQR}}^2\Psi_{\mathtt{LQR}}\lambda_2 + \frac{\partial g_3}{\partial \hat{\Sigma}}\lambda_3 = 0.
\end{align} 
    and the complementary slackness conditions are
    \begin{align*}
    &g_1(\hat{\Sigma}^*, \Pi^*, \Gamma^*)\Lambda_1=0 \\
    &g_2(\hat{\Sigma}^*, \Pi^*, \Gamma^*)\lambda_2 =0 \\
    &g_3(\hat{\Sigma}^*, \Pi^*, \Gamma^*)\lambda_3 =0,
\end{align*}

In the first part of the proof we show that the sufficient condition from Theorem \ref{theorem:achievabilty} is satisfied. The Riccati equation in this case is given by $g_3(\hat{\Sigma}^*, \Pi^*, \Gamma^*)=0$. Assume by contradiction that $g_3(\hat{\Sigma}^*, \Pi^*, \Gamma^*)>0$, and therefore by the complementary slackness condition $\lambda_3=0$. 
We rule out the case $\hat{\Sigma} =0$: if $\hat{\Sigma} =0$, the feasibility constraint of $g_1$ implies $\Gamma=0$. If $\Pi = 0$, the objective is zero, and we obtain a contradiction to the assumption that the capacity is positive. Thus we can assume $\Pi > 0$. By the complementary slackness condition on $g_1$, 
$$\begin{pmatrix}
        \Pi & \Gamma \\ \Gamma &\hat{\Sigma}
    \end{pmatrix}\begin{pmatrix}
        \Lambda_{1,1} & \Lambda_{1,2} \\ \Lambda_{1,2} & \Lambda_{2,2}
    \end{pmatrix} = \begin{pmatrix}
        \Pi\Lambda_{1,1}+\Gamma \Lambda_{1,2}  & \Pi\Lambda_{1,2}++\Gamma \Lambda_{2,2} \\ \Gamma\Lambda_{1,1}+ \hat{\Sigma}\Lambda_{1,2} & \Gamma\Lambda_{1,2}+ \hat{\Sigma}\Lambda_{2,2}
    \end{pmatrix} = 0 $$
and since $\Pi>0$ we get $\Lambda_{1,1}=\Lambda_{1,2} = 0$. Substituting these in the stationarity conditions in \eqref{eq:proof_stationary1} provides with
\begin{align*}
    0&= J^2  - \Psi_{\mathtt{LQR}}\lambda_2 \\
    0&= 2HJ  -2K_{\mathtt{LQR}}\Psi_{\mathtt{LQR}}\lambda_2   \\
    0&= H^2 + \lambda_{2,2} -K_{\mathtt{LQR}}^2\Psi_{\mathtt{LQR}}\lambda_2.
\end{align*}
By substituting $J^2  = \Psi_{\mathtt{LQR}}\lambda_2$ into the second equation, we get $J(H-K_{\mathtt{LQR}}J)=0$, since we assumed that $H \neq K_{\mathtt{LQR}}J$ it implies $J=0$, but then $H=0$ and the capacity is zero. All in all, we showed $\hat{\Sigma} \neq 0$.

As $\hat{\Sigma} \neq 0$, it is convenient to replace the first LMI constraint with its Schur complement. In particular, we replace the LMI constraint $g_1 \succeq 0$ in \eqref{eq:scalar_optimization} with the scalar constraints
\begin{align*}
     g_4(\hat{\Sigma}, \Pi, \Gamma) &=\quad \hat{\Sigma} >0\\
     g_5(\hat{\Sigma}, \Pi, \Gamma) &=\Pi - \frac{\Gamma^2}{\hat{\Sigma}} \geq 0,
\end{align*}
The modified Lagrangian is then 
\begin{align*}
     L(\Pi, &\Gamma, \hat{\Sigma}, \lambda_{2},\lambda_{3}, \lambda_4,\lambda_{5}) = H^2 \hat{\Sigma} + J^2\Pi + 2 J H \Gamma +\Psi +  g_2 \lambda_2 +g_3 \lambda_3+\hat{\Sigma}\lambda_4 +g_5 \lambda_5,
\end{align*}
where $\lambda_4 \geq 0, \lambda_5 \geq 0$ are the additional dual variables. 

The stationarity conditions under the assumption $\lambda_3=0$, are
\begin{align}\label{eq:stationarity}
    &\frac{\partial L}{\partial \Pi} = J^2+ \lambda_5 - \Psi_{\mathtt{LQR}}\lambda_2 + (G-K_YJ)^2\lambda_3 = 0 \nonumber\\
    &\frac{\partial L}{\partial \Gamma} = 2HJ -2\frac{\Gamma}{\hat{\Sigma}}\lambda_5 -2K_{\mathtt{LQR}}\Psi_{\mathtt{LQR}}\lambda_2 + 2(G-K_YJ)(F-K_YH)\lambda_3 = 0 \nonumber\\
    &\frac{\partial L}{\partial \hat{\Sigma}} = H^2 + \lambda_4 + \frac{\Gamma^2}{\hat{\Sigma}^2}\lambda_5 -K_{\mathtt{LQR}}^2\Psi_{\mathtt{LQR}}\lambda_2  + ((F-K_YH)^2-1)\lambda_3  = 0.
\end{align}
The complementary slackness conditions are 
\begin{align*}
    &g_i(\hat{\Sigma}^*, \Pi^*, \Gamma^*)\lambda_i =0, \ \ \ i=2,\dots,5. 
\end{align*}
We showed $\hat{\Sigma} \neq0$ and therefore by the complementary slackness we have $\lambda_4=0$.
By the first stationarity condition in \eqref{eq:stationarity}, we have $J^2 + \lambda_5=\Psi_{\mathtt{LQR}}\lambda_2$. We show now $J\neq0$. If $J=0$, then $\lambda_5 = \Psi_{\mathtt{LQR}}\lambda_2$ and by the second stationarity condition $\frac{\Gamma}{\hat{\Sigma}}\lambda_5  =- K_{\mathtt{LQR}}\lambda_5$. By the third stationarity condition we obtain $H^2 = 0$ so in this case the capacity is zero.
Assume that $\lambda_5 =0$, then $J^2=\Psi_{\mathtt{LQR}}\lambda_2$, then by the second stationarity condition $J(H-K_{\mathtt{LQR}}J) = 0$, in contradiction to our assumptions, $J \neq 0, H\neq K_{\mathtt{LQR}}J $, so we can conclude that $\lambda_5 > 0$. 
Therefore by the complementary slackness conditions
\begin{align}\label{eq:pi}
    &\Pi = \frac{\Gamma^2}{\hat{\Sigma}}.
\end{align}
Then we can rewrite the second stationarity condition in \eqref{eq:stationarity} as 
\begin{align}\label{eq:sec_zeta}
    HJ - \zeta \lambda_5 -K_{\mathtt{LQR}}(J^2 +\lambda_5) =0
\end{align}
where $\zeta = \frac{\Gamma}{\hat{\Sigma}}$.
Solving for $\lambda_5$ we have 
\begin{align*}
    \lambda_5 = \frac{J(H-K_{\mathtt{LQR}}J)}{K_{\mathtt{LQR}}+\zeta}
\end{align*}
The third stationarity condition in \eqref{eq:stationarity} can be written as
\begin{align*}
    &H^2 + \zeta^2 \lambda_5 -K_{\mathtt{LQR}}^2(J^2 +\lambda_5) =0\\
    &\to \lambda_5 = \frac{H^2-(K_{\mathtt{LQR}}J)^2}{K_{\mathtt{LQR}}^2-\zeta^2}
\end{align*}
Notice that if $K_{\mathtt{LQR}}^2 = \zeta^2$, then we have $H^2= (K_{\mathtt{LQR}}J)^2$, since $H \neq K_{\mathtt{LQR}}J$ by our assumption, we just assume $H = -K_{\mathtt{LQR}}J$, but then in \eqref{eq:sec_zeta} we have $-K_{\mathtt{LQR}}J^2 = (K_{\mathtt{LQR}}+\zeta) \lambda_5$, if $K_{\mathtt{LQR}} = - \zeta$ then $J=0$ and if $K_{\mathtt{LQR}} = \zeta$ then $J^2 = -2\lambda_5$ as contradiction to positivity of $\lambda_5$.

by comparing $\lambda_5$ we have
\begin{align*}
    &H^2-(K_{\mathtt{LQR}}J)^2) = J(H-K_{\mathtt{LQR}}J)(K_{\mathtt{LQR}}-\zeta)\\
    & \to \frac{H+K_{\mathtt{LQR}}J}{J} = K_{\mathtt{LQR}}-\zeta\\
    & \to \zeta = \frac{K_{\mathtt{LQR}}J-(H+K_{\mathtt{LQR}}J)}{J}=-\frac{H}{J}
\end{align*}
Recall that $\zeta = \frac{\Gamma}{\hat{\Sigma}} $ then $\Gamma = \frac{-H \hat{\Sigma}}{J}$ and $\Pi = \frac{H^2}{J^2}\hat{\Sigma}$,
then evaluating $\Psi_Y = H^2 \hat{\Sigma} + J^2\Pi + 2 J H \Gamma +\Psi = \Psi $ and the capacity is zero. This completes the proof that the sufficient condition in Theorem 2 is satisfied. We proceed to show the regularity conditions for the scalar case.

\underline{Proof of detectability:}
Recall the lower bound definitions $\overline{\Gamma} = \Gamma^* \hat{\Sigma}^{*\dagger}$ and $M = \Pi^* - \Gamma^*\hat{\Sigma}^{*\dagger}\Gamma^{*T}$ where $\Gamma^*,\hat{\Sigma}^{*\dagger}$ are the optimal variables from the upper bound. We show using contradiction the detectability of $(F + G \overline{\Gamma}, H+J\overline{\Gamma})$, i.e., if $|F + G \overline{\Gamma}| \ge1$ then $H+J\overline{\Gamma} \neq 0$.
We showed above that $g_3 = 0$ and by substituting  $\Pi^*$ with  $\Gamma^*\hat{\Sigma}^{*\dagger}\Gamma^{*T}+M$ and define $\overline{\Gamma} = \Gamma^* \hat{\Sigma}^{*\dagger}$ we have
\begin{align}\label{eq:proof_detecta_Ricc}
    (F+G\overline{\Gamma} )^2\hat{\Sigma}-\hat{\Sigma} + G^2M + K_p^2\Psi  -\overline{K}_{Y}^2\overline{\Psi}_{Y} = 0,
\end{align}
where $\overline{K}_{Y} = ((F+G\overline{\Gamma})(H+J\overline{\Gamma})\hat{\Sigma} + GMJ +K_p\Psi)\overline{\Psi}_{Y}^{-1}$, $\overline{\Psi}_{Y} = (H+J\overline{\Gamma})^2\hat{\Sigma}+ J^2 M  +  \Psi$.
Assume by contradiction $F + G \overline{\Gamma} \ge 1, H+J\overline{\Gamma} = 0$ then the capacity as in \eqref{eq:objective} is zero unless $M>0$. But the equation in \eqref{eq:proof_detecta_Ricc} using $H+J\overline{\Gamma} = 0$ can be written as 
\begin{align*}
       \frac{(F+G\overline{\Gamma} )^2-1}{J^2M+\Psi} \hat{\Sigma} +\frac{(G-K_pJ)^2 M \Psi}{J^2M+\Psi} = 0. 
\end{align*}
Both terms are non-negative, so each of them must be zero. However, the second term is positive since $M>0$ and our assumption that $G \neq K_pJ$ leading to a contradiction. 

\underline{Convergence of the Riccati recursion:} Having established the detectability of the Riccati recursion parameters, there exists a maximal solution to the Riccati equation. We now need to ensure that the Riccati recursion of the lower bound as in \eqref{eq:LB_rec_hat_sigma}, for the scalar case, converges to $\hat \Sigma^*$ from the upper bound when the initial condition is $\hat \Sigma_1=0$. We divide the proof to two cases: if $M>0$, we show that the pair $(F^s, G^s W^{s})$ is stabilizable and by Theorem \ref{theorem:achievabilty} the upper is equal to the LQG system capacity. Indeed, in this case there is a unique solution to the Riccati equation and we utilize this fact to show that the limit from the lower bound coincides with solution $\hat \Sigma$ from the upper bound. For the other case, $M=0$, stabilizability does not hold, and therefore, we study the Riccati recursion directly. We show that choosing $M_1>0$ at the first time followed by the time-invariant policy from the upper bound (with $M=0$) guarantees that it converges to the maximal solution of the Riccati equation. 

We first show that, for $M>0$, $(F^s,G^sW^s)$ is stabilizable where
\begin{align*}
    &F^s = F+G\overline{\Gamma}-(GMJ+K_p\Psi)(JMJ+\Psi)^{-1}(H+J\overline{\Gamma})\\
    &W^s = \begin{pmatrix}
  M & 0\\ 
  0 & \Psi
\end{pmatrix} - 
\begin{pmatrix}
  MJ \\ 
  \Psi
\end{pmatrix} (JMJ+\Psi)^{-1}
\begin{pmatrix}
  JM &\Psi
\end{pmatrix} \nonumber\\
    &G^s=\begin{pmatrix} G &K_p \end{pmatrix}\nonumber.
\end{align*}

Direct computation provides with
\begin{align*}
    &G^sW^s=\begin{pmatrix} G &K_p \end{pmatrix}\frac{M\Psi}{MJ^2+\Psi}\begin{pmatrix}
   1 & - J   \\ 
   -J &  J^2 
\end{pmatrix} = \frac{M\Psi(G-K_pJ)}{MJ^2+\Psi}\begin{pmatrix} 1 &-J \end{pmatrix},
\end{align*}
so by our assumption $G \neq J K_p$ and the case study $M \neq 0$, the pair is stabilizable.

Before proving the other case $M=0$, we rewrite the Riccati equation in \eqref{eq:proof_detecta_Ricc} as
\begin{align*}
    (F^{s})^2\hat{\Sigma}-\hat{\Sigma}+(G^s)^2W^s-\frac{(F^s\hat{\Sigma}(H+J\overline{\Gamma}))^2}{(H+J\overline{\Gamma})^2\hat{\Sigma}+ G^2 M +\Psi}=0. 
\end{align*}
When $M=0$, it has two solutions $\hat \Sigma =0$ and $\hat \Sigma_M  = \frac{\Psi((F^{s})^2-1)}{(H+J\overline{\Gamma})^2}$, and we need to show that the recursion converges to the maximal solution $\hat \Sigma_M$.

The Riccati recursion when $M=0$ is given by
\begin{align}\label{eq:M_rec}
\hat{\Sigma}_{i+1}=(F^{s})^2\hat{\Sigma}_i\left(1-\frac{\hat{\Sigma}_i(H+J\overline{\Gamma})^2}{(H+J\overline{\Gamma})^2\hat{\Sigma}_i+ \Psi}\right)=\frac{(F^{s})^2\hat{\Sigma}_i\Psi}{(H+J\overline{\Gamma})^2\hat{\Sigma}_i+ \Psi},
\end{align}
and one can show directly that it is increasing for $0< \hat \Sigma_i <\hat \Sigma_M$, and it is decreasing for $\hat \Sigma_M<\hat \Sigma_i$. The proof is completed by noting that choosing $M_1>0$ implies 
$$\hat{\Sigma}_2 =(G^s)^2W^s = \frac{M_1 \Psi (G-K_pJ)^2}{M_1J^2+\Psi}>0.$$
\end{proof}

\section{Proof of Technical lemmas}\label{sec:proofs}

In this section, we provide proofs of Lemmas 1 - 4 consecutively.

\begin{proof}[Proof of lemma \ref{lemma:Gaussian_linear_policy}] 
A general policy of the optimization in \eqref{eq:n_letter_capacity} is characterized by the causal conditioning $P(\vx^n ||\vy^n)$, which is equivalently represented by $\{P(\vx_i | \vy^{i-1},\vx^{i-1})\}_{i=1}^n$. We show that for any policy of this form, one can construct a linear policy of the form 
$$\vx_i= \Gi \sd \Err{i} + \Gti \dhat{\vs}_i + \mathbf{m}_i, \ \ i=1, \dots ,n$$ that achieves at least the same objective while satisfying the LQR control constraint. Thus, linear policies of the form in \eqref{eq:linear_policy} suffice to achieve the optimum in \eqref{eq:n_letter_capacity}. 

First, we restrict the optimization to Gaussian policies. For any policy $P(\vx_i|\vx^{i-1},\vy^{i-1})$, we construct a feasible Gaussian policy $\mathbb{G}(\vx_i|\vx^{i-1},\vy^{i-1})$ that achieves at least the same objective. To construct the Gaussian policy, first compute the joint distribution of the original policy by the chain rule 
\begin{align}\label{eq:joint_dist}
    P(\vx^n, \vy^n, \hat{\vs}^{n+1}) &= P(\hat{\vs}_1) \prod_{i=1}^n P(\hat{\vs}_{i+1},\vy_i|\vx_i, \hat{\vs}_i) P(\vx_i|\vx^{i-1}, \vy^{i-1})
\end{align}
where $P(\hat{\vs}_1)$ is given by the setup. We define a jointly Gaussian measure $\mathbb{G}(\vx^n, \vy^n, \hat{\vs}^{n+1})$ that have the first and second moments of \eqref{eq:joint_dist}, and then compute the induced conditional distribution $\mathbb{G}(\vx_i|\vx^{i-1},\vy^{i-1})$.

Note that $\hat{\vs}^{i}$ is a function of  $(\vx^{i-1}, \vy^{i-1})$ and therefore is omitted from the conditioning in the distribution of $\vx_i$. In addition, the distributions $ P(\hat{\vs}_{i+1}|\vx_i, \hat{\vs}_i, \vy_i), P(\vy_i|\vx_i, \hat{\vs}_i)$ are given by the model in \eqref{eq:encoder_model} and do not depend on the optimization variables, and they are Gaussian, which makes the joint distribution jointly Gaussian if the policy terms $P(\vx_i|\vx^{i-1}, \vy^{i-1})$ are Gaussian.

In \cite{tanaka_allerton}, it is shown that replacing any input policy by this Gaussian policy, the joint distribution $\mathbb{G}(\vx^n, \vy^n, \hat{\vs}^{n+1})$ induced from this Gaussian policy, is jointly Gaussian and its first and second moments are equal to the first and second moments of the joint distribution $P(\vx^n, \vy^n, \hat{\vs}^{n+1})$. Therefore the control-cost constraint in \eqref{eq:intro_LQR_cost} holds under the new policy because it only depends on the covariances of the random sequences $\{\vx_i\}_{i=1}^n$, $\{\hat{\vs}_i\}_{i=1}^{n+1}$. For the directed information, we have  
\begin{align*}
    I_P(\vx^n \rightarrow \vy^n)&=h_P(\vy^n)-h_{\mathbb{G}}(\vy^n||\vx^n) \\ &\leq h_{\mathbb{G}}(\vy^n)-h_{\mathbb{G}}(\vy^n||\vx^n) = I_{\mathbb{G}}(\vx^n \rightarrow \vy^n)
\end{align*}
where the first equality 
is since $h_P(\vy^n||\vx^n)$ depends on the Gaussian dynamical system and does not depend on the policy, and the inequality follows from the fact that a Gaussian distribution maximizes the differential entropy when the first and second moments are fixed. 

For a Gaussian policy $\mathbb{G}(\vx^n ||\vy^n)$, we construct a new policy with a Gaussian measure  $\mathbb{Q}$ of the form 
\begin{align}\label{eq:linear_policy_d}
    \vx_i&= \Gi \sd \Err{i} + \Gti N_i^{\dagger} \dhat{\vs}_i + \mathbf{m}_i,
\end{align}
where $\Gi \triangleq \E_\mathbb{G} [\vx_i \Err{i}^T] $, $\Gti \triangleq \E_\mathbb{G} [\vx_i \dhat{\vs}_i^T]$, $\hat{\Sigma}_i =\mathbf{cov}_\mathbb{G}\Err{i}$, $N_i = \mathbf{cov}_\mathbb{G}(\dhat{\vs}_i)$ and $\mathbf{m}_i$ is independent of $(\vx^{i-1}, \vy^{i-1})$ and is distributed as $\mathbf{m}_i \sim \mathcal{N}(0, M_i)$ with
\begin{align}
    M_i \triangleq \E_\mathbb{G}[\vx_i \vx_i^T] - \E_\mathbb{G} [\vx_i \Err{i}^T] \sd \E_\mathbb{G} [\Err{i} \vx_i^T] -\E_\mathbb{G} [\vx_i \dhat{\vs}_i^T] N_i^{\dagger} \E_\mathbb{G} [\dhat{\vs}_i\vx_i^T].
\end{align} 

Recall that the capacity of a system is the maximum of the directed information $I(\vx^n \rightarrow \vy^n)$ under the control cost constraint $\mathcal{J}_n \leq p$, the main idea here is to show that the control cost and the directed information are the same under G and Q. We show that the directed information only depends on the covariance of $\psi_Y$.
\begin{align*}
    I(\vx^n \rightarrow \vy^n) &= \sumT h(\vy_i|\vy^{i-1}) - h(\vy_i|\vx^i,\vy^{i-1})\\
    &\stackrel{(a)}{=} \sumT h(\vy_i-\E[\vy_i|\vy^{i-1}]|\vy^{i-1}) - h(\vy_i-\E[\vy_i|\vx^i,\vy^{i-1}]|\vx^i,\vy^{i-1})\\&\stackrel{(b)}{=} \sumT h(\mathbf{\psi}_i) - h(\mathbf{e}_i)\\
    &\stackrel{(c)}{=} \frac{1}{2} \sumT  \left(\log \det \Psi_{Y,i} - \log \det \Psi\right),
\end{align*}
where $(a)$ follows from the fact that the estimates are deterministic functions of the their corresponding past measurements, $(b)$ follows from the fact that for an optimal estimator the error is orthogonal to the measurements, and $(c)$ follows from the definitions $\psi_i \triangleq \vy_i - \E[\vy_i|\vy^{i-1}] = \vy_i-(H + J \Gti)\dhat{\vs}_i\sim \mathcal{N}(0, \Psi_{Y,i})$ as defined under \eqref{eq:dhat_discussion} and $\mathbf{e}_i\sim \mathcal{N}(0, \Psi)$ as defined in \eqref{eq:primary_err}.

The control cost depends on $\E[\vx_i^TQ\vx_i]$ and $\E[\hat{\vs}_i^TQ\hat{\vs}_i]$, the term
$\E[\hat{\vs}_i^TQ\hat{\vs}_i]$ can be expressed as 
\begin{align*}
    \E[\hat{\vs}_i^TQ\hat{\vs}_i] &=  \E[\Err{i}^TQ\Err{i}]+ \E[\dhat{\vs}_i^T Q \dhat{\vs}_i]\\
    &=  \Tr(\mathbf{cov}\Err{i} Q)+ \Tr(\mathbf{cov}(\dhat{\vs}_i) Q)
\end{align*}
We show by induction that the joint Gaussian distribution of the tuple $\Phi_i \triangleq (\dhat{\vs}_i, \hat{\vs}_i - \dhat{\vs}_i, \vx_i, \psi_i)$ has the same covariance matrix under $\mathbb{G}$ and $\mathbb{Q}$. This implies that both measures achieve the same directed information and control constraint, so we can restrict the optimiztion to inputs of the form \eqref{eq:linear_policy_d}.

For the base case of the induction, $\Phi_1 = (0,\hat{\vs}_1,0, 0)$ has the same covariance matrix under $\mathbb{Q}$ as under $\mathbb{G}$ by the construction of $\vx_1$. For the induction step, assume that $\{\Phi_i\}_{i=1}^t$ has the same covariance matrix under $\mathbb{G}$ and $\mathbb{Q}$. 
First, the encoder estimator $\hat{\vs}_{t+1}$  depends on $\hat{\vs}^t$, and therefore the decoder estimator $\dhat{\vs}_{t+1} = \E [\hat{\vs}_{t+1}|\vy^t] = \E [\hat{\vs}_{t+1}|\mathbf{\psi}^t]$,
where the second equality holds by the innovation approach \cite[Chapter 4]{kailath_booklinear}, and by the induction hypothesis $\psi_t$ have the same distribution under $\mathbb{G}$ and $\mathbb{Q}$. The covariance between $\dhat{\vs}_{t+1}$ and $\Err{t+1}$ is $0$ by the orthogonality principle. For the system input $\vx_i$ it can be verified by \eqref{eq:linear_policy_d} that $\E_\mathbb{Q}[\vx_{t+1}\vx_{t+1}^T] =\E_\mathbb{G}[\vx_{t+1}\vx_{t+1}^T] $ by the construction of $\Gi, \Gti, \mathbf{m}_i$.
For the covariance between $\vx_{t+1}$ and $\Err{t+1}$ we use the orthogonality property of $\Gi$ which means $\Gi(I-\sd \hat{\Sigma}_i) = 0$ (proof below) to show
\begin{align*}
    \E_\mathbb{Q}[\vx_{t+1}\Err{t+1}^T] &= \E_\mathbb{Q}[\Gamma_{1,t+1}\hat{\Sigma}_{t+1}^{\dagger}\Err{t+1}\Err{t+1}^T] \\ &= \Gamma_{1,t+1}\hat{\Sigma}_{t+1}^{\dagger}\hat{\Sigma}_{t+1} \\ &= \Gamma_{1,t+1}\\& = \E_\mathbb{G}[\vx_{t+1}\Err{t+1}^T]
\end{align*}
The same claim holds for the covariance between $\vx_{t+1}$ and $\dhat{\vs}_{t+1}$ due to the orthogonality constraint~$\Gti (I- N_i^{\dagger}N_i) = 0$.

For the covariance of $\psi_{t+1} $ we show that it only depends on $(\vx_{t+1}, \Err{t+1})$ that we already saw that they have the same distribution under $\mathbb{Q}$ as well as under $\mathbb{G}$, by computing $\mathbf{cov}(\psi_{t+1}) $ by the model define in \eqref{eq:encoder_model} regardless to the policy.
\begin{align*}
    \mathbf{cov}(\vy_{t+1} -(H + J \Gamma_{2,t+1})\dhat{\vs}_{t+1}) &\stackrel{(a)}{=} 
    \mathbf{cov}(H \hat{\vs}_{t+1} +J\vx_{t+1} +\mathbf{e}_{t+1} - H \dhat{\vs}_{t+1})  \\
    &\stackrel{(b)}{=}  \mathbf{cov}(J\vx_{t+1} + H\Err{t+1}) +\Psi
\end{align*}
where (a) follows from our model definition in \eqref{eq:encoder_model},  and (b) follows from the independence of the innovation $\mathbf{e}_{t+1}$ and the tuple $(\vx^{t+1}, \{\Err{j}\}_{j=1}^{t+1})$. 

So, since the optimal policy could be as in  $\vx_i= \Gi \sd \Err{i} + \Gti N_i^{\dagger} \dhat{\vs}_i + \mathbf{m}_i$, we can restrict the set of such policies to policies of the form $\vx_i= \Gi \sd \Err{i} + \Gti \dhat{\vs}_i + \mathbf{m}_i$ as required.
\end{proof}

\begin{proof}[Proof of lemma \ref{lemma:computing_estimators}]
Substituting the policy from Lemma \ref{lemma:Gaussian_linear_policy} into the channel model in \eqref{eq:encoder_model}
provides the state space
\begin{align}\label{eq:proof2_statespace}
    \hat{\vs}_{i+1} &= (F+G\Gi \sd) \Err{i} + (F+G \Gti) \dhat{\vs}_i + G\mathbf{m}_i + K_p \mathbf{e}_i \nonumber\\
        \vy_i &= (H+J\Gi \sd) \Err{i} + (H+ J \Gti) \dhat{\vs}_i + J \mathbf{m}_i  + \mathbf{e}_i.
\end{align}
The innovation of the measurements (the system outputs) are 
\begin{align}\label{eq:psi}
    \psi_i&\triangleq \vy_i  - \E[\vy_i|\vy^{i-1}]\nonumber\\
    &= \vy_i-(H+J\Gti)\dhat{\vs}_i \nonumber\\
    &= (H+J\Gi \sd) \Err{i} + J \mathbf{m}_i  + \mathbf{e}_i,
\end{align}
and consequently we can compute its variance $\Psi_{Y,i}=\mathbf{cov}(\psi_i)$ as
\begin{align*}
    \Psi_{Y,i} &= (H+J\Gi \sd)\hat{\Sigma}_i (H+J\Gi \sd)^T + J M_i J^T +  \Psi.
\end{align*}


For the model in \eqref{eq:observer_model}, we compute the estimate $\dhat{\vs}_i$ recursively by utilizing the innovations approach that is optimal for linear processes with Gaussian disturbances \cite[Chapter 4]{kailath_booklinear} . The estimates can be computed, using the inner product  $<\vx,\vy>=\E[\vx\vy^T]$, as follows
\begin{align*}
    \dhat{\vs}_{i+1} & = \E[\hat{\vs}_{i+1}| \psi^i] \\
    &=\sum_{j=1}^i <\hat{\vs}_{i+1},\psi_j>\Psi_{Y,j}^{-1}\psi_j \\
    &\stackrel{(a)}=(F + G\Gti) \dhat{\vs}_i + <\hat{\vs}_{i+1},\psi_i>\Psi_{Y,i}^{-1}\psi_i \\
    &\stackrel{(b)}=(F + G\Gti) \dhat{\vs}_i + ((F+G\Gi\sd)\hat{\Sigma}_i(H+J\Gi\sd)^T + GM_iJ^T +K_p\Psi)\Psi_{Y,i}^{-1}\psi_i \\
    &\triangleq (F + G \Gti) \dhat{\vs}_i + K_{Y,i}\psi_i,
\end{align*}
where $(a)$ follows from by substituting $\hat{\vs}_{i+1}$ from \eqref{eq:proof2_statespace} and the fact that $<(F+G\Gi \sd) (\hat{\vs}_i  - \dhat{\vs}_i) + G\mathbf{m}_i + K_p \mathbf{e}_i ,\psi_j>=0$ for $j<i$. Step $(b)$ follows from by substituting $\hat{\vs}_{i+1}, \psi_i$ by the definitions in \eqref{eq:proof2_statespace} and \eqref{eq:psi}, and the fact that $\{\dhat{\vs}_i, \Err{i}, \mathbf{m}_i, \mathbf{e}_i\}$ are mutually independent, along with the notation $K_{Y,i}=((F+G\Gi\sd)\hat{\Sigma}_i(H+J\Gi\sd)^T + GM_iJ^T +K_p\Psi)\Psi_{Y,i}^{-1}$. We proceed to compute the recursion of error covariance 
\begin{align*}
\hat{\Sigma}_{i+1}&=\E[\Err{i+1}\Err{i+1}^T] \\
    &=\E[((F+G\Gi \sd)\Err{i}+G\mathbf{m}_i + K_p \mathbf{e}_i-K_{Y,i}\psi_i)((F+G\Gi \sd)\Err{i}+G\mathbf{m}_i + K_p \mathbf{e}_i-K_{Y,i}\psi_i)^T] \\
    &= (F+G\Gi \sd)\hat{\Sigma}_i(F+G\Gi \sd)^T + GM_iG^T + K_p\Psi K_p^T -K_{Y,i}\Psi_{Y,i}K_{Y,i}^T,
\end{align*}
where the last equality holds since the tuple $(\Err{i}, \mathbf{m}_i, \mathbf{e}_i)$ are mutually independent and $\psi_i =(H+J\Gi \sd)\Err{i} + J\mathbf{m}_i + \mathbf{e}_i$ is computed by substituting  \eqref{eq:proof2_statespace} into \eqref{eq:psi}. 

Finally, we compute the directed information 
\begin{align*}
    I(\vx^n \rightarrow \vy^n) &= \sumT h(\vy_i|\vy^{i-1}) - h(\vy_i|\vx^i,\vy^{i-1})\\
    &\stackrel{(a)}{=} \sumT h(\vy_i-\E[\vy_i|\vy^{i-1}]|\vy^{i-1}) - h(\vy_i-\E[\vy_i|\vx^i,\vy^{i-1}]|\vx^i,\vy^{i-1})\\&\stackrel{(b)}{=} \sumT h(\mathbf{\psi}_i) - h(\mathbf{e}_i)\\
    &\stackrel{(c)}{=} \frac{1}{2} \sumT  \left(\log \det \Psi_{Y,i} - \log \det \Psi\right),
\end{align*}
where $(a)$ follows from the fact that the estimates are deterministic functions of the their corresponding past measurements, $(b)$ follows from the fact that for an optimal estimator the error is independent of the measurements, and $(c)$ follows from the definitions $\psi_i\sim \mathcal{N}(0, \Psi_{Y,i})$ and $\mathbf{e}_i\sim \mathcal{N}(0, \Psi)$.
\end{proof}

\begin{proof}[Proof of lemma \ref{lemma:minimizig_Gamma_2}]
We show that the optimization of the control cost with respect to the sequence $\{\Gamma_{2,i}\}$ can be done using the LQR algorithm. Consequently, the control cost and the objective function only depend on the decision variables $\{M_i, \Gi\}_{i=1}^n$. 

Recall the control cost definition
\begin{align*}
    \mathcal{J}_n=\frac1{n}\left[ \E[\vs_{n+1}^T Q \vs_{n+1}] + \sum_{i=1}^n (\E[\vs_i^TQ\vs_i] + \E[\vx_i^TR\vx_i]) \right].
\end{align*}
We start by computing a single summand of the control cost for some $i$
\begin{align}\label{eq:proof_lemma2_1}
    \E[\vx_i^TR \vx_i] + \E[\vs_{i}^T Q \vs_i] 
    &\stackrel{(a)}= \E[\vx_i^TR \vx_i] + \E[(\vs_i-\hat{\vs}_i)^T Q (\vs_i-\hat{\vs}_i)] +\E [\Err{i}^T Q \Err{i}] + \E[\dhat{\vs}_i^T Q \dhat{\vs}_i] \nonumber \\
    &=\E[\vx_i^TR \vx_i] + \E[\dhat{\vs}_i^T Q \dhat{\vs}_i] + \Tr (\Sigma Q) + \Tr (\hat{\Sigma}_i Q)\nonumber\\
    &\stackrel{(b)}= \Tr(M_i R) + \Tr(\Gamma_{1,i}\hat{\Sigma}^\dagger_i\Gamma_{1,i}^T R) + \E[\mathbf{u}_i^TR \mathbf{u}_i] + \E[\dhat{\vs}_{i}^T Q \dhat{\vs}_{i}] + \Tr (\Sigma Q) + \Tr (\hat{\Sigma}_i Q)
\end{align}
where $(a)$ follows from the fact that $\vs_{i} = (\vs_i-\hat{\vs}_i) + \hat{\vs}_i = (\vs_i-\hat{\vs}_i) + \Err{i} + \dhat{\vs}_i $ and the orthogonality of $(\vs_i-\hat{\vs}_i) , \hat{\vs}_i$, since $(\vs_i-\hat{\vs}_i)$ is the error of the estimator $\hat{\vs}_i$, and the orthogonality of $\Err{i} , \dhat{\vs}_i$. Step $(b)$ follows from the structure of system inputs and $\mathbf{u}_i \triangleq \Gamma_{2,i}\dhat{\vs}_i$.

By \eqref{eq:proof_lemma2_1}, we can write the cost as follows:
\begin{align}
    \mathcal{J}_n&=\frac1{n}\left[ \E[\dhat{\vs}_{n+1}^T Q \dhat{\vs}_{n+1}] + \sumT (\E[\dhat{\vs}_i^TQ \dhat{\vs}_i] + \E[\mathbf{u}_i^TR\mathbf{u}_i]) \right] \nonumber\\
    & \label{eq:lqr_rest}+\frac1{n}\sum_{i=1}^n \left[\Tr( (M_i +\Gamma_{1,i}\hat{\Sigma}^\dagger_i\Gamma_{1,i}^T) R)  \right] + \frac1{n}\sum_{i=1}^{n+1}\left[ \Tr (\Sigma Q) + \Tr (\hat{\Sigma}_i Q)\right].
\end{align}
Note that the second and third sums do not depend on $\{\Gamma_{2,i}\}$. The first sum appears as a standard LQR cost along with the dynamical system equation in \eqref{eq:dhat_discussion}
\begin{align*}
    \dhat{\vs}_{i+1} &= F \dhat{\vs}_i +G\Gamma_{2,i}\dhat{\vs}_i + K_{Y,i}\psi_i.
\end{align*}

By the LQR algorithm the optimum of 
\begin{align}\label{eq:LQR_minimum}
    \min_{\{\Gamma_{2,i}\}_{i=1}^n}\frac1{n}\left[ \E[\dhat{\vs}_{n+1}^T Q \dhat{\vs}_{n+1}] + \sum_{i=1}^n (\E[\dhat{\vs}_i^TQ\dhat{\vs}_i] + \E[\mathbf{u}_i^TR\mathbf{u}_i]) \right]
\end{align}
is attained at $$\Gamma_{2,i} = - \KLQRi, \ \ \ i=1,\dots,n$$ where $\KLQRi =\Psi_{\mathtt{LQR},i}^{-1}G^T E_{i+1} F $ and $\Psi_{\mathtt{LQR},i} =R+G^T E_{i+1} G $
and $E_i$ is defined recursively as 
\begin{align}
    E_i = F^TE_{i+1}F+Q - \KLQRi^T \Psi_{\mathtt{LQR},i}  \KLQRi 
\end{align}
with the terminal condition $E_{n+1} = Q$. The LQR algorithm also provides us with the optimal cost of the  \eqref{eq:LQR_minimum}, and it is equal to $\frac1{n} \Tr(\mathbf{cov}(\dhat{\vs}_1) E_1) + \frac{1}{n}\sumT  \Tr(K_{Y,i} \Psi_{Y,i}K_{Y,i}^TE_{i+1})$.


To summarize, the control cost after optimizing $\{
\Gamma_{2,i}\}_{i=1}^n$ is 
\begin{align}
    &\frac1{n} \Tr(\mathbf{cov}(\dhat{\vs}_1) E_1) + \frac{1}{n}\sumT  \Tr(K_{Y,i} \Psi_{Y,i}K_{Y,i}^TE_{i+1})+\frac1{n}\sum_{i=1}^n \left[\Tr((M_i +\Gamma_{1,i}\hat{\Sigma}^\dagger_i\Gamma_{1,i}^T ) R) \right] \nonumber\\
    &\ \ + \frac1{n}\sum_{i=1}^{n+1}\left[ \Tr (\Sigma Q) + \Tr (\hat{\Sigma}_i Q)\right],
\end{align}
and it is a function of $\{\Gamma_{1,i},M_i\}_{i=1}^n$.

 \end{proof}

\begin{proof}[Proof of lemma \ref{lemma:T-letter_SDP}]
The control cost is given by
\begin{align}
    \mathcal J_n& = \frac1{n} \Tr(\mathbf{cov}(\dhat{\vs}_1) E_1) + \frac{1}{n}\sumT  \Tr(K_{Y,i} \Psi_{Y,i}K_{Y,i}^TE_{i+1}) + \frac1{n}\sum_{i=1}^n \left[\Tr((M_i +\Gamma_{1,i}\hat{\Sigma}^\dagger_i\Gamma_{1,i}^T ) R) \right] \nonumber\\
    &\ \ + \frac1{n}\sum_{i=1}^{n+1}\left[ \Tr (\Sigma Q) + \Tr (\hat{\Sigma}_i Q)\right].
\end{align}
Our objective is to write the control cost as a linear function of the decision variables. An evident challenge is the sum  $\frac{1}{n}\sumT  \Tr(K_{Y,i} \Psi_{Y,i}K_{Y,i}^TE_{i+1})$ since $K_{Y,i}$ is a rational function of the decision variables. We rewrite the sum using the Riccati recursion in \eqref{eq:cov_decoder_err} as follows:
\begin{align}\label{eq:KalmanGain_Linear}
    &\sumT  \Tr(K_{Y,i} \Psi_{Y,i}K_{Y,i}^TE_{i+1})\} \nonumber\\
    &= \sumT \Tr(( - \hat{\Sigma}_{i+1} + (F+G\Gi \sd)\hat{\Sigma}_i(F+G\Gi \sd)^T + GM_iG^T + K_p\Psi K_p^T)E_{i+1}) \nonumber\\
    &= \sumT \Tr(( F\hat{\Sigma}_i F^T  - \hat{\Sigma}_{i+1})E_{i+1}) + \Tr( (M_i + \Gi \sd \Gi^T) G^TE_{i+1}G) + \Tr(K_p\Psi K_p^TE_{i+1}) \nonumber\\
    &\ \ +2\Tr(G \Gi F^TE_{i+1})  \nonumber\\
    &\stackrel{(a)}= \sumT \left[\Tr(\hat{\Sigma}_i (-Q + \KLQRi^T(R+G^TE_{i+1}G)\KLQRi))  + \Tr( (M_i + \Gi \sd \Gi^T) G^TE_{i+1}G) \right.\nonumber\\
    &\ \ + \left.\Tr(K_p\Psi K_p^TE_{i+1})  +2\Tr(G \Gi F^TE_{i+1})\right] + \Tr(\hat{\Sigma}_1E_1 - \hat{\Sigma}_{n+1}E_{n+1}) ,
\end{align}
where in $(a)$ we use the control Riccati recursion \eqref{eq:control_recursion}.

Substituting \eqref{eq:KalmanGain_Linear} in the control cost, we obtain
\begin{align}\label{eq:cost_almost_linear}    \mathcal J_n& = \frac1{n}\sum_{i=1}^n \left[ \Tr( \hat{\Sigma}_i\KLQRi^T \Psi_{\mathtt{LQR},i}\KLQRi) + \Tr((M_i +\Gamma_{1,i}\hat{\Sigma}^\dagger_i\Gamma_{1,i}^T ) \Psi_{\mathtt{LQR},i}) +\Tr(K_p\Psi K_p^TE_{i+1}) \right]\nonumber\\
    &\ \ + 2\sumT\Tr(G \Gi F^TE_{i+1}) + \Tr (\Sigma Q) + \mathcal{E}_n,
\end{align}
with $\mathcal{E}_n\triangleq\frac1{n}\left(\Tr (( \Sigma + \hat{\Sigma}_{n+1}) Q) + \Tr(\mathbf{cov}(\dhat{\vs}_1) E_1) + \Tr(\hat{\Sigma}_1E_1 - \hat{\Sigma}_{n+1}E_{n+1})\right)$.

We proceed to formulate the optimization problem as a SCOP where all the constraints are represented by LMIs and traces of a linear combination of the decision variables.
Notice that even though the Trace in \eqref{eq:cost_almost_linear} is not linear in the variables $(\Gi, \hat{\Sigma}_i)$, we can define
\begin{align}\label{eq:Pi}
    \Pi_i \triangleq \Gi \sd \Gi^T + M_i
\end{align}
to rewrite the cost as a linear function of the variables  $\Pi_i, \hat{\Sigma}_i, \Gi$ as
\begin{align*}
     \mathcal{J}_n  
 &= \frac1{n}\sum_{i=1}^n \left[ \Tr( \hat{\Sigma}_i\KLQRi^T \Psi_{\mathtt{LQR},i}\KLQRi) + \Tr(\Pi_i  \Psi_{\mathtt{LQR},i}) +\Tr(K_p\Psi K_p^TE_{i+1}) \right]\nonumber\\
    &\ \ + 2\frac1{n}\sumT \Tr(\Gi \KLQRi^T \Psi_{\mathtt{LQR},i}) + \Tr (\Sigma Q) + \mathcal{E}_n,
\end{align*}
where we also used the identity $\KLQRi = \Psi_{\mathtt{LQR},i}^{-1} G^TE_{i+1}F$.

Similarly, the objective in \eqref{eq:objective} is expressed as the log-determinant of $\Psi_{Y,i}$, which is not a linear function of the decision variables. With \eqref{eq:Pi}, we can write $\Psi_{Y,i}$ as
    \begin{align*}
        \Psi_{Y,i}&= (H+J\Gi \sd)\hat{\Sigma}_i (H+J\Gi \sd)^T + JM_iJ^T + \Psi \\
        &= H \hat{\Sigma}_i H^T + J \Gi \sd \hat{\Sigma}_i H^T + H \hat{\Sigma}_i  \sd \Gamma_{1,i}^T J^T + J \Gi \hat{\Sigma}_i \sd \Gi^T J^T + JM_iJ^T + \Psi \\ 
        &= J\Pi_i J^T +H\hat{\Sigma}_i H^T+H \Gi^TJ^T + J\Gamma_i H^T+ \Psi,
    \end{align*}
where the last equality follows from the orthogonality constraint in \eqref{eq:orthogonal} and \eqref{eq:Pi}. Note that $\Psi_{Y,i}$ is now a linear function of $\Pi_i, \hat{\Sigma}_i, \Gi$.

In a similar way we can write $K_{Y,i}$ as a linear function of  $\Pi_i, \hat{\Sigma}_i, \Gi$
\begin{align*}
    K_{Y,i}&= ((F+G\Gi\sd)\hat{\Sigma}_i(H+J\Gi\sd)^T + GM_iJ^T +K_p\Psi)\Psi_{Y,i}^{-1} \\
    &=(F\Gi J^T + F \hat{\Sigma}_i H^T + G\Pi_i J^T + G \Gi H^T +K_{p}\Psi) \Psi_{Y,i}^{-1}.
\end{align*}
    
    The relationship between $\Pi_i$ and the other variables in \eqref{eq:Pi} can be relaxed to an LMI constraint. In particular, we can eliminate the variable $M_i \succeq 0$ to obtain $\Pi_i - \Gi \sd \Gi ^T \succeq 0$. Using the Schur complement, we obtain the LMI
    \begin{align*}
         \begin{pmatrix}
        \Pi_i & \Gi\\  \Gi^T & \hat{\Sigma}_i
    \end{pmatrix} \succeq 0,
    \end{align*}
    which is equivalent to $\hat{\Sigma}_i \succeq 0 , \Pi_i - \Gi \sd \Gi ^T \succeq 0$ and the orthogonality constraint.
    


    Finally, the Riccati equation in \eqref{eq:cov_decoder_err} is relaxed to a Riccati inequality 
    \begin{align*}
        \hat{\Sigma}_{i+1} \preceq F\hat{\Sigma}_i F^T+F\Gi G^T+ G\Gi F^T + G\Pi_i G^T + K_p\Psi K_p^T -K_{Y,i}\Psi_{Y,i}K_{Y,i}^T.
    \end{align*}
    Using the Schur complement transformation, we can write it as
    \begin{align*}
        \begin{pmatrix}
     F\hat{\Sigma}_i F^T+F\Gi G^T+ G\Gi F^T + G\Pi_i G^T+  K_{p}\Psi K_{p}^T - \hat{\Sigma}_{i+1}& K_{Y,i}\Psi_{Y,i}\\ 
     \Psi_{Y,i} K_{Y,i}^T & \Psi_{Y,i}
    \end{pmatrix} \succeq 0.
    \end{align*}

     
\end{proof}

\section{Conclusion}\label{sec:conclusion}
We studied communication via LQG control systems and derived a finite-dimensional convex converse (upper bound) on capacity. The bound recovers classical limits and is exact for scalar plants; for vector plants we provided a Riccati-based sufficient condition, and numerical experiments indicate tightness across all tested instances. We also extracted policy parameters $(\Gamma,M)$ from the upper bound to yield matching lower bounds (empirically). In the scalar case when $M=0$, there exists a concrete coding scheme that achieves the capacity. Naturally, we aim in future work to establish the tightness of the upper bound for vector systems, and to construct explicit coding scheme when $M\neq0$.

\appendices
\section{Regularity conditions}
\label{app:regularity_conds}
In this appendix, we provide some preliminaries on dynamical systems. We start with basic definitions.
\begin{definition}
    The pair $(F, H)$ is detectable if there exists a matrix $K$ such that $\rho (F-KH) < 1$.
\end{definition}
\begin{definition}
    The pair $(F, W)$ is controllable on the unit circle, if for any $\vx$ and $\lambda$ such that $\vx^TF=\vx^T\lambda$ if $|\lambda| = 1$, then $\vx W \neq 0$.
\end{definition}
\begin{definition}
    The pair $(F, W)$ is stabilizable, if for any $\vx$ and $\lambda$ such that $\vx^TF=\vx^T\lambda$ if $|\lambda| \ge1$, then $\vx W \neq 0$.
\end{definition}

\subsection{The stabilizing solution for $\Sigma$}\label{app:sub_sigma}
We discuss the convergence of the recursion in \eqref{eq:cov_encoder_err} to the stabilizing solution of its corresponding Riccati equation (see \eqref{eq:riccati_sigma})
\begin{align*}
    \Sigma = F\Sigma F^T +W - K_p\Psi K_p^T,
\end{align*}
where $K_p = (F\Sigma H^T +L)\Psi^{-1}$ and $\Psi = H\Sigma H^T + V$.
If the pair $(F, H)$ is detectable and $(F^s, GW^s)$ is controllable on the unit circle where $F^s:=F-LV^{-1}H$ and $W^s:=W-LV^{-1}L^T$, there exists a unique stabilizing solution to the Riccati equation,  i.e. $\Sigma_F$ is stabilizing solution if it solves the Riccati equation and  $F-K_p\Sigma_s H$ is stable. We proceed to discuss the convergence of the recursion to the stabilizing solution. If the pair $(F^s, GW^s)$ is stabilizable, then the stabilizing solution is the unique PSD solution to the Riccati equation \cite[app E]{kailath_booklinear}. This property guarantees that for all initial points $\Sigma_1$, the recursion converges to the stabilizing solution. Alternatively, we can assume that the initial point satisfies $\Sigma_1 \succeq \Sigma_s$ where $\Sigma_s$ is the stabilizing solution \cite{convergence_detectable_initial}. We remark that we cover here only two methods to guarantee convergence to stabilizing solution. 


\subsection{LQG control}\label{app:sec_lqg}
For the recursion in \eqref{eq:control_recursion} to converge to the stabilizing solution of its Riccati equation as in \eqref{eq:lqr_Riccati_equation} 
\begin{align*}
    E = F^TE F + Q - K_{\mathtt{LQR}}^T \Psi_{\mathtt{LQR}} K_{\mathtt{LQR}},
\end{align*}
where $K_{\mathtt{LQR}} = \Psi_{\mathtt{LQR}}^{-1}G^TEF$, $\Psi_{\mathtt{LQR}} = R+G^TEG$, we assume the standard regularity conditions, stabilizability of $(F,G)$, and the stabilizability of $(F^T, Q)$. Therefore, $Q\succ0$ is sufficient for the stabilizability of $(F^T, Q)$.

\subsection{Stabilizing solution for $\hat \Sigma$ - Theorem \ref{theorem:achievabilty}}\label{app:sub_hat_sigma}
Following the discussion in Section \ref{app:sub_sigma}, there are several sets of assumptions to guarantee the convergence of $\hat\Sigma_i$ to the stabilizing solution of the Riccati equation 
\begin{align*}
    \hat{\Sigma}= (F+G \overline{\Gamma} )\hat{\Sigma}(F+G \overline{\Gamma})^T + GMG^T + K_p\Psi K_p^T -\overline{K}_{Y}\overline{\Psi}_{Y}\overline{K}_{Y}^T
\end{align*}
where $\overline{K}_Y=  ((F+G \overline{\Gamma} )\hat{\Sigma}(H+J\overline{\Gamma})^T + GMJ^T +K_p\Psi)\overline{\Psi}_Y^{-1}$ and $\overline{\Psi}_Y = (H+J\overline{\Gamma})\hat{\Sigma} (H+J\overline{\Gamma} )^T + J M J^T +  \Psi$. 

In Theorem \ref{theorem:achievabilty}, we assumed the stronger assumptions that $(F + G \overline{\Gamma}, H+J\overline{\Gamma})$ is detectable, and the pair $(F^s, G^s W^{s})$ is stabilizable, where 
\begin{align}
    F^s &= F+G\overline{\Gamma} -(GMJ^T+K_p\Psi)(JMJ^T + \Psi)^{-1}(H+J\overline{\Gamma} )^T\\
    W^s &= \begin{pmatrix}
  M & 0\\ 
  0 & \Psi
\end{pmatrix} - 
\begin{pmatrix}
  MJ^T \\ 
  \Psi
\end{pmatrix} (JMJ^T+\Psi)^{-1}
\begin{pmatrix}
  JM &\Psi
\end{pmatrix} \nonumber\\
    G^s&=\begin{pmatrix} G &K_p \end{pmatrix}.\nonumber
\end{align}
Under these assumptions, there exists a unique PSD solution to the Riccati equation, and if the sufficient condition in \eqref{eq:main_ricc_eq_sufficient} holds then the lower bound and the upper bound coincide. We remark that these regularity conditions are not necessary - for instance, in the proof of Theorem \ref{theorem:scalar} on the scalar case, there is a scenario where stabilizability does not hold and we show the convergence directly. 


\bibliographystyle{unsrt}
\bibliography{ref}
\end{document}